\documentclass[preprint, 11pt,a4paper,english]{elsarticle}

\usepackage[scaled=.98]{newtxtext} 
\usepackage{fancyhdr}
\usepackage{natbib}
\setcitestyle{authoryear,round}
\usepackage{dsfont}
\usepackage{mathrsfs}
\usepackage{bm}
\usepackage{hhline}
\usepackage{subcaption}
\usepackage{titlesec}
\usepackage{eurosym}
\usepackage{enumitem}
\usepackage{todonotes}
\usepackage{lipsum}  
\usepackage{graphicx}
\usepackage{comment}
\usepackage[toc, page]{appendix}

\newcommand{\R}{\mathbb{R}}

\newcommand{\p}{\mathbb{P}} 

\newcommand{\q}{\mathbb{Q}}
\newcommand{\E}{\mathbb{E}}

\newcommand{\bunderline}[1]{\underline{#1\mkern-4mu}\mkern4mu }

\makeatletter

\usepackage{etoolbox}
\DeclareCaptionSubType*{figure} 
\makeatletter
\patchcmd{\chapter}{\if@openright\cleardoublepage\else\clearpage\fi}{}{}{}
\def\ps@pprintTitle{%
	\let\@oddhead\@empty
	\let\@evenhead\@empty
	\def\@oddfoot{}%
	\let\@evenfoot\@oddfoot}
\makeatother

\usepackage{float}
\usepackage{algorithm}
\usepackage{algorithmic}
\usepackage{geometry}
\usepackage{amsmath,amsthm,amssymb, array,nccmath}
\newtheorem{Theorem}{Theorem}[section]

\newtheorem{Assumption}[Theorem]{Assumption}
\newtheorem{Proposition}[Theorem]{Proposition}

\theoremstyle{definition}

\theoremstyle{remark}
\newtheorem{Remark}[Theorem]{Remark}

\numberwithin{equation}{section}
\numberwithin{figure}{section}
\numberwithin{table}{section}

\titlelabel{\thetitle.\quad}
\makeatother

\usepackage{babel}
\makeatletter
\addto\extrasfrench{%
	\providecommand{\fg}{\ifdim\lastskip>\z@\unskip\fi~\frqq}%
}

\DeclareMathOperator*{\argmax}{arg\,max}

\DeclareMathOperator*{\argmaxmin}{arg\,max  / min}
\titleformat{\subsection}
{\normalfont\normalsize\itshape\bfseries}{\thesubsection.}{1em}{}
\makeatother
\usepackage{hyperref}
\usepackage[hyperref]{xcolor}

\hypersetup{
	colorlinks=true,
}

\begin{document}
	
	\setcounter{page}{0}
	\begin{frontmatter}
		
		
		
		\title{General Bounds on Functionals of the Lifetime under Life Table Constraints in a Joint Actuarial-Financial Framework}
		

		\author[inst1]{Jean-Loup Dupret\corref{cor1}}
          \author[inst2]{Edouard Motte}
		\affiliation[inst1]{organization={Department of Mathematics and RiskLab, ETH Zurich},
			addressline={R\"amistrasse 101},
			city={Zurich},
			postcode={8092},
			country={Switzerland}}
            \affiliation[inst2]{organization={LIDAM-ISBA, Université Catholique de Louvain},
			city={Louvain-la-Neuve},
			postcode={1348},
			country={Belgium}}
		\cortext[cor1]{Corresponding author, Email address: \url{jeanloup.dupret@math.ethz.ch}}
		

	\begin{abstract}
{\small In life insurance, life tables are used to estimate the survival distribution of individuals from a given population. However, these tables only provide survival probabilities at integer ages but no information about the distribution of deaths between two consecutive integer values. This incompleteness is particularly relevant for modern insurance products such as variable annuities, whose payoffs depend jointly on lifetime uncertainty  and financial market performance. The valuation of such contracts must therefore be carried out in a joint actuarial-financial framework, as their values depend not only on the full information about mortality rates but also on the interaction between mortality risk, asset dynamics, and embedded guarantees.
One frequent solution to this incompleteness is to postulate fractional age	assumptions or  mortality rate models, but it turns out that the results of the computations strongly depend on these assumptions, which makes it difficult to generalize them. We hence derive upper and lower bounds of hybrid functionals of the lifetime with respect to mortality rates, which are compatible with the observed life table at integer ages and the given financial market. We derive two sets of results under distinct assumptions. In the first, we assume that each mortality trajectory is almost surely consistent with all the  given one-year survival probabilities from the table. 
In the second, we consider a relaxed formulation that allows for deviations of the mortality rates while still being consistent in expectation with the given one-year reference survival probabilities. These distinct yet complementary approaches provide a new robust joint actuarial-financial framework for managing mortality risk in life insurance. They characterize the worst- and best-case contract values over all mortality processes that remain compatible with the observed life-table information and the financial market.}
\end{abstract}
		\begin{keyword}
			{\small Mortality modeling \sep Stochastic optimal control \sep Variable annuities \sep Life insurance \sep Risk management} 
			
			
		\end{keyword}
		
	\end{frontmatter}
	
	\section{Introduction}
	\hypersetup{
		linkcolor=cyan,
		urlcolor=red,
		citecolor=black,
	}
The two most common approaches to mortality modeling in life insurance rely either on \textbf{fractional age assumptions} or \textbf{mortality rate models}. Each provides a way to derive survival probabilities over continuous time, yet both come with inherent limitations.
\\ \\
\textbf{(i) Fractional age assumptions}. In practice, life insurers and pension funds often rely on life tables, which provide the survival probabilities of individuals at integer ages. However, many actuarial quantities (life annuities, variable annuities, or other functionals of the lifetime) depend on the full distribution of survival times, which requires information about the distribution of deaths between integer ages. To address this, actuaries introduce fractional age assumptions that interpolate survival probabilities between successive integer ages, see Chapter 3.6 of \cite{bowers1997actuarial} and Chapter 3 of \cite{dickson2020actuarial}. The most common assumptions are the \textit{uniform distribution of deaths (UDD)} assuming that deaths occur uniformly over each one-year interval; \textit{the constant force of mortality (CFM)} with an exponential decay of the survival function within each interval;  and \textit{the Balducci assumption} under which the probability of death decreases linearly with time over each year. Each of these assumptions produces a different continuous-time survival distribution consistent with the same life table at integer ages. However, since life tables provide no information about the actual distribution of deaths between integer ages, the resulting actuarial quantities depend heavily on the chosen assumption, which introduces a structural source of model risk that is rarely quantified by insurers.
\\
\\
\textbf{(ii) Mortality rate models}. An alternative approach to fractional age assumptions is to model mortality continuously in time through a mortality rate, or force of mortality, $\mu_x(t)$ at initial age $x$
and to derive the corresponding survival function $S_x(t) = \exp\left(-\int_0^t\mu_{x}(s)ds\right)$. Deterministic mortality laws, such as the Gompertz–Makeham \citep{gompertz1825nature,makeham1860law} or Heligman–Pollard models \citep{heligman1980age}, model $\mu_x(t)$ as a smooth function of time $t$ and provide analytical tractability for pricing and reserving purposes. More recent developments have introduced stochastic mortality models to capture systematic longevity trends and uncertainty in future mortality evolution. Notable examples include the Lee–Carter models \citep{lee1992modeling, booth2002applying,renshaw2003lee}, the Cairns–Blake–Dowd (CBD) model \citep{cairns2006two, cairns2009quantitative}, and cohort-based extensions such as the Renshaw–Haberman model \citep{renshaw2006cohort, currie2006smoothing} or  multi-population approaches \citep{li2005coherent}, see also \cite{plat2009stochastic} and \cite{basellini2023thirty} for a thorough review on stochastic mortality modeling. These models specify continuous-time dynamics for mortality rates, enabling  longevity risk management in a probabilistic framework. While both deterministic and stochastic mortality rate models define survival probabilities continuously for all $t\geq0$, they are inherently model-based: their dynamics rely on parametric and structural assumptions about the underlying mortality process. Consequently, computed actuarial functionals of the lifetime depend strongly on these modeling choices, and different specifications can lead to substantially different valuations of life-contingent liabilities.
\\
\\
Therefore, to avoid these two sources of model incompleteness, authors have considered fractional-age interpolation and safe-side reserving — including the sum-at-risk tradition of \cite{lidstone1905changes}, \cite{norberg1985lidstone}, \cite{linnemann1993application} and \cite{kalashnikov2003sensitivity}; the fractional-age comparisons  via stochastic ordering of \cite{hurlimann1990life}, \cite{willmot1997statistical}, \cite{jones2000family}, \cite{frostig2002comparison}, \cite{frostig2003properties}, \cite{barz2012comparison}; and the the worst-case techniques of \cite{christiansen2010biometric, christiansen2010first, christiansen2013worst}. However, these approaches remain fundamentally actuarial in both object and method: it seeks to compare admissible lifetime distributions, construct conservative first-order mortality bases, or derive safe-side / worst-case actuarial reserves from one-year life-table information, typically through stochastic ordering, shape restrictions on mortality rates, Thiele-type reserve equations, and the sign of the sum at risk. Barz and Müller (2012) derive tight bounds for functionals of future lifetimes under an increasing force of mortality, while Christiansen and Denuit (2010) construct conservative first-order bases explicitly for safe-side actuarial calculations; Christiansen and Denuit (2013) extend this logic to single- and multiple-decrement settings and mixed survival/death benefits. What these papers do not provide is a joint financial-market framework: there are no traded assets, no self-financing investment strategy, no hedging argument, and no risk-neutral pricing measure. By contrast, the present paper keeps the same life-table incompleteness problem but shifts it into a market-consistent robust valuation setting, where mortality ambiguity is analyzed jointly with financial dynamics and contract pricing rather than only through actuarial reserve calculations. This, in turn, leads to the application of stochastic control techniques over mortality processes, which, to the best of our knowledge, is new to the literature and allows for a much broader admissible mortality class than the shape-restricted intensities of \cite{barz2012comparison, christiansen2010first, christiansen2013worst} and related work.
\\
\\
We therefore propose a framework that avoids both sources of model misspecification, namely, the arbitrary choice of fractional age assumption and the structural assumptions of mortality rate models, while being robust with modern financial valuation. We derive upper and lower bounds for actuarial functionals of the lifetime that are compatible with the observed life table at integer ages. Our approach characterizes all possible mortality rates consistent with these discrete data points, and determines the corresponding optimal values of hybrid actuarial-financial quantities such as variable annuities. We present two complementary formulations: a strict setting in which each mortality path must match almost surely all the one-year survival probabilities given by the reference life table, and a relaxed (more realistic) setting in which this constraint only holds in expectation, allowing for mortality trajectories to deviate  while still being consistent on average with each one-year tabulated survival probabilities. More precisely, for a fixed maturity $T>0$ and a risk-neutral measure $\q$, we aim at solving stochastic control problems of the form
$$\sup_{\mu \in \mathcal{A}}\mathbb{E}^\q\left[F(T,\mu, A, r) \right]
\quad \ \text{and} \quad  \ \inf_{\mu \in \mathcal{A}}\mathbb{E}^\q\left[F(T,\mu, A, r) \right],$$
for some functional $F$ of the lifetime\footnote{Strictly speaking, functional of the mortality rates, but we interchangeably use these terms when it is clear from the context.} $\mu = (\mu_t)_{0\leq t \leq T}$, investment fund $A = (A_t)_{0\leq t \leq T}$, and interest rates $r =(r_t)_{0\leq t\leq T}$. The strict setting of Section \ref{Sec: bounds strict} will define the set of admissible controls $\mathcal{A}$ such as to match almost surely the given  survival probabilities over each integer year from the life table. As this strict setting effectively leads to deterministic optimal mortality rates, we then consider in Section \ref{Sec: bounds_relaxed} a relaxed setting where the admissible set $\mathcal{A}$ is such that this constraint only holds in expectation. However, the resulting control problems then tend to be ill-posed, and similar to \cite{avellaneda1996managing, li2011uncertain}, we need to further restrict this set $\mathcal{A}$ to bounded controls from below and above by time-dependent functions consistent with the life table. We establish asymptotic results showing that, as the set of bounded controls expands, the solutions to the restricted problems converge to the solution of the unrestricted problems.
\\
\\
These two distinct yet complementary approaches let us derive general bounds for hybrid functionals of the lifetime that are compatible with observed life tables and with modern financial valuation, enabling insurers to quantify the worst- and best-case contract prices resulting from deviations of the observed mortality rates from  their mortality assumptions/models.  This framework is therefore not a pricing engine, but instead provides a novel robust and model-free risk management tool to quantify the sensitivity of financial life-contingent valuations to within-year mortality uncertainty. We highlight that \textit{'model-free'} refers here to the mortality component: no explicit mortality model is ever imposed. A financial model is still required for valuation, see Section \ref{sec: fin}, but the bounds derived in Section \ref{Sec: bounds strict} (strict setting) and \ref{Sec: bounds_relaxed} (relaxed setting) directly generalize to more complex financial dynamics. Numerical experiments in Section \ref{sec: numerics} further illustrate and support these findings. 

	\section{Mortality modeling}
Let $(\Omega, \mathcal{G}, \mathbb{P})$ a probability space on which is defined the random variable  $T_0 : \Omega \to \R^+$  describing the future lifetime of a new born child. Let $x>0$, then $T_x := \left(T_0 - x \, \mid \, T_0 > x \right) : \Omega \to \R_+$ denotes the random variable describing the future lifetime of a person aged $x$. We consider in this paper a general framework where the lifetime is a mixed random variable with both discrete and continuous components. The cumulative distribution function (CDF) and survival function of $T_x$ are given by
$${}_tq_x := \mathbb{P}[T_x \leq t ] = F_x(t) \quad \text{and} \quad {}_tp_x := \mathbb{P}[T_x > t ] = S_x(t) \,.$$
By definition,  $F_x(t)$ is non-decreasing, right-continuous, and satisfies  $F_x(0) =0$ with $\lim_{t\to \infty}F_x(t) = 1$. Similarly, $S_x(t)$ is non-increasing, right-continuous, and satisfies  $S_x(0) =1$ with $\lim_{t\to \infty}S_x(t) = 0$.  ${}_tp_x$ is then the survival probability of an $x$-year old individual till age $x+t$ , and ${}_tq_x$ his probability of dying before attaining time $t$. Furthermore, knowing the individual is still alive at age $x+t$, we write its survival probability to age $s+t$, $s \geq t$, as $ {}_{s-t}p_{x+t}  =  \mathbb{P}[T_x > s +t \, | \, T_x > t]$.  For an absolutely continuous random variable $T_x$, the probability density function (pdf) of $T_x$ is given by 
$f_x(t) = -dS_x(t)/dt$ and we define the force of mortality $\mu_{x,c}(t)$ as the conditional density function of $T_x$ at exact age $x+t$ given survival to that age,
$$\mu_{x,c}(t) = f_x(t) /S_x(t) = -d \ln S_x(t) / dt \, ,$$ which is a continuous map  $\mu_{x,c} : \R^+ \to \R^+$. Clearly, we have
\begin{equation}
	{}_tp_x = \exp \left(-\int_0^t \mu_{x,c}(s) ds \right)  \, \vspace{2mm}.\label{proba}
\end{equation} 
On the other hand, if $T_x$ is a discrete random variable taking values $0 <t_0 < t_1 < \ldots \in \R^+$ with associated probability function $f_x(t_j) = \mathbb{P}(T_x=t_j)$ for $j\in \mathbb{N}$,  the probability of  survival is 
$${}_tp_x = \sum_{j \, : \, t_j > t} f_x(t_j) \, .$$ The (discrete) mortality rate $\mu_{x,d}:\R^+ \to [0,1]$ is then defined accordingly as the conditional death probability at time $t_j$ given that the individual has survived to $t_j$,
\begin{equation}
	\mu_{x,d}(t_j) = \mathbb{P}\left(T_x = t_j  \, \mid \, T_x \geq t_j\right) = \frac{f_x(t_j)}{S_x(t_j^{-})} \, ,
\end{equation} where $S_x(t^{-}) = \lim_{s\to t^{-}} S_x(s)$. When it is clear from the context, we simplify the notation of the discrete mortality process $\mu_{x,j} := \mu_{x,d}(t_j)$. Corresponding to \eqref{proba}, the survival probabilities can be written as
\begin{equation}
	{}_tp_x = \prod_{j \, : \, t_j \leq t} (1- \mu_{x,j}) \quad \text{and} \quad f_x(t_j) = \mu_{x,j} \prod_{i=1}^{j-1} (1- \mu_{x,i}) \, . \label{proba2}
\end{equation}
As in the continuous case, the discrete hazard process $(\mu_{x,j})_{ j \in \mathbb{N}}$ uniquely determines the distribution of the lifetime $T_x$.
\\
\\
 More  generally, the distribution of $T_x$ may have both discrete and continuous components. In this case, the mortality rate can be defined to have continuous component $\mu_{x,c}(t)$ and discrete components $\mu_{x,0}, \mu_{x,1}, \ldots$ at the discrete times $t_0, t_1, \ldots$. From \cite{kalbfleisch2002statistical}, the survival probability can then be rewritten
 \begin{equation}
 	{}_tp_x = \exp\left(-\int_0^t \mu_{x,c} (s) ds \right) \prod_{j  \, : \, t_j \leq t} (1-\mu_{x,j}) \, , \label{surv}
 \end{equation}
and as before, the conditional probability of death over $
[t,t+dt)$
 is given by the hazard measure:
\begin{equation} \label{hazard_m}
\mathbb{P}[T_x \in[t, t+dt) \mid T_x \geq t] = \begin{cases}
\mu_{x,j}, & t=t_j, \quad j=0, 1,2, \ldots \\
\mu_{x,c}(t) \, d t, & \text { otherwise.}
\end{cases} \vspace{2mm}
\end{equation}
Instead of the deterministic setting \eqref{surv}, we allow in this work for stochastic mortality intensity. Specifically, we consider a continuous-time component $\mu_{x,c} = ( \mu_{x,c}(t))_{0 \leq t\leq T}$ and a discrete-time component $\mu_{x,d} = (\mu_{x,d}(\tau_j) )_{j\in \mathbb{N}}$. Both processes are assumed to be adapted to the filtration $\mathbb{G} := (\mathcal{G}_t)_{0 \leq t \leq T}$  such that $\mathcal{G}_T \subset \mathcal{G}$,  with $(\tau_j)_{j\in\mathbb{N}}$ a strictly increasing sequence of $\mathbb{G}$-stopping time with $0 <\tau_0 < \tau_1 < \cdots $ a.s.\ (replacing the deterministic time $t_j$) and the jump size $\mu_{x,j} = \mu_{x,d}(\tau_j) \in [0,1]$  being $\mathcal{G}_{\tau_j}$-measurable for each $j$. We then extend \eqref{surv} using
 \begin{equation}
	{}_tp_x = \mathbb{E} \hspace{-0.5mm}\left[ \exp\left(-\int_0^t \mu_{x,c} (s) ds \right) \hspace{-0.5mm} \prod_{j \, : \, \tau_j \leq t} (1-\mu_{x,j}) \ \Big| \ \mathcal{G}_0 \right] , \label{surv2}
\end{equation}
and the probability of survival till time $T$ being alive at time $t+x$,
 \begin{equation}
	{}_{T-t}p_{x+t} = \mathbb{E} \hspace{-0.5mm}\left[ \exp\left(-\int_t^T \mu_{x,c} (s) ds \right) \hspace{-0.5mm} \prod_{j \, : \, t < \tau_j \leq T} (1-\mu_{x,j}) \ \Big| \ \mathcal{G}_t \right] . \label{surv3}
\end{equation}

Finally, we assume that we are given some life table which is described as follows. Let us have a group of $l_0$ new born children. Then, for $\mathds{1}_i(x)$ indicating the survival of the new born child number $i$ to age $x$,
$L(x) := \sum_{i=1}^{l_0} \mathds{1}_i(x)$ is a random variable denoting the number of children alive at age $x$. For every $i=1, \ldots, l_0$ and with ${}_xp_0 := \mathbb{P}[\mathds{1}_i(x)=1]$, the expectation of $L(x)$ is 
$$l_x := \mathbb{E}[L(x)] = \mathbb{E}\left[\sum_{i=1}^{l_0} \mathds{1}_i(x) \right] = l_0 \cdot {}_xp_0 \, ,$$
which is precisely the quantity reported in the life table. More generally, we have for any integer $j \in \mathbb{N}$,
\begin{equation}
	 l_{x+j} = l_0 \cdot {}_{x+j}p_0 = l_0 \cdot {}_xp_0 \cdot  {}_{j}p_x =  l_x \cdot {}_{j}p_x \, . \label{lifetable}
\end{equation}
Therefore,  we can equivalently consider that the given life table provides us with ${}_{j}p_x = l_{x+j}/ l_{x}$ for $j\in \mathbb{N}$, and thus only specifies the survival function of $\lfloor T(x) \rfloor$. However, many actuarial products are functionals of the lifetime, and require full information on the the survival function of $T(x)$, \textit{i.e.}\  on ${}_tp_x$ for all $t \in \R^+$. The insurer must then postulate an analytic form  for the mortality rates (Gompertz-Makeham, Lee-Carter, CBD, etc.) or adopt a fractional age assumption in  addition to the given life tables. We refer to Chapters 2 and 3 of \cite{dickson2020actuarial} for a thorough review on mortality modeling, and give here a brief reminder of the three main fractional age assumptions. Given a fixed $x=0,1,\ldots$ and\footnote{We restrict $t\in[0,1)$ since w.l.o.g.\ when $t >1$, we can always rewrite ${}_tp_x = {}_{\lfloor t \rfloor}p_{x} \cdot {}_{t'}p_{x+\lfloor t \rfloor}$ with $t' = t - \lfloor t \rfloor \in (0,1)$. } $t\in[0,1)$, we have
\begin{itemize}
	\item Uniform distribution of deaths (UDD), 
	\begin{equation} \label{UDD}
    {}_tp_x = 1-t\cdot (1-{}_1p_x) \,.
    \end{equation}
	\item Constant force of mortality (CFM),
	\begin{equation} \label{CFM}{}_tp_x = e^{-t \cdot \mu_x(0)} = ({}_1p_x)^t \, .\end{equation}
	\item Balducci approximation,
	\begin{equation} \label{Bald}{}_tp_x = \frac{{}_1p_x}{t +(1-t) \cdot {}_1p_x} \, .
	\end{equation}

\end{itemize}
\section{The financial market} \label{sec: fin}
We then consider the following financial framework for illustrating purposes, while emphasizing that the contract bounds derived in Section \ref{Sec: bounds strict} and \ref{Sec: bounds_relaxed} can be extended directly to other and more complex (Markovian) financial dynamics\footnote{This is straightforward by adapting the contract price $C_t$ in Section \ref{Sec: bounds strict} and the coefficients of the HJB equations in Section \ref{Sec: bounds_relaxed}.}. The interest rate and stock processes, $\left(r_t\right)_{0 \leq t \leq T}$ and $\left(S_t\right)_{0 \leq t \leq T}$ are defined on $(\Omega, \mathcal{F}, \mathbb{F} := (\mathcal{F}_t)_{0 \leq t \leq T}, \mathbb{Q})$ generated by $\mathbf{W}^\q=\left(W_t^{1,\q}, W_t^{2,\q}\right)_{0 \leq t \leq T}^{\top}$ :
\begin{equation}
d \, \binom{\, S_t \, }{r_t}=\binom{r_t S_t}{\ \kappa\left(\theta(t)-r_t\right) \ } \, d t+\binom{\ S_t \, \sigma_S \, \mathbf{e}_1^{\top} \ }{\sigma_r \, \mathbf{e}_2^{\top}} \, \Sigma \, d \mathbf{W}^\q_t \, , \label{dynamics}
\end{equation}
where $\mathbf{e}_1^{\top}=(1,0), \mathbf{e}_2^{\top}=(0,1) . \ \Sigma$ is the  Cholesky decomposition of the correlation matrix between the stock and the interest rate,
$$
\Sigma=\left(\begin{array}{cc}
	1 & 0 \\
	\rho & \sqrt{1-\rho^2}
\end{array}\right) ,  \ \quad \Sigma\Sigma^\top = \left(\begin{array}{cc}
	1 & \rho \\
	\rho & 1
\end{array}\right) ,
$$
where $\rho \in(-1,1)$ is the stock-rate correlation. The cash account is $B_t=\exp \left(\int_0^t r_s d s\right)$ and the zero-coupon (ZC) bond $P(t,T) = \mathbb{E}^{\mathbb{Q}}\left[e^{-\int_t^T r_s ds} \, \big| \, \mathcal{F}_t \right]$.  Then this financial market is complete and we know that there exists a unique density process $(Z_t)_{0 \leq t \leq T}$ defined as 
\begin{equation}\label{eq:unique_financial_density_process}
    Z_t:=\frac{d\mathbb{Q}}{d\p}\bigg{|}_{\mathcal{F}_t},\quad  0 \leq t\leq T.
\end{equation}
Note that the mortality rate processes $\mu_{x,c}, \mu_{x,d}$ (and thus the random future lifetime $T_x$) are not assumed independent from the financial assets $ \left(r_t\right)_{0 \leq t \leq T}$ and $\left(S_t\right)_{0 \leq t \leq T }$, see Assumption \ref{Assumption: change of measure}. 
The parameters are $\kappa, \sigma_S, \sigma_r>0$ and  $\theta(t)$ is a function of time, chosen to match the yield curve
\begin{equation}
	\theta(t)=\frac{1}{\kappa} \frac{\partial}{\partial t} f(0, t)+f(0, t)+\frac{\sigma_r^2}{2 \kappa^2}\left(1-e^{-2 \kappa t}\right) , \label{theta}
\end{equation}
where $f(0, t) = -\partial_t \ln P(0,t)$ is the instantaneous forward rate.
With this specification, the interest rate model can be fitted on a standalone basis. The next standard proposition, see \textit{e.g.}\ \cite{hainaut2024valuation}, reminds the price of a ZC bond in this model.
\begin{Proposition}
	The price at time $t$ of a zero-coupon bond of maturity $T \geq t$, is given by
	$$
	P(t, T)=\exp \left(-B(t, T) r_t + \ln \frac{P(0, T)}{P(0, t)}+B(t, T) f(0, t)-\frac{\sigma_r^2}{4 \kappa^3}\left(1-e^{-2 \kappa t}\right) B(t, T)^2\right) ,
	$$
	where
	$B(t, T)=\frac{1}{\kappa}\left(1-e^{-\kappa(T-t)}\right)$. Consequently, under the risk-neutral measure $\mathbb{Q}$, we obtain the bond price dynamics as
	\begin{equation}
	\frac{	dP(t,T)}{P(t,T)} = r_t dt - B(t,T) \, \sigma_r \, \mathbf{e}_2^\top \Sigma \, d \mathbf{W}^\q_t \, . \label{dyn2}
	\end{equation}
\end{Proposition}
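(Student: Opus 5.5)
The approach is to integrate the Ornstein--Uhlenbeck short-rate dynamics in \eqref{dynamics} explicitly and use the Gaussianity of $\int_t^T r_s\,ds$ given $\mathcal{F}_t$. Write $dr_t=\kappa(\theta(t)-r_t)dt+\sigma_r\,dW^{r}_t$ with $W^r:=\mathbf{e}_2^\top\Sigma\mathbf{W}^\q$, a standard $\q$-Brownian motion. Variation of constants gives
\begin{equation*}
r_s=r_te^{-\kappa(s-t)}+\kappa\int_t^s e^{-\kappa(s-u)}\theta(u)\,du+\sigma_r\int_t^s e^{-\kappa(s-u)}dW^r_u .
\end{equation*}
Integrating over $s\in[t,T]$ and applying Fubini (deterministic and stochastic) gives
\begin{equation*}
\int_t^T r_s\,ds=B(t,T)r_t+\kappa\int_t^T B(u,T)\theta(u)\,du+\sigma_r\int_t^T B(u,T)\,dW^r_u .
\end{equation*}
Conditionally on $\mathcal{F}_t$, this is Gaussian with variance $V(t,T)=\sigma_r^2\int_t^T B(u,T)^2du$. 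Hence
\begin{equation*}
P(t,T)=\exp\Big(-B(t,T)r_t-\kappa\int_t^T B(u,T)\theta(u)du+\tfrac12 V(t,T)\Big),
\end{equation*}
which is affine in $r_t$: $P(t,T)=\exp(\tilde A(t,T)-B(t,T)r_t)$.

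Next I identify the deterministic part $\tilde A$ in terms of the initial curve, using \eqref{theta}.

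1. Apply the formula at $t=0$. Since $P(0,T)$ is the market curve and $\theta$ is calibrated to it, this gives $\ln P(0,T)=-B(0,T)r_0-\kappa\int_0^TB(u,T)\theta(u)du+\tfrac12V(0,T)$.
2. Differentiate twice in $T$ to check consistency with \eqref{theta}. This is the standard Hull--White calibration, and it is where $f(0,t)=-\partial_t\ln P(0,t)$ enters.
3. Form $\ln\frac{P(0,T)}{P(0,t)}+B(t,T)f(0,t)$ and show it equals $-\kappa\int_t^TB(u,T)\theta(u)du+\tfrac12V(t,T)+\frac{\sigma_r^2}{4\kappa^3}(1-e^{-2\kappa t})B(t,T)^2$. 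Two facts do this. The identity $B(0,T)-B(0,t)=B(t,T)e^{-\kappa t}$ lets one split $\int_0^T$ into $\int_0^t+\int_t^T$. The quantity $f(0,t)$ equals $\partial_t$ of $-\ln P(0,t)$, computed from the step-1 representation; this brings in $r_0e^{-\kappa t}$ and $\kappa\int_0^te^{-\kappa(t-u)}\theta(u)du$, i.e.\ the mean of $r_t$. With these, the $\theta$- and $r_0$-terms cancel exactly, and only variance terms remain.
4. Collapse the variance terms to the stated correction $-\frac{\sigma_r^2}{4\kappa^3}(1-e^{-2\kappa t})B(t,T)^2$. This is elementary but tedious exponential algebra with $B$. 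Since the closed form is standard (cf.\ \cite{hainaut2024valuation}), I expect this bookkeeping of steps 3--4 to be the main obstacle. My first attempt would be to verify the identity by checking that both sides agree at $T=t$ and have equal $T$-derivatives.

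Finally, for the dynamics \eqref{dyn2}, I would apply It\^o's formula to $P(t,T)=\exp(\tilde A(t,T)-B(t,T)r_t)$. The diffusion term is $-B(t,T)\sigma_r\,dW^r_t=-B(t,T)\sigma_r\mathbf{e}_2^\top\Sigma\,d\mathbf{W}^\q_t$. For the drift, no computation is needed: $P(t,T)/B_t=\E^\q[e^{-\int_0^Tr_sds}\mid\mathcal{F}_t]$ is a $\q$-martingale, so the drift of $dP/P$ must be $r_t$.
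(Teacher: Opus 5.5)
Your route is the standard Hull--White derivation: explicit OU solution, conditional Gaussianity of $\int_t^T r_s\,ds$, the affine form $\exp(\tilde A(t,T)-B(t,T)r_t)$, elimination of $\theta$ and $r_0$ through the initial curve, and the drift of \eqref{dyn2} read off from the martingale property of $P(t,T)/B_t$. The paper itself gives no proof and only cites \cite{hainaut2024valuation}. The cancellations you announce in step~3 do occur, provided you use the identity $B(u,T)-B(u,t)=e^{-\kappa(t-u)}B(t,T)$ for all $u\le t$, not only at $u=0$. With it, the $r_0$- and $\theta$-terms of $\ln\frac{P(0,T)}{P(0,t)}+B(t,T)f(0,t)$ reduce exactly to $-\kappa\int_t^TB(u,T)\theta(u)\,du$.

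\textbf{The gap is step~4: the identity you set out to verify is false.} From your step-1 representation, $f(0,t)=\mathbb{E}^{\mathbb{Q}}[r_t]-\sigma_r^2\int_0^tB(u,t)e^{-\kappa(t-u)}\,du$. The same identity gives
\begin{equation*}
\tfrac12\big[V(0,T)-V(0,t)-V(t,T)\big]=\sigma_r^2B(t,T)\int_0^te^{-\kappa(t-u)}B(u,t)\,du+\frac{\sigma_r^2}{2}B(t,T)^2\int_0^te^{-2\kappa(t-u)}\,du .
\end{equation*}
The first term cancels against the variance part of $B(t,T)f(0,t)$. What remains is $\frac{\sigma_r^2}{4\kappa}(1-e^{-2\kappa t})B(t,T)^2$, not $\frac{\sigma_r^2}{4\kappa^3}(1-e^{-2\kappa t})B(t,T)^2$.

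The printed formula is dimensionally inconsistent. Since $\sigma_r^2$ has units $\mathrm{time}^{-3}$ and $B$ has units of time, only $\sigma_r^2B^2/\kappa$ is dimensionless. It is a misprint for
\begin{equation*}
\frac{\sigma_r^2}{4\kappa}(1-e^{-2\kappa t})B(t,T)^2=\frac{\sigma_r^2}{4\kappa^3}(1-e^{-2\kappa t})\big(1-e^{-\kappa(T-t)}\big)^2 .
\end{equation*}
This is also the only coefficient compatible with the $\frac{\sigma_r^2}{2\kappa^2}$ term in \eqref{theta}: expand both sides to order $(T-t)^2$ and it is forced. Your planned check (equal values at $T=t$, equal $T$-derivatives) would expose the mismatch rather than confirm it. So step~4 cannot be completed as written; you should state and prove the corrected formula.

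Two smaller remarks. First, step~3 only uses the model's own representation of $P(0,\cdot)$. Hence the corrected identity holds for any deterministic $\theta$. Condition \eqref{theta} is needed only to identify that curve with the market curve, and this additionally requires $r_0=f(0,0)$. Step~2 is therefore not part of the proof of the formula. Second, your treatment of the dynamics is correct and unaffected by the misprint. You define $P$ as the conditional expectation, so the drift follows from the martingale property and the diffusion coefficient from the $-B(t,T)\,dr_t$ term in It\^o's formula.
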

The policyholder’s premium is then invested in a fund $(A_t)_{0 \leq t \leq T}$ composed of cash, stocks and  ZC bonds. More precisely, the insurer invests constant percentages $ \pi_S$ in stocks, $\pi_P$ in bonds and $1-\pi_S-\pi_P$ in cash ($\bm{\pi} = (\pi_S, \pi_P) \in[0,1]^2$ and $\left.\pi_S+\pi_P \leq 1\right)$.
The fund is self-financed (no exogenous infusion or withdrawal of money) and its instantaneous return is given by
\begin{equation}
	\frac{dA_t}{A_t} = \pi_S \frac{dS_t}{S_t} + \pi_P \frac{dP(t,T)}{P(t,T)} +(1-\pi_S - \pi_P) \, r_t dt \, .
\end{equation}
If we remember Equation \eqref{dynamics} and \eqref{dyn2}, we infer that
\begin{equation}\label{eq:A_t}
\begin{aligned}
	\frac{d A_t}{A_t} & =r_t d t+\pi_S \sigma_S \mathbf{e}_1^{\top} \Sigma \, d \mathbf{W}^\q_t-\pi_P \sigma_r B(t,T) \, \mathbf{e}_2^{\top} \Sigma \, d \mathbf{W}^\q_t \\
	& =r_t d t+\left(\pi_S \sigma_S \mathbf{e}_1^{\top}-\pi_P \sigma_r B(t,T) \, \mathbf{e}_2^{\top}\right) \Sigma \,  d \mathbf{W}^\q_t \, .
\end{aligned}
\end{equation}
Let us consider for now the pricing of a financial derivative  with payoff
\begin{equation}
	H(A_T, G_T) = \max\{A_T, G_T\} = G_T + (A_T - G_T)_+ \, .\label{payoff1}
\end{equation}
A typical choice in the following consists in  $G_T = A_0  e^{r_g T}$ for $r_g \geq 0$ a minimum guaranteed return. The price of this contract with maturity $T$ is equal to the expected discounted cash-flow under the risk-neutral measure $\q$,
\begin{equation}
	C_t (T,G_T)= \mathbb{E}^\mathbb{Q}\left[ e^{-\int_t^T r_s ds} \, H(A_T, G_T) \, \big| \, \mathcal{F}_t \right] .
\end{equation}
The next Proposition follows from \cite{hainaut2024valuation} and gives us the price of this contract.
\begin{Proposition}
	\label{Prop price}
The contract's price $C_t$ for $t\leq T$ is given by 
\begin{equation}
	C_t (T,G_T)= G_T P(t,T) +A_t \Phi(-d_1(t)) - G_T P(t,T)  \Phi(-d_2(t)) \, ,
\end{equation}
where
\begin{align}
	d_1(t) &= d_2(t) - \sigma_Y(t,T) \ , \quad  \ \	d_2(t) = \frac{\ln(\frac{G_TP(t,T)}{A_t})+\frac{1}{2}\sigma^2_Y(t,T)}{\sigma_Y(t,T)} \, ,
	\\
	\sigma^2_Y(t,T) &= \int_t^T \left(B(s,T)\sigma_r \mathbf{e}_2^\top \Sigma + \bm{\pi}^\top \bm{\Sigma}_F(s) \right) \left(B(s,T)\sigma_r \Sigma^\top \mathbf{e}_2 + \bm{\Sigma}_F^\top(s)  \bm{\pi}\right) ds \, ,  \label{sigg}
	\\
	\bm{\Sigma}_F(s)&=\left(\begin{array}{cc}
		\sigma_S  & 0 \\
		-\rho \sigma_r B(s,T) & -\sqrt{1-\rho^2} \, \sigma_r B\left(s, T\right) 
	\end{array}\right) .
\end{align}
\end{Proposition}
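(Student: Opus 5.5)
The plan is a change of numéraire to the $T$-forward measure, under which the discounted payoff becomes a Black-type call on a lognormal ratio. Write $H(A_T,G_T)=G_T+(A_T-G_T)_+$. The guaranteed part gives $\mathbb{E}^\q[e^{-\int_t^T r_s ds}G_T\mid\mathcal{F}_t]=G_TP(t,T)$ by the definition of the ZC bond. For the option part, introduce the $T$-forward measure $\q^T$ through
\[
\frac{d\q^T}{d\q}\Big|_{\mathcal{F}_t}=\frac{P(t,T)}{B_t\,P(0,T)} .
\]
This gives
\[
\mathbb{E}^\q\big[e^{-\int_t^T r_s ds}(A_T-G_T)_+\mid\mathcal{F}_t\big]=P(t,T)\,\mathbb{E}^{\q^T}\big[(Y_T-G_T)_+\mid\mathcal{F}_t\big],
\]
where $Y_s:=A_s/P(s,T)$. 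Here we use $P(T,T)=1$, so that $Y_T=A_T$.

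Next I would compute the dynamics of $Y$. From \eqref{eq:A_t} and \eqref{dyn2}, Itô's quotient rule gives $dY_s/Y_s=(\ldots)\,ds+\bm{v}(s)^\top\Sigma\,d\mathbf{W}^\q_s$. The volatility vector is
\[
\bm{v}(s)^\top=\pi_S\sigma_S\mathbf{e}_1^\top-\pi_P\sigma_rB(s,T)\mathbf{e}_2^\top+\sigma_rB(s,T)\mathbf{e}_2^\top .
\]
By Girsanov, $\q^T$ shifts the Brownian motion by the bond volatility $-B(s,T)\sigma_r\Sigma^\top\mathbf{e}_2$. Since $Y$ is a $\q^T$-martingale (a traded asset divided by the numéraire), the drift vanishes under $\q^T$ and $Y$ is a driftless geometric Brownian motion there. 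Its integrated variance over $[t,T]$ is the deterministic quantity $\int_t^T|\Sigma^\top\bm v(s)|^2ds$. I would then check that this equals $\sigma_Y^2(t,T)$ in \eqref{sigg}: expanding $\bm\pi^\top\bm\Sigma_F(s)$ gives
\[
\bm\pi^\top\bm\Sigma_F(s)=\big(\pi_S\sigma_S-\pi_P\rho\sigma_rB,\;-\pi_P\sqrt{1-\rho^2}\,\sigma_rB\big)=\big(\pi_S\sigma_S\mathbf{e}_1^\top-\pi_P\sigma_rB\,\mathbf{e}_2^\top\big)\Sigma ,
\]
and adding $B\sigma_r\mathbf{e}_2^\top\Sigma$ recovers $\bm v(s)^\top\Sigma$.

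Because $\ln Y_T$ is Gaussian given $\mathcal{F}_t$ with mean $\ln Y_t-\tfrac12\sigma_Y^2$ and variance $\sigma_Y^2$, the Black formula gives
\[
\mathbb{E}^{\q^T}[(Y_T-G_T)_+\mid\mathcal{F}_t]=Y_t\Phi(\hat d_1)-G_T\Phi(\hat d_2),
\]
with $\hat d_2=\big(\ln(Y_t/G_T)-\tfrac12\sigma_Y^2\big)/\sigma_Y$ and $\hat d_1=\hat d_2+\sigma_Y$. Comparing with the statement, $d_2(t)=-\hat d_2$ and $d_1(t)=d_2(t)-\sigma_Y=-\hat d_1$. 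So $\Phi(\hat d_i)=\Phi(-d_i(t))$. Multiplying by $P(t,T)$ and using $P(t,T)Y_t=A_t$ yields the claimed formula $C_t=G_TP(t,T)+A_t\Phi(-d_1(t))-G_TP(t,T)\Phi(-d_2(t))$.

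The main obstacle, and the step I would verify most carefully, is the volatility bookkeeping. The bond-investment term $-\pi_P\sigma_rB$ and the numéraire term $+\sigma_rB$ must combine, through the Cholesky factor $\Sigma$, into exactly the integrand of \eqref{sigg}. One must also confirm that the measure change removes the drift of $Y$, so that the variance is deterministic. The remaining steps are standard.
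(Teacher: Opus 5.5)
Your argument is correct. The $T$-forward-measure change, the identification of the volatility of $Y=A/P(\cdot,T)$ as $(\pi_S\sigma_S\mathbf{e}_1^\top-\pi_P\sigma_rB\,\mathbf{e}_2^\top)\Sigma+B\sigma_r\mathbf{e}_2^\top\Sigma=\bm\pi^\top\bm\Sigma_F(s)+B\sigma_r\mathbf{e}_2^\top\Sigma$, and the sign bookkeeping $d_i(t)=-\hat d_i$ all check out.

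The paper gives no proof of its own and simply defers to \cite{hainaut2024valuation}, whose derivation is this same forward-measure Black argument (hence the notation $\sigma_Y$). The only caveat is inherited from the statement itself: in the degenerate case $\sigma_Y(t,T)=0$ (e.g.\ $t=T$), the $d_i$ must be read as limits.
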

Note that \eqref{sigg} can be obtained in closed-form, see Appendix B and C of \cite{hainaut2024valuation}.

\section{Bounds on variable annuities: almost sure matching of the life tables} \label{Sec: bounds strict}
In this section, we first derive bounds for three variable annuity products with respect to mortality rates, while enforcing strict consistency with the observed life tables. Accordingly, we assume that the total force of mortality available for each integer year $x$ (or equivalently, each one-year survival probability) is given by the life table $(l_x)$ in \eqref{lifetable} and we then restrict attention to admissible mortality trajectories that reproduce almost surely  each of these prescribed annual quantities, see constraint \eqref{const}.
\\
\\
Let us assume that we are given a cohort of individuals aged $x$ at time $t=0$ and the augmented filtration $\mathbb{H} = (\mathcal{H}_t)_{0 \leq t \leq T}$ given by $\mathcal{H}_t := \mathcal{F}_t \vee \mathcal{G}_t$. We require the two following standard assumptions throughout the rest of the manuscript.

\begin{Assumption} \label{Assumption: cond indep}
     The random variable $T_x$ satisfies for $t \leq s \leq T,$
     \begin{equation}
         \p(T_x > s \mid \mathcal{H}_s \vee \{T_x >t \}) = e^{-\int_t^{s} \mu_{x,c}(u) du } \hspace{-2mm}\prod_{j \, : \,t < \tau_j \leq s} \hspace{-2mm}(1-\mu_{x,j}) \, .
     \end{equation}
\end{Assumption}
This property is an extension of the doubly stochastic (Cox) property, see \cite{cox1980point}.
\begin{Assumption} \label{Assumption: change of measure}
    In the combined financial-actuarial market, we assume that $\q$ corresponds to the Minimal Martingale Measure (MMM) \citep{schweizer1991option,schweizer1995minimal}. Under this assumption, the density of the change of measure is such that
    \begin{equation}\label{eq:combined_density_process}
        \frac{d\q}{d\p}\bigg{|}_{\mathcal{H}_t}=Z_t, \quad  0 \leq t\leq T,
    \end{equation}
    where $(Z_t)_{0 \leq t\leq T}$ is the unique density process defined by \eqref{eq:unique_financial_density_process}.
\end{Assumption}
Several arguments support the use of the MMM. In particular, in the valuation and hedging of variable annuities,  \cite{moller1998risk,moller2001risk} and \cite{ceci2017unit} show that, within the framework of local risk-minimization, the fair value of variables annuities can be expressed as the discounted expectation of its payoff under the MMM. Hence, this measure arises naturally in incomplete markets, where perfect replication is not feasible, and it provides the pricing rule consistent with the locally risk-minimizing hedging strategy. The MMM is also mathematically tractable: its construction relies solely on the orthogonal decomposition of the discounted price process. Moreover, under the MMM, no additional mortality risk premium is introduced. More precisely, since the mortality intensity may be correlated with traded assets, its drift is adjusted under the MMM, which induces a change in the dynamic but not in the functional form of the mortality intensity. Although the choice of the martingale measure in Assumption \ref{Assumption: change of measure} appears arbitrary, it is in fact the most natural one in our setting as  we have no prior knowledge regarding the mortality rates $\mu$. Other martingale measures could be considered but would require additional modeling choices, such as the mortality risk process $(\lambda_t)_{0 \leq t\leq T}$, with $\lambda_t>-1$ a.s., which determines the pricing mortality under $\q$ as $\mu^\q_t=(1+\lambda_t) \mu_t$ for all $0 \leq t\leq T$, as well as assumptions on any other stochastic factors that could influence the dynamics of mortality process.
\\
\\
Assumption \ref{Assumption: change of measure} justifies to introduce our general objective functional as the expectation under $\q$ of a given functional $F$ of the lifetime (equivalently, of the mortality rates), account value $(A_t)_{0\leq t \leq T}$ and interest rate $(r_t)_{0\leq t \leq T}$, as follows

\begin{equation}
    V^\mu_t = \mathbb{E}^{\mathbb{Q}}\left[F(T,\mu_{x,c}, \mu_{x,d}, A, r) \mid \mathcal{H}_t\right], \qquad t\in [0,T].
\end{equation}
Our objective is to derive lower and upper bounds for $V^\mu_t$ over a prescribed class of mortality models consistent with a given life table $(l_x)_{x\in \mathbb{N}}$ described by \eqref{lifetable}. To this end, we consider the continuous-time force of mortality $\mu_{x,c} = (\mu_{x,c}(t))_{0\leq t \leq T}$  as a  control process taking values in a given compact set $\mathbb{M} \subset [0, +\infty)$, and we let $\mathcal{M}$ be a given family of such continuous $\mathbb{H}$-adapted controls $\mu_{x,c}$. 
In addition, we consider the discrete process $\mu_{x,d}=(\tau_0, \tau_1,  \ldots ; \mu_{x,0}, \mu_{x,1}, \ldots) \in \mathcal{V}$ as a given impulse control, where $\mathcal{V}$ is the set of  admissible impulse controls $\mu_{x,d}$ such that $0 < \tau_0 < \tau_1 < \ldots \leq T$ (a.s.) are $\mathbb{H}$-stopping times and $\mu_{x,j} \in [0,1]$ are $\mathcal{H}_{\tau_j}$-measurable for all $j\geq 0$. We then call $\mu := (\mu_{x,c}, \mu_{x,d})$ a combined control.
Finally, we assume that we are given a set $\mathcal{W} \subset \mathcal{M} \times \mathcal{V}$ of admissible combined controls $\mu$, such that $V^\mu_t$ is well-defined for all $t\in[0,T]$ and such that the life-table constraint \eqref{const} below is satisfied. 
\\
Note that we do not only want the force of mortality to be consistent with  the final survival probability ${}_{T-t}p_{x+t}$, 
but with all intermediate one-year survival probabilities at integer ages between $x+t$ and $x+T$. This can be seen writing for $t, T \in \mathbb{R}^+$, $$
	{}_{T-t}p_{x+t} = {}_{\lceil t\rceil -t}p_{x+t} \times {}_1p_{x+\lceil t\rceil} \times {}_1p_{x+1 +\lceil t\rceil} \times \cdots \times {}_1p_{x+j +\lceil t\rceil} \times \cdots \times  {}_{T - \lfloor T \rfloor}p_{x+\lfloor T \rfloor} \, . $$
 All the one-year survival probabilities $ {}_1p_{x+j +\lceil t\rceil}$, $j=0, \ldots, \lfloor T \rfloor-\lceil t\rceil -1$, can be observed from the given life table \eqref{lifetable}, \textit{i.e.}
 \begin{equation} \label{eq: table}\hat{p}_j :=  {}_1p_{x+j +\lceil t\rceil} = \frac{l_{x+j+\lceil t\rceil +1}}{l_{x+j+\lceil t\rceil}} .\end{equation} Hence, we obtain\footnote{Additional constraints from the observed life table (such as fractional age assumptions or from expert knowledge) can also be imposed on ${}_{\lceil t\rceil -t}p_{x+t}$ and $ {}_{T - \lfloor T \rfloor}p_{x+\lfloor T \rfloor}$. Alternatively, we simply assume in the following that these extremal survival probabilities are fixed to $1$ and cannot be controlled in our framework through  $\mu$.}
\begin{equation}
		{}_{T-t}p_{x+t} = {}_{\lceil t\rceil -t}p_{x+t} \times \hat{p}_0 \times \hat{p}_1 \times \cdots \times \hat{p}_j \times \cdots \times {}_{T - \lfloor T \rfloor}p_{x+\lfloor T \rfloor}. \label{termproba}
\end{equation} 
Therefore, in the strict setting of this section, admissible mortality controls $\mu \in \mathcal{W}$ are required to strictly match these observed $\hat{p}_j$'s. Specifically, for each $j=0, \ldots, \lfloor T \rfloor-\lceil t\rceil -1$,
\begin{equation}
\exp\left(-\int_{\lceil t\rceil +j }^{\lceil t\rceil + j+1} \mu_{x,c} (s) ds \right) \hspace{-1.3mm} \prod_{i \, :  \, \lceil t\rceil + j < \tau_i \leq \lceil t\rceil + j+1} \hspace{-4mm} (1-\mu_{x,i}) =	\hat{p}_j \, , \quad \mathbb{P}-\text{a.s.} \label{const}
\end{equation}
The aim of this Section \ref{Sec: bounds strict} is then to find the upper bound
\begin{equation}
	\bar{V}_t = \sup_{\mu \in \mathcal{W}}\mathbb{E}^{\mathbb{Q}} \left[ F(\mu_{x,c}, \mu_{x,d}, A, r) \mid \mathcal{H}_t\right] , \label{ubound}
\end{equation}
and the lower bound
\begin{equation}
	\bunderline{V}_t = \inf_{\mu \in \mathcal{W}}\mathbb{E}^{\mathbb{Q}}  \left[ F(\mu_{x,c}, \mu_{x,d}, A, r) \mid \mathcal{H}_t\right] , \label{lbound}
\end{equation}
of functionals of the lifetime over all admissible mortality controls. We then denote, if they exist, the corresponding optimal mortality controls $\bar{\mu}^*,\bunderline{\mu}^* \in \mathcal{W}$ attaining these bounds, \textit{i.e.} $\bar{V}_t = V^{\bar{\mu}^*}_t$ and $\bunderline{V}_t = V^{\bunderline{\mu}^*}_t$. 
\\
\\
Moreover,  we emphasize that this pathwise constraint is quite strong in practice: it requires almost surely that every realized mortality path matches the annual survival probabilities $\hat{p}_j$ from the life table, essentially leading to deterministic optimal rates $\bar{\mu}^*, \bunderline{\mu}^*$ (see Propositions \ref{propgmab}--\ref{propgmib}--\ref{propgmdb} below). While a constraint in expectation would be more natural, a naive implementation of the resulting control problem appears to be ill-posed. We address this issue in Section \ref{Sec: bounds_relaxed}, by appropriately relaxing this constraint \eqref{const} and specifying the corresponding set of admissible (bounded) controls. In the following subsections, we study the upper and lower bounds of three different types of variables annuities: GMAB, GMIB and GMDB. Proofs are deferred in \ref{sec: appendix}.
\subsection{Guaranteed Minimum Accumulation Benefits}
Guaranteed Minimum Accumulation Benefits (GMAB) are retirement savings products which promises in case of survival, the maximum between investments and a guaranteed capital. The contract is subscribed by an individual aged $x$ and guarantees at  expiry date  $T$ a payout equal to the maximum between the guaranteed capital $ G^A_T=A_0 e^{r_g T}$ and the fund $A_T$, in the event of survival.  The payout in case of survival is then given by \eqref{payoff1}.  Assuming survival to time $t$\footnote{We slightly abuse  notation throughout the manuscript by conditioning on $\mathcal{H}_t$  instead of $\mathcal{H}_t \vee \{T_x >t \}$.}, the fair value of such a policy at time $t$  is denoted by $V^\mu_t$ and equal to the expected discounted cash-flows under $\mathbb{Q}$, 
\begin{align}
	V^\mu_t &= \mathbb{E}^{\mathbb{Q}} \left[ e^{-\int_t^T r_s ds} \, \mathds{1}_{\{T_x > T\}} \, H(A_T,G^A_T) \ \big| \ \mathcal{H}_t \right] \nonumber
	\\
	&= \mathbb{E}^{\mathbb{Q}}  \left[ \mathbb{E}^{\mathbb{Q}}  \left[ e^{-\int_t^T r_s ds} \, \mathds{1}_{\{T_x > T\}} \, H(A_T,G^A_T) \ \big| \ \mathcal{H}_T \right] \, \Big| \ \mathcal{H}_t \right] \nonumber
	\\
	&=  \mathbb{E}^{\mathbb{Q}}  \bigg[ \, e^{-\int_t^T (r_s+\mu_{x,c}(s) )ds} \hspace{-2mm}\prod_{j \, : \,t < \tau_j \leq T} \hspace{-2mm}(1-\mu_{x,j}) \, H(A_T,G^A_T) \ \big| \  \mathcal{H}_t \, \bigg] , \label{GMAB}
\end{align}
 using Assumptions \ref{Assumption: cond indep} and \ref{Assumption: change of measure}. Note that in general
$V^\mu_t \neq  \mathbb{E}^{\mathbb{Q}}[\mathds{1}_{T_x>T} \, | \, \mathcal{G}_t] \, C_t(T,G^A_T) = {}_{T-t}p_{x+t} \, C_t(T,G^A_T)$ since $\mu$ is not assumed independent from $A$ and $r$. 
We then derive the GMAB upper bound
\begin{equation}
	\bar{V}_t = \sup_{\mu \in \mathcal{W}}\mathbb{E}^{\mathbb{Q}} \left[ e^{-\int_t^T (r_s+\mu_{x,c}(s) )ds} \hspace{-2mm}\prod_{j \, : \, t <  \tau_j \leq T} \hspace{-2mm}(1-\mu_{x,j}) \, H(A_T,G^A_T) \ \big| \ \mathcal{H}_t  \right] , \label{ubound}
\end{equation}
and GMAB lower bound
\begin{equation}
	\bunderline{V}_t = \inf_{\mu \in \mathcal{W}}\mathbb{E}^{\mathbb{Q}} \left[ e^{-\int_t^T (r_s+\mu_{x,c}(s) )ds} \hspace{-2mm}\prod_{j \, : \, t < \tau_j \leq T} \hspace{-2mm}(1-\mu_{x,j}) \, H(A_T,G^A_T) \ \big|  \  \mathcal{H}_t  \right] . \label{lbound}
\end{equation}
However, these GMAB bounds are trivial as any force of mortality $\mu\in \mathcal{W}$--thereby satisfying \eqref{const}--leads to the same price $V^\mu_t = V_t$, since the contract's value only depends on the total survival probability $	{}_{T-t}p_{x+t} $ given by \eqref{termproba}. In fact, the exact death time of an individual in each one-year period does not change the contract's value $V^\mu
_t$ as in any case, a deceased policyholder will not receive the terminal payoff $H(A_T,G^A_T)$. In other words, the GMAB does not depend on the exact path of the mortality process, as summarized in the following proposition.
\begin{Proposition}[GMAB bounds] \label{propgmab}
For the GMAB contract \eqref{GMAB}, each admissible control $\mu \in \mathcal{W}$	is an optimal control $\bar{\mu}^*$ of \eqref{ubound} and $\bunderline{\mu}^*$ of \eqref{lbound}. The GMAB price then satisfies for each $\mu \in \mathcal{W}$,
\begin{equation}
	\bar{V}_t = V^\mu_t  = {}_{T-t}p_{x+t} \, C_t(T,G^A_T) \, , \label{gmabup}
\end{equation} 
and, respectively,
\begin{equation}
	\bunderline{V}_t  = V^\mu_t = {}_{T-t}p_{x+t} \, C_t(T,G^A_T)\, .\label{gmabdow}
\end{equation} 
where ${}_{T-t}p_{x+t}$ is given by \eqref{termproba}.
\end{Proposition}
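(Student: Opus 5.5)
Starting from the representation \eqref{GMAB}, obtained via Assumptions \ref{Assumption: cond indep} and \ref{Assumption: change of measure}, the plan is to show that for every admissible $\mu\in\mathcal{W}$ the survival factor
\[
\Lambda^\mu_{t,T}:=e^{-\int_t^T \mu_{x,c}(s)\,ds}\prod_{j\,:\,t<\tau_j\leq T}(1-\mu_{x,j})
\]
is in fact a deterministic constant equal to ${}_{T-t}p_{x+t}$ given by \eqref{termproba}. Once this holds, it factors out of the conditional expectation, and $V^\mu_t$ no longer depends on $\mu$. Consequently $\sup$ and $\inf$ over $\mathcal{W}$ coincide and are attained by every $\mu$. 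This proves the statement about optimal controls.

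First, I will split $(t,T]$ into the pieces $(t,\lceil t\rceil]$, the full integer years $(\lceil t\rceil+j,\lceil t\rceil+j+1]$ for $j=0,\ldots,\lfloor T\rfloor-\lceil t\rceil-1$, and $(\lfloor T\rfloor,T]$. Because the integral is additive and the jump times are partitioned by these disjoint intervals, $\Lambda^\mu_{t,T}$ factors into a product of the corresponding per-interval factors. On each full year, the constraint \eqref{const} gives that the factor equals $\hat p_j$ $\mathbb{P}$-a.s. On the two fractional end pieces, the convention stated in the footnote preceding \eqref{termproba} fixes these factors (equivalently, they equal ${}_{\lceil t\rceil-t}p_{x+t}$ and ${}_{T-\lfloor T\rfloor}p_{x+\lfloor T\rfloor}$, which are not controlled). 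Hence $\Lambda^\mu_{t,T}={}_{T-t}p_{x+t}$ $\mathbb{P}$-a.s. Since $\q\sim\p$ on $\mathcal{H}_T$ by Assumption \ref{Assumption: change of measure}, the identity also holds $\q$-a.s.

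Second, I will pull the constant out of the conditional expectation:
\[
V^\mu_t={}_{T-t}p_{x+t}\,\mathbb{E}^{\q}\!\left[e^{-\int_t^T r_s ds}H(A_T,G^A_T)\mid\mathcal{H}_t\right].
\]
To identify the remaining expectation with $C_t(T,G^A_T)$ from Proposition \ref{Prop price}, I need to replace $\mathcal{H}_t$ by $\mathcal{F}_t$. Because the density of $\q$ with respect to $\p$ is the $\mathbb{F}$-adapted process $Z$ by \eqref{eq:combined_density_process}, the $(r,A)$-dynamics \eqref{dynamics} driven by $\mathbf{W}^\q$ are Markov in $(r_t,A_t)$, and the payoff is $\mathcal{F}_T$-measurable. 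Under the standard (H)-hypothesis type assumption implicit in the paper, conditioning on the larger filtration gives the same value. This yields \eqref{gmabup} and \eqref{gmabdow}.

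I expect the main obstacle to be this last filtration step. If the mortality information $\mathcal{G}_t$ carries information about future Brownian increments, then $\mathbb{E}^\q[\cdot\mid\mathcal{H}_t]\neq\mathbb{E}^\q[\cdot\mid\mathcal{F}_t]$ in general. I would first try to argue that $\mathbf{W}^\q$ remains an $(\mathbb{H},\q)$-Brownian motion, which follows from the MMM construction in Assumption \ref{Assumption: change of measure} since the financial market filtration is immersed in $\mathbb{H}$. The Markov property of \eqref{dynamics} then gives the conditional expectation as a function of $(r_t,A_t)$, namely $C_t$. If that is not available, I would simply interpret $C_t$ as the $\mathcal{H}_t$-conditional price, since the factorization argument is unaffected.
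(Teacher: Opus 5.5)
Your proposal is correct and follows the same route as the paper. Constraint \eqref{const}, together with the convention fixing the fractional end pieces, makes the survival factor the deterministic constant ${}_{T-t}p_{x+t}$; this constant is pulled out of the conditional expectation, and $\mathcal{H}_t$ is then replaced by $\mathcal{F}_t$ to identify $C_t(T,G^A_T)$. The paper simply asserts this last replacement as a consequence of Assumption \ref{Assumption: change of measure}, so your observation that it really rests on an immersion-type property (and your fallback of reading $C_t$ as the $\mathcal{H}_t$-conditional price) refines the argument rather than exposing a gap in your proof.
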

This Proposition \ref{propgmab} is still valid when considering \textbf{any form of payoff} $H(A_T, G^A_T)$ for the GMAB \eqref{GMAB}, by adjusting the price $C_t(T,G^A_T)$ in \eqref{gmabup}-\eqref{gmabdow}. Even though the bounds on the GMAB are trivial, they lay the foundations for studying the following more complex variable annuities.

\subsection{Guaranteed Minimum Income Benefit}
We now consider the following variant of Guaranteed Minimum Income Benefits (GMIB) where an individual aged $x$ continuously receives a guaranteed rate $r_g > 0$ on the initial investment $A_0$ as long as he is alive, plus a bonus rate proportional to the excess return of the fund $A_t$ over the pre-specified guarantee $G^I_t = A_0 e^{r_g t}$.  The exact payoff is given by\footnote{one could also consider continuously-compounded rates with %
	$\widetilde{H}(A_t, G^I_t) =  A_0 \left(r_ge^{r_g(T-t)} + \frac{1}{A_0}(A_t-G^I_t)_+ \right)\,$
since $\int_t^T r_g e^{r_g(T-s)} ds = e^{r_g(T -t)} -1$.}
\begin{equation}
	\widetilde{H}(A_t, G^I_t) =  A_0\left(r_g + \frac{1}{A_0}(A_t-G^I_t)_+ \right)\, . \label{payoff2}
\end{equation}
  Proposition \ref{Prop price} can be straightforwardly adapted to obtain as contract price for $s \geq t$, the sum of a ZC bond and a European call,
\begin{align}
	C_t (s,G^I_s) &= \mathbb{E}^\mathbb{Q}\left[ e^{-\int_t^s r_u du} \, \widetilde{H}(A_s, G^I_s) \, \big| \, \mathcal{F}_t \right] \nonumber
	\\
	& =  A_0 r_g P(t,s) + A_t \Phi(-d_1(t)) - G^I_s P(t,s)  \Phi(-d_2(t))  . \label{call}
\end{align}
Again, the exact  form of the payoff $\widetilde{H}$ is chosen  here solely for analytical convenience in solving the optimization \eqref{uboundi}--\eqref{lboundi} and does not restrict the generality of our analysis as Proposition \ref{propgmib} below remains valid for \textbf{any positive payoff} $\widetilde{H}$, provided that the corresponding price $C_t(s, G^I_s)$ is adjusted accordingly. The fair value of such a policy, denoted by $V^\mu_t$, is again equal to the expected discounted cash-flows under the risk neutral measure $\mathbb{Q}$,
\begin{align}
	V^\mu_t& =  \mathbb{E}^{\mathbb{Q}} \left[ \int_t^T \hspace{-1mm}e^{-\int_t^s r_u du} \, \mathds{1}_{T_x > s} \, \widetilde{H}(A_s,G^I_s)  \, ds \ \Big| \ \mathcal{H}_t \right] \hspace{-0.7mm} \nonumber
    \\
    &=  \mathbb{E}^{\mathbb{Q}} \left[ \int_t^T \hspace{-1mm}e^{-\int_t^s r_u du} \, {}_{s-t}p_{x+t} \, \widetilde{H}(A_s,G^I_s)  \, ds  \ \Big| \ \mathcal{H}_t \right] \nonumber
	\\
	&= \mathbb{E}^{\mathbb{Q}} \left[ \int_t^T e^{-\int_t^s (r_u+\mu_{x,c}(u)) du}  \hspace{-2mm}\prod_{j \, : \, t < \tau_j \leq s} \hspace{-2mm}(1-\mu_{x,j}) \, \widetilde{H}(A_s,G^I_s) \, ds  \ \Big| \  \mathcal{H}_t  \right] , \label{GMIB}
\end{align}
under Assumptions \ref{Assumption: cond indep} and \ref{Assumption: change of measure}. The upper bound
\begin{equation}
	\bar{V}_t = \sup_{\mu \in \mathcal{W}} \mathbb{E}^{\mathbb{Q}} \left[ \int_t^T e^{-\int_t^s (r_u+\mu_{x,c}(u)) du}  \hspace{-2mm}\prod_{j \, : \, t < \tau_j \leq s} \hspace{-2mm}(1-\mu_{x,j}) \, \widetilde{H}(A_s,G^I_s) \, ds \ \Big| \ \mathcal{H}_t \right] , \label{uboundi}
\end{equation}
and lower bound 
\begin{equation}
	\bunderline{V}_t = \inf_{\mu \in \mathcal{W}} \mathbb{E}^{\mathbb{Q}} \left[ \int_t^T e^{-\int_t^s (r_u+\mu_{x,c}(u)) du}  \hspace{-2mm}\prod_{j \, : \, t < \tau_j \leq s} \hspace{-2mm}(1-\mu_{x,j}) \, \widetilde{H}(A_s,G^I_s) \, ds \ \Big| \ \mathcal{H}_t \right] , \label{lboundi}
\end{equation}
of the  GMIB contract \eqref{GMIB} are again such that the  constraint \eqref{const} from the given life table holds and are derived in the following proposition.
\begin{Proposition}[GMIB bounds]
\noindent\textup{(i) Upper bound.}
The optimal control $\bar{\mu}^* = (\bar{\mu}^*_{x,c}, \bar{\mu}^*_{x,d})$ for the upper bound $\bar V_t$ is given by
\[
\bar\mu^{*}_{x,c}(t)\equiv 0,\qquad
\bar\mu^{*}_{x,d}=\{(\bar\tau_j^*,\bar\mu_{x,j}^*)\}_{j=0,\dots,\lfloor T\rfloor-\lceil t\rceil-1},
\]
with intervention times and jump sizes
\[
\bar\tau_j^*=\lceil t\rceil+j+1,\qquad \bar\mu_{x,j}^*=1-\hat p_j.
\]
 Moreover, the upper bound satisfies
\begin{equation} \label{upper_GMIB}
    \bar V_t=\int_t^T {}_{s-t}\bar p^{*}_{x+t}\, C_t(s,G_s^I)\,ds,
\end{equation}
where the induced survival probability is deterministic and is equal to
\begin{equation} \label{sup_proba}
    {}_{s-t}\bar p^{*}_{x+t}
= \prod_{j\in\mathbb{N}:\, \lceil t\rceil+j+1\le s}\hat p_j,
\end{equation}
for $s\in[t,T]$ (with the convention $\prod_{\emptyset}:=1$). \\[0.9mm]
\smallskip
\noindent\textup{(ii) Lower bound.}
The lower bound $\underline V_t:=\inf_{\mu\in\mathcal{W}}V_t^\mu$ is not attained in $\mathcal{W}$. However, there exists a sequence of admissible controls $\{\mu^{(n)}\}_{n\ge 1}\subset\mathcal{W}$ such that $V_t^{\mu^{(n)}}\downarrow \underline V_t$ and whose discrete intervention times satisfy $\tau^{(n)}_j\downarrow \lceil t\rceil+j$ with $\mu^{(n)}_{x,j} = 1-\hat{p}_j$ for each $j=0,\dots,\lfloor T\rfloor-\lceil t\rceil-1$ and with $\mu_{x,c}^{(n)}\equiv 0$. This lower bound $\bunderline{V}_t$ then satisfies
	\begin{align} \label{lower_GMIB}
		\bunderline{V}_t = \int_t^T {}_{s-t}\bunderline{p}^*_{x+t} \,  C_t(s,G^I_s) \, ds  \, ,
	\end{align}
	where for $s\in[t,T]$,
\begin{equation} \label{inf_proba}
\displaystyle {}_{s-t}\bunderline{p}^*_{x+t} = \prod_{j  \in \mathbb{N} \, : \,  \lceil t \rceil + j < s }\hat{p}_j \, .
\end{equation}  \label{propgmib}
\end{Proposition}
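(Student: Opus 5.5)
The plan is a pathwise comparison argument. I first bound the induced survival process for every admissible control by the two deterministic extremes, and then integrate against the positive discounted payoff. For $\mu\in\mathcal{W}$ and $s\in[t,T]$, write
\[
L^\mu_s:=e^{-\int_t^s\mu_{x,c}(u)du}\prod_{j:\,t<\tau_j\le s}(1-\mu_{x,j}).
\]
This process is non-increasing and right-continuous in $s$, with values in $[0,1]$. By \eqref{const}, on each full year $I_j=(\lceil t\rceil+j,\lceil t\rceil+j+1]$ its multiplicative decrement equals $\hat p_j$ almost surely. Following the footnote convention, $L^\mu$ is constant outside the full years, i.e.\ on $[t,\lceil t\rceil]$ and on $(\lfloor T\rfloor,T]$.

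From this I get the sandwich ${}_{s-t}\bunderline{p}^*_{x+t}\le L^\mu_s\le {}_{s-t}\bar p^*_{x+t}$ a.s.\ for every $s$. To see it, take $s\in I_j$. The years completed before $\lceil t\rceil+j$ contribute exactly $\prod_{i<j}\hat p_i$. Monotonicity within $I_j$ forces $\hat p_j\le L^\mu_s/L^\mu_{\lceil t\rceil+j}\le 1$, with the value $\hat p_j$ reached at the right endpoint. This matches \eqref{sup_proba} and \eqref{inf_proba}, including at the endpoint $s=\lceil t\rceil+j+1$, where both sides agree.

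Next I plug the sandwich into \eqref{GMIB}. Since $\widetilde H\ge0$, the discounted integrand $e^{-\int_t^s r_u du}\widetilde H(A_s,G^I_s)$ is nonnegative. Tonelli then gives
\[
\int_t^T{}_{s-t}\bunderline{p}^*_{x+t}\,C_t(s,G^I_s)\,ds\;\le\; V^\mu_t\;\le\;\int_t^T{}_{s-t}\bar p^*_{x+t}\,C_t(s,G^I_s)\,ds .
\]
Here I pull the deterministic survival factors out of the conditional expectation. I use Assumption \ref{Assumption: change of measure} and the fact that $A$ and $r$ are Markov in the $\mathbb{F}$-filtration to replace $\mathbb{E}^\q[\cdot\mid\mathcal{H}_t]$ by $\mathbb{E}^\q[\cdot\mid\mathcal{F}_t]=C_t(s,G^I_s)$ from \eqref{call}. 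This step should be stated carefully, because $\mathcal{G}_t$ carries no extra information on the future financial paths under the MMM. Note that positivity of $\widetilde H$ is the only property of the payoff used, consistent with the remark before the proposition.

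\textbf{Upper bound.} The control $\bar\mu^*$, with $\mu_{x,c}\equiv0$ and jumps $1-\hat p_j$ at the deterministic times $\lceil t\rceil+j+1$, lies in $\mathcal{W}$: these are deterministic stopping times, they satisfy \eqref{const}, and the jump at the right endpoint belongs to $I_j$. Its survival process is exactly $\bar p^*$, so the upper inequality is attained, which yields \eqref{upper_GMIB}.

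\textbf{Lower bound.} The infimum is not attained. Attaining it would require $L^\mu_s=\hat p_j\prod_{i<j}\hat p_i$ for all $s\in I_j$, i.e.\ the entire decrement at time $\lceil t\rceil+j$ itself. But $\lceil t\rceil+j\notin I_j$, because the interval is left-open and the left endpoint is already charged to the previous year's constraint. So equality in the lower inequality fails on a set of positive Lebesgue measure near the left endpoint. Since $C_t(s,\cdot)>0$ (it contains $A_0r_gP(t,s)>0$), the strict gap on a positive-measure set gives $V^\mu_t>\bunderline V_t$ for every $\mu\in\mathcal{W}$.

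For the approximating sequence, take $\tau^{(n)}_j=\lceil t\rceil+j+1/n$ with jumps $1-\hat p_j$ and $\mu^{(n)}_{x,c}\equiv0$. These controls are admissible, and their survival processes differ from $\bunderline p^*$ only on the sets $\bigcup_j(\lceil t\rceil+j,\lceil t\rceil+j+1/n)$, where they are bounded by $1$. Dominated convergence, with the integrable bound $C_t(s,G^I_s)$ on $[t,T]$, then gives $V^{\mu^{(n)}}_t\downarrow\bunderline V_t$ monotonically. Combined with the lower inequality, this identifies the infimum as \eqref{lower_GMIB}.

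I expect the main obstacle to be the rigorous justification that the conditional expectation factorizes into $C_t(s,G^I_s)$ once the survival factor is deterministic, together with handling the non-attainment argument cleanly. The rest is elementary monotonicity.
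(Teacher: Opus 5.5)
Your proof is correct, but it takes a different route from the paper.

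\textbf{The paper's route.} The paper argues year by year through stopping times. Because the discounted GMIB integrand is nonnegative, $\tau\mapsto\mathbb{E}^{\mathbb{Q}}[\int_{\lceil t\rceil+j}^{\tau}e^{-\int_t^s r_u du}\widetilde H(A_s,G^I_s)\,ds\mid\mathcal{H}_t]$ is nondecreasing over $\mathcal{T}_{(\lceil t\rceil+j,\lceil t\rceil+j+1]}$. It concludes that the whole yearly mortality mass should sit in a single jump at the right endpoint for (i), or be pushed toward the left endpoint for (ii). The choice $\mu_{x,c}\equiv0$ is simply asserted. In (ii), $\underline V_t$ is \emph{defined} as the dominated-convergence limit of the approximating sequence.

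\textbf{Your route and what it buys.} Your pathwise sandwich ${}_{s-t}\underline p^*_{x+t}\le L^\mu_s\le{}_{s-t}\bar p^*_{x+t}$ uses only two facts: $L^\mu$ is nonincreasing, and \eqref{const} fixes its value at each integer time. It therefore covers every control in $\mathcal{W}$ at once, including a continuous part and several or random jumps per year. As a result it actually proves that \eqref{upper_GMIB} and \eqref{lower_GMIB} are the supremum and infimum over all of $\mathcal{W}$. It also gives a genuine non-attainment argument. The paper leaves both of these points heuristic.

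\textbf{What the paper's route buys.} Your sandwich depends on the survival factor entering the GMIB value linearly against a nonnegative integrand. The paper's stopping-time formulation is what carries over to the GMDB. There the benefit is integrated against $-dL^\mu$, and the needed monotonicity of $s\mapsto C_t(s,G^D_s)$ holds only in conditional expectation, not pathwise.

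\textbf{One small fix.} In your non-attainment step, the window $(\lceil t\rceil+j,\lceil t\rceil+j+\varepsilon(\omega))$ on which $L^\mu_s>{}_{s-t}\underline p^*_{x+t}$ is random. You therefore cannot conclude via $C_t(s,G^I_s)>0$, which only appears after conditioning. Instead argue strict positivity pathwise, using $e^{-\int_t^s r_u du}\widetilde H(A_s,G^I_s)\ge A_0 r_g e^{-\int_t^s r_u du}>0$, and then take the conditional expectation. This step also needs some $j$ with $\hat p_j<1$ and $\prod_{i<j}\hat p_i>0$.

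The identification $\mathbb{E}^{\mathbb{Q}}[\,\cdot\mid\mathcal{H}_t]=C_t(s,G^I_s)$ that you flag is asserted in exactly the same way in the paper, so your argument is no weaker there.
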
 
We can extend Proposition \ref{propgmib} to more general payoffs by allowing $\widetilde{H}$ to be negative, \textit{e.g.}\ if $r_g \in \mathbb{R}$. The existence and uniqueness of each $\tau^*_j$ then depend on the growth and convexity properties of  the $\mathcal{H}_t$-conditional expectation in \eqref{GMIB}. Naturally, when $\tau^*_j$ is unique, we obtain directly from constraint \eqref{const} with $\mu^*_{x,c} \equiv 0$ that $\mu^*_{x,j} =1- \hat{p}_j$. Even when $\tau^*_j$ is not unique, choosing arbitrarily any such $\tau^*_j$ with $\mu^*_{x,j} =1- \hat{p}_j$ remains sufficient in our setting, since we are anyway only concerned with the upper and lower bounds $\bar{V}_t$, $\bunderline{V}_t$ (the observed mortality being not controllable in practice).
Moreover, we also note  for the lower bound $\bunderline{V}_t$ in Proposition \ref{propgmib} \textit{(ii)} that $\displaystyle {}_{s-t}\bunderline{p}^*_{x+t}$ is left-continuous in $s$ and hence not a proper survival function (no minimizer $\bunderline{\tau}^*_j$). 
 Importantly, the absence of minimizer is again not an issue in our framework, since the observed mortality cannot be controlled in practice. We are in fact only interested in the worst-case contract value $\bunderline{V}_t$, which allows the insurer to quantify the maximum potential loss arising from deviations of the realized mortality rates from her chosen mortality assumption/model. If needed, however, $\displaystyle {}_{s-t}\bunderline{p}^*_{x+t}$ can always be used in practice as an approximate  optimal survival probability. 
 
\subsection{Guaranteed Minimum Death Benefits}
Guaranteed Minimum Death Benefits (GMDB) ensure that if the policyholder dies before the contract’s maturity, their beneficiaries will receive at least a guaranteed minimum amount, even if the investment’s performance is lower. The payout $\widehat{H}$ in case of death at time $s \geq t$  is then defined similarly as in \eqref{payoff1} using $G^D_s = A_0   e^{r_g s}$, 
\begin{equation}
	\widehat{H}(A_s, G^D_s) = \max\{A_s, G^D_s\} = G^D_s + (A_s - G^D_s)_+ \, , \label{payoff3}
\end{equation}
with $G^D_s > 0$, and whose price $C_t(s,G^D_s)$ at time $t\leq s$ is given in Proposition \ref{Prop price}. The fair value of such a policy can be written using \eqref{hazard_m}--\eqref{surv2}, 
\begin{align}
	V^\mu_t &= \mathbb{E}^{\mathbb{Q}} \left[ \int_t^T e^{-\int_t^s r_u du} \, {}_{s^--t}p_{x+t}\left(\mu_{x,c}(s) + \sum_{j \in \mathbb{N}} \mu_{x,j} \, \delta(s-\tau_j)\right) \widehat{H}(A_s,G^D_s) \, ds \ \Big| \ \mathcal{H}_t \right] \nonumber
	\\
	&\hspace{-4.8mm}=  \mathbb{E}^{\mathbb{Q}} \hspace{-0.7mm} \left[ \hspace{-0.4mm} \int_t^T \hspace{-1.2mm} e^{-\int_t^s (r_u+\mu_{x,c}(u) )du} \hspace{-0.6mm}\left( \prod_{j \, : \, t < \tau_j < s}  \hspace{-2.5mm}(1-\mu_{x,j}) \right) \hspace{-0.85mm} \hspace{-0.5mm}\left(\hspace{-0.5mm}\mu_{x,c}(s) + \sum_{j \in \mathbb{N}}  \mu_{x,j} \, \delta(s-\tau_j) \hspace{-0.75mm}\right) \hspace{-0.85mm} \widehat{H}(A_s,G^D_s)  \, ds \,  \Big| \, \mathcal{H}_t \right] \label{GMDB}
\end{align}
where $\delta(\cdot)$ is the Dirac measure at zero. We again aim at deriving the upper bound $\bar{V}_t = \sup_{\mu \in \mathcal{W}} V^\mu_t$ and lower bound $\bunderline{V}_t = \inf_{\mu \in \mathcal{W}} V^\mu_t$ subject to the constraint \eqref{const}, where the supremum is taken with respect to both  the continuous and  the discrete components of the force of mortality.
\begin{Proposition}[GMDB bounds]\label{propgmdb}
 Assume that for all $s\in[t,T]$, $r_g \ge f(t,s)$ and
\begin{equation}
r_t \alpha(t)-\theta(t)\big(\alpha(t)+(T-t)\big)\le 0,
\quad \text{with} \quad \alpha(t)=\frac{e^{-\kappa(T-t)}-1}{\kappa}\le 0.
\label{assup1}
\end{equation}
\smallskip
\noindent\textup{(i) Upper bound.}
The optimal control for $\bar V_t:=\sup_{\mu\in\mathcal{W}}V_t^\mu$ is given by
\[
\bar\mu^*_{x,c}(t)\equiv 0,
\qquad
\bar\mu^*_{x,d}=\{(\bar\tau_j^*,\bar\mu_{x,j}^*)\}_{j=0,\ldots,\lfloor T\rfloor-\lceil t\rceil-1},
\]
with intervention times and jump sizes
\[
\bar\tau_j^*=\lceil t\rceil+j+1,
\qquad
\bar\mu_{x,j}^*=1-\hat p_j.
\]
The corresponding upper bound is
\begin{equation} \label{lbound2}
\bar V_t
=
\sum_{j=0}^{\lfloor T\rfloor-\lceil t\rceil-1}
C_t(\lceil t\rceil+j+1,G^D_{\lceil t\rceil+j+1})\,
(1-\hat p_j)\,
\prod_{i<j}\hat p_i.
\end{equation}

\smallskip
\noindent\textup{(ii) Lower bound.}
The lower bound $\underline V_t:=\inf_{\mu\in\mathcal{W}}V_t^\mu$ is not attained in $\mathcal{W}$. However, there exists a sequence of admissible controls $\{\mu^{(n)}\}_{n\ge 1}\subset\mathcal{W}$ such that $V_t^{\mu^{(n)}}\downarrow \underline V_t$ and whose discrete intervention times satisfy $\tau^{(n)}_j\downarrow \lceil t\rceil+j$ with $\mu^{(n)}_{x,j} = 1-\hat{p}_j$ for each $j=0,\dots,\lfloor T\rfloor-\lceil t\rceil-1$ and with $\mu_{x,c}^{(n)}\equiv 0$. This lower bound $\bunderline{V}_t$ then satisfies
	\begin{align}
		\bunderline{V}_t = \sum_{j=0}^{\lfloor T \rfloor - \lceil t \rceil -1} C_t( \lceil t\rceil +j, G^D_{\lceil t\rceil +j}) \,  (1-\hat{p}_j)  \,  \prod_{i < j} \, \hat{p}_i  \, . \label{ubound2}
	\end{align} \vspace{1mm}
\end{Proposition}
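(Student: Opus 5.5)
The idea is to condition on the financial filtration and reduce the problem to a pathwise deterministic optimisation over how the one-year death mass is placed inside each year. By Assumptions \ref{Assumption: cond indep}--\ref{Assumption: change of measure} and the tower property, $V^\mu_t = \mathbb{E}^\q\big[\int_t^T D(t,s)\,\widehat{H}(A_s,G^D_s)\,dF^\mu(s)\mid\mathcal{H}_t\big]$, where $D(t,s)=e^{-\int_t^s r_u du}$ and $F^\mu(s)=1-{}_{s-t}p^\mu_{x+t}$ is the (random) conditional death distribution induced by $\mu$. Constraint \eqref{const} fixes a.s. the increments $F^\mu(\lceil t\rceil+j+1)-F^\mu(\lceil t\rceil+j)=(1-\hat p_j)\prod_{i<j}\hat p_i$. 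Only the location of this mass inside $(\lceil t\rceil+j,\lceil t\rceil+j+1]$ is free, and the problem decouples year by year.

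Within year $j$, I would introduce the discounted contract price process $M_s:=D(t,s)\,C_s(s,G^D_s)$. At time $s$ this equals $D(t,s)\,\widehat{H}(A_s,G^D_s)$, so for a stopping time $\tau$, $\mathbb{E}^\q[D(t,\tau)\widehat H(A_\tau,G^D_\tau)\mid\mathcal{H}_t]=\mathbb{E}^\q[M_\tau\mid\mathcal{H}_t]$. The key step is to show that $s\mapsto \mathbb{E}^\q[D(t,s)\widehat H(A_s,G^D_s)\mid\mathcal H_s']$ is increasing, i.e. that $M$ is a $\q$-submartingale in its maturity argument. Equivalently, $s\mapsto C_t(s,G^D_s)$ is nondecreasing, and more strongly $D(t,s)\widehat H_s \le \mathbb{E}^\q[D(t,s')\widehat H_{s'}\mid\mathcal H_s]$ for $s\le s'$. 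Writing $\widehat H=G^D+(A-G^D)_+$, the call part is handled by Jensen's inequality: $D A$ is a martingale and $D(t,s)G^D_s$ is dominated in expectation by later values. The guarantee part $G^D_s P(t,s)=A_0e^{r_g s}P(t,s)$ has $s$-derivative $G^D_sP(t,s)(r_g-f(t,s))\ge0$ by the hypothesis $r_g\ge f(t,s)$. Condition \eqref{assup1} is what I expect to need to control the drift of $e^{r_g s}P(\cdot,s)$ in the Hull--White model after conditioning on $\mathcal{H}_u$, $u\le s$; expanding $\ln P$ from Proposition 3.1 gives an affine expression in $r$ and $\theta$ whose sign is exactly \eqref{assup1}. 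This monotonicity is the main obstacle. I would first try to verify it directly on the closed form of Proposition \ref{Prop price}, and only if that fails resort to the submartingale argument.

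Given monotonicity, the result follows. Let $m_j$ be the death mass of year $j$. Any admissible $\mu$, possibly with continuous parts and random jump times, places that mass at (random) times $s\le \lceil t\rceil+j+1$. By optional sampling for the submartingale, the contribution is at most $m_j\,\mathbb{E}^\q[M_{\lceil t\rceil+j+1}\mid\mathcal H_t]=m_j\,C_t(\lceil t\rceil+j+1,G^D_{\lceil t\rceil+j+1})$. Equality holds for $\bar\mu^*$, with $\mu_{x,c}\equiv0$ and a single jump $1-\hat p_j$ at the year end, which is admissible since it satisfies \eqref{const}. Summing over $j$ gives \eqref{lbound2}. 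Symmetrically, any admissible placement puts mass at times strictly greater than $\lceil t\rceil+j$, so the contribution is at least $m_j\,C_t(\lceil t\rceil+j,\cdot)$. This is strictly more unless all mass sits at $\lceil t\rceil+j$ itself, which would be an atom in the previous year and violate \eqref{const}; hence the infimum is not attained. The sequence $\tau^{(n)}_j=\lceil t\rceil+j+1/n$ with jumps $1-\hat p_j$ is admissible. Its value converges monotonically to \eqref{ubound2} by continuity of $s\mapsto C_t(s,G^D_s)$ and dominated convergence, mirroring Proposition \ref{propgmib}(ii).
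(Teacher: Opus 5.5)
\textbf{The gap is the submartingale step.} Your argument rests on $M_s=D(t,s)\,\widehat H(A_s,G^D_s)$ being a $\mathbb{Q}$-submartingale on each year, so that optional sampling bounds every $\mathbb{H}$-adapted placement of the death mass. This does not follow from the hypotheses, and in this Hull--White market it is false.

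Fix a year $(\lceil t\rceil+j,b]$ with $b=\lceil t\rceil+j+1$, and take $s$ in its interior. Using $\max(x,y)\le x+y$ and the martingale property of $D(t,\cdot)A$, one has $\mathbb{E}^{\mathbb{Q}}[M_b\mid\mathcal{H}_s]\le D(t,s)\big(A_s+G^D_bP(s,b)\big)$, while $M_s\ge D(t,s)G^D_s$. Now consider the event $E=\{A_s+G^D_bP(s,b)<G^D_s\}$, i.e.\ large $r_s$ (so $P(s,b)=e^{-B(s,b)r_s+\mathrm{const}}$ is small) together with small $A_s$. It has positive conditional probability, because $(r_s,\ln A_s)$ is jointly Gaussian and nondegenerate given $\mathcal{F}_t$ (e.g.\ for $\pi_S>0$). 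On $E$, $M_s>\mathbb{E}^{\mathbb{Q}}[M_b\mid\mathcal{H}_s]$.

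The hypotheses $r_g\ge f(t,\cdot)$ and \eqref{assup1} only involve the time-$t$ curve and $r_t$. They therefore cannot sign the conditional drift at future states, contrary to your guess about \eqref{assup1}.

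Your leg-by-leg justification also breaks. Write $X=D(t,s)A_s$, $K=D(t,s)G^D_s$, with primes for a later time $s'$. Jensen gives $\mathbb{E}[(X'-K')_+\mid\mathcal{H}_s]\ge(X-\mathbb{E}[K'\mid\mathcal{H}_s])_+$. But if $\mathbb{E}[K'\mid\mathcal{H}_s]\ge K$, which is exactly what the guarantee leg needs, this lower bound is $\le(X-K)_+$: the wrong direction.

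Finally, monotonicity of the deterministic map $s\mapsto C_t(s,G^D_s)$ is not equivalent to the submartingale property. It only controls deterministic placements. Your optional-sampling step, and its mirror image in the lower bound, need the conditional statement.

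\textbf{What the paper does instead.} It never uses a submartingale. It splits $\widehat H=G^D+(A-G^D)_+$.
\begin{itemize}
\item The guarantee leg is handled purely at time $t$ via $\partial_s\big(A_0e^{r_gs}P(t,s)\big)=A_0e^{r_gs}P(t,s)\,(r_g-f(t,s))\ge 0$.
\item \eqref{assup1} enters only through the cited no-early-exercise result for American calls (Battauz et al.\ (2022)). This makes the year end the maximiser of the call leg over $\mathcal{T}_{(\lceil t\rceil+j,\lceil t\rceil+j+1]}$.
\end{itemize}
From these it concludes that $s\mapsto C_t(s,G^D_s)$ is nondecreasing on each year and reuses the GMIB monotonicity argument. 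Part (ii) is read off from Proposition \ref{propgmib}(ii), using the same approximating sequence you propose. Your fallback target, monotonicity of $s\mapsto C_t(s,G^D_s)$, is therefore the right one.

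\textbf{A caveat that applies to both routes.} The event $E$ also shows that a single jump $1-\hat p_j$ at the stopping time $\tau=s\mathds{1}_E+b\mathds{1}_{E^c}$ is admissible. It gives $V^\mu_t-V^{\bar\mu^*}_t=(1-\hat p_j)\prod_{i<j}\hat p_i\,\mathbb{E}^{\mathbb{Q}}\big[\mathds{1}_E\big(M_s-\mathbb{E}^{\mathbb{Q}}[M_b\mid\mathcal{H}_s]\big)\mid\mathcal{H}_t\big]>0$ whenever $\hat p_j<1$. So no argument based on time-$t$ hypotheses can handle genuinely random placements. The paper's guarantee-leg step is itself only a deterministic-maturity statement. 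What either route actually supports is optimality among deterministic placements, which is where $\bar\mu^*$ and the approximating sequence live.
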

This case differs from the  GMIB \eqref{GMIB}, where the policyholder continuously receives benefits while alive. In contrast, we here optimize the single, lump-sump benefit paid at the time of death and thus directly the contract price $C_{t}(\tau, G^D_\tau)$, see the proof \ref{Proof: GMDB}. Moreover, Proposition \ref{Prop price} provides an analytical formula for $C_{t}(\tau, G^D_\tau)$, which allows us to efficiently compute \eqref{lbound2} and \eqref{ubound2}. We also note that the assumption $r_g \geq f(t,s)$ for all $s \in [t,T]$ is a strong but practically relevant for GMDB contracts and makes this guarantee  attractive for risk-averse policyholders. 
Although this proposition is stated under rather strong assumptions (necessary to obtain analytical expressions), these conditions can be relaxed but the corresponding optimal bounds then have to be computed numerically.  Indeed, for more general payoffs $\widehat{H}(A_s, G^D_s)$ or when the assumptions $r_g \geq f(t,s)$ and \eqref{assup1} do not hold, one can always use numerical techniques to solve the optimal stopping problems
\begin{equation} \label{eq: supp}\argmaxmin_{\tau \in \mathcal{T}_{(\lceil t\rceil+j,\lceil t\rceil+j+1]}}\mathbb{E}^{\mathbb{Q}} \left[ e^{-\int_{t}^{\tau} r_s ds } \, \widehat{H}(A_\tau,G^D_\tau)  \ \big| \ \mathcal{H}_t \right]
\end{equation}for each $j=0, \ldots, \lfloor T \rfloor-\lceil t\rceil -1$ as it is equivalent to the numerical pricing of $\lfloor T \rfloor - \lfloor t \rfloor$ American options; see \cite{cox1979option} and \cite{longstaff2001valuing}. In the case this optimal intervention time $\tau^*_j$ solving \eqref{eq: supp} is not unique, we can then again simply  choose arbitrarily  one such $\tau^*_j$  as we are only interested in practice in the bounds $\bar{V}_t$ (or $\bunderline{V}_t$), and not the mortality rate $\mu^*$ itself.
\\
\\
As mentioned, the strict matching condition \eqref{const} with the given life table $\eqref{lifetable}$ is a  strong setting, since in practice the observed mortality over each one-year period $j=0, \ldots, \lfloor T \rfloor-\lceil t\rceil -1$ will deviate from the tabulated $\hat{p}_j$. Theoretically, this almost-sure matching of the life tables also restricts the generality of the results in Propositions \ref{propgmab}--\ref{propgmib}--\ref{propgmdb} as we systematically obtain deterministic optimal mortality rates $\mu^*$.
Section \ref{Sec: bounds_relaxed} will then relax the set of admissible controls and consider a more general framework, where this strict a.s.\ matching of the life table will be replaced by a constraint in expectation. This will allow us to consider a more realistic framework where mortality rate trajectories can deviate from the life table while still being consistent on average. Combining Section \ref{Sec: bounds strict} and \ref{Sec: bounds_relaxed} will  provide robust tools for managing the mortality risk of general variable annuities.

\section{Bounds on variable annuities:  matching life tables in expectation} \label{Sec: bounds_relaxed}

In this section, we aim to derive worst- and best-case prices of variable annuities for mortality rate models that remain consistent with the initial life table but under weaker constraints. As mentioned above, we adopt a less restrictive constraint that requires only a matching of the initial life table in expectation. Moreover, the mortality intensity is assumed to be a general progressively measurable process with no explicit discrete component. Therefore, in this section, consistency is ensured by assuming that, for a given age $x$, the mortality intensity $\mu_x$ is now a general positive progressively measurable process that satisfies the following constraints from \eqref{surv3}:
\begin{equation}\label{eq: expectation_constraints}
     \E^{\mathbb{P}}\left(e^{-\int_0^{j}\mu_x(s)ds}\right)={}_j\hat{p}_{x},\quad j=1,\cdots,n.
\end{equation}
where ${}_j\hat{p}_x$ is again a tabulated value defined by,
$${}_j\hat{p}_{x} := \prod_{i=0}^{j-1} \hat{p}_i $$
with $\hat{p}_i :=  {}_1p_{x+i} = l_{x+i +1}/  l_{x+i}$, see \eqref{eq: table}--\eqref{termproba}. In that context, under Assumption \ref{Assumption: change of measure}, we aim to solve the following stochastic control problems associated to the worst- and best-case prices defined as\footnote{In \vspace{-1mm}contrast to the previous section, we adopt a general formulation based on the full combined product, since the dynamic programming approach used to solve the control problem would\vspace{-1mm} be essentially the same if each product (GMAB, GMIB, GMDB) were formulated separately. See Section \ref{Sec: bounds strict} for the definition of each payoff $H$, $\widetilde{H}$, $\widehat{H}$.} 
 {\small \begin{align}\label{eq:worst_case_price_match_expectation} 
\sup_{\mu_x \in \mathcal{A}} \mathbb{E}^{\mathbb{Q}} \Bigg[ \,  e^{-\int_0^T (r_u+\mu_x(u))du}H(A_T,G^A_T)+\int_0^T e^{-\int_0^s (r_u+\mu_x(u)) du}\left(  \widetilde{H}(A_s,G^I_s) +\mu_x(s) \widehat{H}(A_s,G^D_s) \right)ds \Bigg],
\end{align}}
and 
 {\small \begin{align}\label{eq:best_case_price_match_expectation}
\inf_{\mu_x \in \mathcal{A}} \mathbb{E}^{\mathbb{Q}} \Bigg[ \,  e^{-\int_0^T (r_u+\mu_x(u))du}H(A_T,G^A_T)+\int_0^T e^{-\int_0^s (r_u+\mu_x(u)) du}\left(  \widetilde{H}(A_s,G^I_s) +\mu_x(s) \widehat{H}(A_s,G^D_s) \right)ds \Bigg],
\end{align}}
with $T=n$, and
\begin{equation} \label{eq:admissible_set_match_survival_proba}
    \small \mathcal{A}:=\left\{ \mu_x:[0,T]\times\Omega\rightarrow\mathbb{R}^+~\text{prog. measurable process such that } \E^{\mathbb{P}}\left(e^{-\int_0^{j}\mu_x(s)ds}\right)={}_j\hat{p}_{x},\quad j=1,\cdots,n.\right\}. 
\end{equation}
Depending on the structure of the payoff functional, those problems are potentially ill-posed or not tractable. That is why we decide to consider a regularized version of those problems, by restricting the set of admissible controls. For $\delta>0$, we introduce the admissible set of bounded controls $\mathcal{A}^\delta_{bounded}$ defined by
\begin{equation} \label{eq:admissible_set_pathwise}
\small
    \mathcal{A}^\delta_{bounded}:=\mathcal{A}\cap\left\{ \mu_x:[0,T]\times\Omega\rightarrow\mathbb{R}^+~\text{such that for all } t\leq T,~ ({\mu_{x,a}}(t)-\delta)_+\leq \mu_x(t)\leq {\mu_{x,a}}(t)+\delta~a.s.\right\}, 
\end{equation}
with $\mu_{x,a}(t)$ an a priori time-dependent mortality rate consistent with the initial observed life table such that $e^{-\int_0^{j} \mu_{x,a}(t) dt}={}_j\hat{p}_{x}, $ for $j=1,\cdots,n$\footnote{Typically, $\mu_{x,a}$ is obtained by a fractional age assumption.}. In this setting, the stochastic control problems become
{\small \begin{align}\label{eq:wors_case_price_pathwise_reg}
\hspace{-1mm}\sup_{\mu_x \in \mathcal{A}^\delta_{bounded}} \mathbb{E}^{\mathbb{Q}} \Bigg[ \, e^{-\int_0^T (r_u+\mu_x(u))du}H(A_T,G_T^A)+\int_0^T \hspace{-1mm}e^{-\int_0^s (r_u+\mu_x(u)) du}\left(  \widetilde{H}(A_s,G^I_s) +\mu_x(s) \widehat{H}(A_s,G_s^D)     \right)ds \Bigg],
\end{align}}and,
{\small \begin{align}\label{eq:best_case_price_pathwise_reg}
\hspace{-1mm}\inf_{\mu_x \in \mathcal{A}^\delta_{bounded}} \mathbb{E}^{\mathbb{Q}} \Bigg[ \, e^{-\int_0^T (r_u+\mu_x(u))du}H(A_T,G_T^A)+\int_0^T \hspace{-1mm} e^{-\int_0^s (r_u+\mu_x(u)) du}\left(  \widetilde{H}(A_s,G_s^I) +\mu_x(s) \widehat{H}(A_s,G_s^D)     \right)ds \Bigg].
\end{align}}Since we assume that the controls in the set $\mathcal{A}^\delta_{bounded}$ are bounded below and above by time-dependent functions consistent with the initial life table then, we can directly deduce that the problems \eqref{eq:wors_case_price_pathwise_reg} and \eqref{eq:best_case_price_pathwise_reg} are well-posed. Moreover, Proposition \ref{prop: convergence_delta} below shows that as $\delta \to \infty$, the solutions to the regularized problems converge to the corresponding solutions of the original worst-case and best-case pricing problems. \\

The regularized control problems in \eqref{eq:wors_case_price_pathwise_reg} and \eqref{eq:best_case_price_pathwise_reg} share similarities with the framework of \cite{li2011uncertain}. However, rather than only assuming boundedness of the controlled process, we introduce the expectation constraints \eqref{eq: expectation_constraints} to ensure consistency with observed life tables. In this sense, the problems are also closely connected to \cite{avellaneda1996managing}, who derived bounds on financial derivative prices under volatility uncertainty while maintaining consistency with observed vanilla option prices. For the sake of clarity, we will mainly consider the problem related to the worst-case price, but similar developments can be deduced in the same manner for the problem related to the best case price. Using a dual approach, as done for instance in \cite{motte2025efficient}, we can show that the stochastic control problem \eqref{eq:wors_case_price_pathwise_reg} is equivalent to the following problem (see proof of Proposition \ref{prop: worst_case_price_regu} below):
{\small \begin{equation}\label{eq:wors_case_price_pathwise_reg_lagrangian_N}
\begin{aligned}
\inf_{\boldsymbol{\lambda}\in \mathbb{R}^n}\hspace{-0.5mm}\sup_{\mu_x \in \tilde{\mathcal{A}}^\delta_{bounded}} \hspace{-2mm}\mathbb{E}^{\mathbb{Q}} \Bigg[  &e^{-\int_0^T (r_u+\mu_x(u))du}H(A_T,G_T^A) \hspace{-0.5mm}+ \hspace{-0.8mm}\int_0^T \hspace{-1.4mm}e^{-\int_0^s (r_u+\mu_x(u)) du} \hspace{-0.5mm}\left(  \widetilde{H}(A_s,G_s^I) \hspace{-0.5mm}+ \hspace{-0.5mm}\mu_x(s) \widehat{H}(A_s,G_s^D) \hspace{-0.5mm}\right)ds \Bigg]\\
&-\sum_{j=1}^n\lambda_j \left[\E^{\mathbb{P}}\left(e^{-\int_0^{j}\mu_x(s)ds}\right)-{}_j\hat{p}_{x}\right],
\end{aligned}
\end{equation}}

with $\boldsymbol{\lambda}:=(\lambda_1,\cdots,\lambda_n)\in \mathbb{R}^n$ and, for $\delta>0$,
\begin{equation*} 
\tilde{\mathcal{A}}^\delta_{bounded}:=\left\{ \mu_x:[0,T]\times\Omega\rightarrow\mathbb{R}^+~\text{s.t.\ for all } t\leq T,~ ({\mu_{x,a}}(t)-\delta)_+\leq \mu_x(t)\leq {\mu_{x,a}}(t)+\delta~a.s.\right\}. 
\end{equation*}

Let us show how to solve the inner problem using a dynamic programming approach. To this end, we assume that the unique change of measure process $(Z_t)_{ 0 \leq t\leq T}$ defined by  \eqref{eq:unique_financial_density_process} is given by
\begin{equation}\label{eq:dynamic_Z_MMM} 
    Z_t:=\exp\left(-\int_0^t\left(\frac{\mu_S-r_s}{\sigma_S}\right)d{W}_s^{1,\p}-\frac{1}{2}\int_0^t\left(\frac{\mu_S-r_s}{\sigma_S}\right)^2ds\right), \quad  0 \leq t\leq T, 
\end{equation}
with $\mu_S$ the drift of the stock under the real measure $\p$ and ${W}^{1,\p}$ a standard Brownian motion defined on $\p$ such that
\begin{equation*}
    {W}^{1,\p}_t:=W^{1,\q}_t - \int_0^t \left(\frac{\mu_S-r_s}{\sigma_S}\right) ds, \quad 0 \leq t\leq T, 
\end{equation*}
with ${W}^{1,\q}$ the standard Brownian motion defined on $\q$ from Section \ref{sec: fin}. 
\begin{Remark}
In our framework, we consider a complete financial market, which guarantees the uniqueness of the risk-neutral measure $\q$. 
Under $\q$, the stock price $(S_t)_{0 \leq t\leq T}$ has a drift equal to the short rate $(r_t)_{0 \leq t\leq T}$, as required for arbitrage-free pricing.
For the short rate $(r_t)_{0 \leq t\leq T}$, the change of measure from the physical measure $\p$ to $\mathbb{Q}$ generally involves a market price of risk $(\lambda^r_t)_{0 \leq t\leq T}$, which is calibrated in practice to match observed prices of traded instruments, such as zero-coupon bonds. For simplicity and consistency with Section \ref{Sec: bounds strict}, we assume that $\lambda^r_t = 0$ for all $0\leq t\leq T$ but other choices of $(\lambda^r_t)_{0 \leq t\leq T}$ could be made. 
\end{Remark} 

Let us define the process $\tilde{Z}=(\tilde{Z}_t)_{0 \leq t\leq T}$ as
\begin{equation}\label{eq:def_Z_tilde}
    \tilde{Z}_t:=e^{-\int_0^t r_s ds} Z_t, \quad 0 \leq t\leq T, 
\end{equation}
with $Z$ defined by \eqref{eq:dynamic_Z_MMM}. For $\boldsymbol{\lambda}\in \mathbb{R}^n$, the inner problem in \eqref{eq:wors_case_price_pathwise_reg_lagrangian_N} can be rewritten as 

\begin{equation}\label{eq:inner_prob_wors_case_price_pathwise_reg_lagrangian_N}
\begin{aligned}
\sup_{\mu_x \in \tilde{\mathcal{A}}^\delta_{bounded}} \mathbb{E}^{\mathbb{Q}} \Bigg[ \, &e^{-\int_0^T (r_u+\mu_x(u))du}H(A_T,G_T^A)-\sum_{j=1}^n\frac{\lambda_j}{\tilde{Z}_{j}}e^{-\int_0^{j}(r_s+\mu_x(s))ds}\\
& +\int_0^T e^{-\int_0^s (r_u+\mu_x(u)) du}\left(  \widetilde{H}(A_s,G_s^I) +\mu_x(s) \widehat{H}(A_s,G_s^D) \right)ds\Bigg] .
\end{aligned}
\end{equation}

To solve this control problem, for $\boldsymbol{\lambda}\in \mathbb{R}^n$, we introduce the following value function
\begin{align} \label{eq:sup_stoch_problem_path_regu_n_constraint}
\bar{v}(t,a,r,z; \boldsymbol{\lambda})= \hspace{-1mm}\sup_{\mu_x \in \tilde{\mathcal{A}}^\delta_{bounded}(t,T)} \hspace{-1mm}&\mathbb{E}_{t,a,r,z}^{\mathbb{Q}} \Bigg[ \, e^{-\int_t^T (r_u+\mu_x(u))du}F(A_T,G_T^A)-\sum_{j=\lceil t\rceil}^n\frac{\lambda_j}{\tilde{Z}_{j}}e^{-\int_t^{j}(r_s+\mu_x(s))ds} \nonumber \\
&\hspace{-0.2mm}+\hspace{-0.5mm}\int_t^T \hspace{-1.2mm}e^{-\int_t^s (r_u+\mu_x(u)) du}\hspace{-0.5mm}\left(  \widetilde{H}(A_s,G_s^I) +\mu_x(s) H(A_s,G_s^D) \right)\hspace{-0.3mm}ds\Bigg], \hspace{-1mm}
\end{align}
where $\E^{\mathbb{Q}}_{t,a,r,z}(.)$ denotes the conditional expectation given $A_t=a$, $r_t=r$, $\tilde{Z}_t=z$, and
\begin{equation}
\small
\mathcal{A}^\delta_{bounded}(t,T):=\left\{ \mu_x:[t,T]\times\Omega\rightarrow\mathbb{R}^+~\text{s.t.\ for all } s\in[t,T],~ ({\mu_{x,a}}(s)-\delta)_+\leq \mu_x(t)\leq {\mu_{x,a}}(s)+\delta~a.s.\right\}. 
\end{equation}
Let us define $A^\delta(t)$ as
\begin{equation}
    A^\delta(t):=[(\mu_{x,a}(t)-\delta)_+,\mu_{x,a}(t)+\delta], \quad t\leq T.  
\end{equation}Classical results from dynamic programming (see for instance \cite{touzi2012optimal}) suggest that the value function associated to the inner problem solves the following Hamilton-Jacobi-Bellman (HJB) equation, for $t\in ({j-1},j)$, $j=1,...,n$:
\begin{equation}\label{eq:HJB_sup_n}
    \begin{aligned}
    &\sup_{\mu_x\in A^\delta(t)}\left[\mathcal{L}_{t,a,r,z}\bar{v}(t,a,r,z;\boldsymbol{\lambda})+\widetilde{H}(a,G_t^I)+ \mu_x \left(\widehat{H}(a,G_t^D) -\bar{v}(t,a,r,z; \boldsymbol{\lambda})\right) \right]=0,
\end{aligned}
\end{equation}
with boundary conditions
\begin{equation}\label{eq:boundaries_HJB}
    \begin{aligned}
    &\bar{v}(n,a,r,z; \boldsymbol{\lambda})=H(a,G_{t_n}^A)-\frac{\lambda_n}{z},\\
    &\bar{v}({j}^-,a,r,z; \boldsymbol{\lambda})=\bar{v}({j}^+,a,r,z; \boldsymbol{\lambda})-\frac{\lambda_j}{z},\quad j=1,...,n-1,
    \end{aligned}
\end{equation}

where $\mathcal{L}_{t,a,r,z}$ is the classical generator associated with $(A_t)_{0 \leq t\leq T}$ given by \eqref{eq:A_t}, $(r_t)_{0 \leq t\leq T}$ given by \eqref{dynamics}, and $(\tilde{Z}_t)_{0 \leq t\leq T}$ given by \eqref{eq:def_Z_tilde}.
Since \eqref{eq:HJB_sup_n} depends linearly on $\mu$, we deduce that, for $j=1,\cdots,n$ and $t\in ({j-1},{j})$,
\begin{align*}
    \sup_{ \mu_x\in A^\delta(t)} &\mu_x \left(\widehat{H}(a,G_t^D)- \bar{v}(t,a,r,z; \boldsymbol{\lambda}) \right) =\\
    &\begin{cases}
\left(\widehat{H}(a,G_t^D)-\bar{v}(t,a,r,z; \boldsymbol{\lambda}) \right)\left(\mu_{x,a}(t)-\delta\right)_+, & \text{if } \widehat{H}(a,G_t^D) < \bar{v}(t,a,r,z; \boldsymbol{\lambda}) ,\\
\left(\widehat{H}(a,G_t^D)-\bar{v}(t,a,r,z; \boldsymbol{\lambda}) \right)\left(\mu_{x,a}(t)+\delta\right),  & \text{if } \widehat{H}(a,G_t^D) \geq \bar{v}(t,a,r,z; \boldsymbol{\lambda}).
\end{cases} 
\end{align*}
Therefore, the maximizer $\bar{\mu}_x^*$ is given by, for $j=1,\cdots,n$ and $t\in ({j-1},{j})$,
\begin{align}
    &{\bar{\mu}}_x^*(t,a,r,z; \boldsymbol{\lambda})=\begin{cases}
(\mu_{x,a}(t)-\delta)_+, & \text{if } \widehat{H}(a,G_t^D) < \bar{v}(t,a,r,z;\boldsymbol{\lambda}) ,\\
(\mu_{x,a}(t)+\delta),  & \text{if } \widehat{H}(a,G_t^D) \geq \bar{v}(t,a,r,z; \boldsymbol{\lambda}).
\end{cases}\label{eq:max_mu_n}
\end{align}

Plugging this into the HJB equation \eqref{eq:HJB_sup_n}, we deduce that candidate value function satisfies the following nonlinear PDE:
\begin{equation}\label{eq:HJB_sup_bis_n}
    \begin{aligned}
    &\mathcal{L}_{t,a,r,z}\bar{v}(t,a,r,z; \boldsymbol{\lambda}) + \widetilde{H}(a,G_t^I)+ {\bar{\mu}}_x^*(t,a,r,z; \boldsymbol{\lambda})\left(\widehat{H}(a,G_t^D) -\bar{v}(t,a,r,z; \boldsymbol{\lambda}) \right)=0,
\end{aligned}
\end{equation}
with boundary conditions given by \eqref{eq:boundaries_HJB}. \\

Proposition \ref{prop: worst_case_price_regu} shows that under mild regularity assumptions, the candidate value function solution of the HJB equation is effectively the solution to the regularized inner control problem associated to the worst case price. Proposition \ref{prop: best_case_price} derives an analogue result for the the best case problem. Finally, Proposition \ref{prop: convergence_delta} shows that the solutions of the regularized problems converge toward the solutions of the original worst- and best-case problems given by \eqref{eq:worst_case_price_match_expectation} and \eqref{eq:best_case_price_match_expectation}. Proofs are once again deferred in \ref{sec: appendix}.

\begin{Proposition}\label{prop: worst_case_price_regu}
    For any $\bm{\lambda}\in \mathbb{R}^n$, let assume that there exists $\bar{v}(t,a,r,z; \boldsymbol{\lambda})\in C^{1,2,2,2}$ solution to \eqref{eq:HJB_sup_bis_n}. Then, $\bar{v}(t,a,r,z; \boldsymbol{\lambda})$ solves the stochastic control problem \eqref{eq:sup_stoch_problem_path_regu_n_constraint}, and the associated optimal mortality rate is given by \eqref{eq:max_mu_n}. In particular, for $\boldsymbol{\lambda}\in \mathbb{R}^n$, $\bar{v}(0,A_0,r_0,Z_0; \boldsymbol{\lambda})$ solves the initial inner problem such that
   {\small \begin{align*}
        &\bar{v}(0,A_0,r_0,Z_0;\boldsymbol{\lambda})\\
        &=\hspace{-0.5mm} \sup_{\mu_x \in \tilde{\mathcal{A}}^\delta_{bounded}} \hspace{-1mm}\mathbb{E}^{\mathbb{Q}} \Bigg[  e^{-\int_0^T (r_u+\mu_x(u))du}H(A_T,G_T^A)+ \hspace{-0.7mm}\int_0^T \hspace{-1mm} e^{-\int_0^s (r_u+\mu_x(u)) du} \hspace{-0.4mm}\left(  \widetilde{H}(A_s,G_s^I) +\mu_x(s) \widehat{H}(A_s,G_s^D) \right) \hspace{-0.4mm}ds \Bigg]\\
        &\hspace{20mm}-\sum_{j=1}^n\lambda_j\E^{\mathbb{P}}\left(e^{-\int_0^{j} \mu_x(s)ds}\right). 
    \end{align*}}Moreover, the infimum of $\boldsymbol{\lambda}\mapsto\left(\bar{v}(0,A_0,r_0,Z_0; \boldsymbol{\lambda})+\sum_{j=1}^n\lambda_j \,{}_j\hat{p}_{x}\right)$ over $\boldsymbol{\lambda}\in \mathbb{R}^n$ is attained at some $\boldsymbol{\lambda}^*\in \mathbb{R}^n$; hence 
    \begin{equation}
        \bar{v}_*(0,A_0,r_0,Z_0):=\bar{v}(0,A_0,r_0,Z_0; \boldsymbol{\lambda}^*)+\sum_{j=1}^n\lambda^*_j \,{}_j\hat{p}_{x}=\min_{\boldsymbol{\lambda}\in \mathbb{R}^n} \left[\bar{v}(0,A_0,r_0,Z_0; \boldsymbol{\lambda})+\sum_{j=1}^n\lambda_j \, {}_j\hat{p}_{x}\right], 
    \end{equation}
    is well-defined and is the solution of the regularized worst-case price problem 
    \begin{equation}
        \begin{aligned}
        \bar{v}_*(0,A_0,r_0,Z_0)= \sup_{\mu_x \in \mathcal{A}^\delta_{bounded}} \mathbb{E}^{\mathbb{Q}} \Bigg[ &\, e^{-\int_0^T (r_u+\mu_x(u))du}H(A_T,G_T^A)\\
        &+\int_0^T e^{-\int_0^s (r_u+\mu_x(u)) du}\left(  \widetilde{H}(A_s,G_s^I) +\mu_x(s) \widehat{H}(A_s,G_s^D) \right)ds \Bigg].
        \end{aligned}
    \end{equation}
\end{Proposition}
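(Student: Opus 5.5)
The proof has two parts. First, a verification argument for the inner problem with fixed $\boldsymbol{\lambda}$. Second, a Lagrangian duality argument that removes the multipliers.

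\textbf{Step 1: verification for fixed $\boldsymbol{\lambda}$.} Fix $\boldsymbol{\lambda}\in\mathbb{R}^n$ and an arbitrary $\mu_x\in\tilde{\mathcal{A}}^\delta_{bounded}(t,T)$. Define the discount factor $D_s:=e^{-\int_t^s(r_u+\mu_x(u))du}$. On each interval $(j-1,j)$, apply It\^o's formula to $D_s\,\bar v(s,A_s,r_s,\tilde Z_s;\boldsymbol{\lambda})$, using that $\bar v\in C^{1,2,2,2}$:
\[
d\big(D_s\bar v\big)=D_s\big(\mathcal{L}\bar v-\mu_x(s)\bar v\big)ds+D_s\,dM_s ,
\]
where $M$ is a local martingale. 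By \eqref{eq:HJB_sup_n}, for every admissible value of $\mu_x(s)\in A^\delta(s)$,
\[
\mathcal{L}\bar v-\mu_x\bar v\le -\widetilde H-\mu_x\widehat H .
\]
Equality holds for the feedback \eqref{eq:max_mu_n}.

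Next, sum the increments over the intervals $(j-1,j)$. At each integer time, the jump condition in \eqref{eq:boundaries_HJB} produces the term $-D_j\lambda_j/\tilde Z_j$. The terminal condition produces $H(A_T,G^A_T)-\lambda_n/\tilde Z_n$. Localize $M$ with a sequence of stopping times and take $\mathbb{Q}$-expectations.

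To pass to the limit in the localization, use dominated convergence. The ingredients are that $\mu_x$ is bounded by $\mu_{x,a}+\delta$, that the payoffs have linear growth in $a$, and that $A$, $r$ and $1/\tilde Z$ have finite moments in the Hull--White/Black--Scholes setting. This yields $\bar v\ge$ the objective for every admissible control, with equality for the feedback control \eqref{eq:max_mu_n}.

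The feedback is of bang-bang type, so one must check that it defines an admissible progressively measurable control. A strong solution of the SDE is not needed: the dynamics of $(A,r,\tilde Z)$ do not depend on $\mu$, so $\bar\mu^*_x(s,A_s,r_s,\tilde Z_s)$ is directly a measurable functional of the state.

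Finally, rewrite the Lagrange term. By \eqref{eq:def_Z_tilde} and Assumption \ref{Assumption: change of measure},
\[
\mathbb{E}^{\mathbb{Q}}\Big[\tfrac{1}{\tilde Z_j}e^{-\int_0^j(r+\mu_x)}\Big]=\mathbb{E}^{\mathbb{Q}}\Big[Z_j^{-1}e^{-\int_0^j\mu_x}\Big]=\mathbb{E}^{\mathbb{P}}\Big[e^{-\int_0^j\mu_x}\Big].
\]
This holds because $\mu_x$ is $\mathcal{H}_j$-measurable. Evaluating at $t=0$ gives the first identity of the statement.

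\textbf{Step 2: duality.} Write $J(\mu)$ for the objective and $g_j(\mu):=\mathbb{E}^{\mathbb{P}}[e^{-\int_0^j\mu_x}]-{}_j\hat p_x$ for the constraints. Set $\Phi(\boldsymbol{\lambda}):=\bar v(0,\cdot;\boldsymbol{\lambda})+\sum_j\lambda_j\,{}_j\hat p_x=\sup_\mu\{J(\mu)-\boldsymbol{\lambda}\cdot g(\mu)\}$.

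Weak duality, $\inf_{\boldsymbol{\lambda}}\Phi\ge\sup_{g=0}J$, is immediate. The remaining tasks are to show that the infimum of $\Phi$ is attained and that there is no duality gap.

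\emph{Attainment.} $\Phi$ is convex and lower semicontinuous as a supremum of affine functions of $\boldsymbol{\lambda}$. For coercivity, assume a Slater-type condition: $0$ lies in the interior of the convex set
\[
\mathcal{C}:=\{g(\mu):\mu\in\tilde{\mathcal{A}}^\delta_{bounded}\}.
\]
This holds since $\mu_{x,a}$ is feasible, and the perturbations $\mu_{x,a}\pm\eta\,\mathds{1}_{(j-1,j]}$ with $\eta<\delta$ move each $g_j$ in both directions. Treat the case $\mu_{x,a}<\delta$, where the lower bound $(\mu_{x,a}-\delta)_+=0$ is active, with an upward perturbation plus a rescaling. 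Then $\Phi(\boldsymbol{\lambda})\ge J(\mu)-\boldsymbol{\lambda}\cdot g(\mu)$ for suitable $\mu$ with $g(\mu)=-\epsilon\,\boldsymbol{\lambda}/|\boldsymbol{\lambda}|$, which gives $\Phi(\boldsymbol{\lambda})\ge c+\epsilon|\boldsymbol{\lambda}|$. Hence a minimizer $\boldsymbol{\lambda}^*$ exists.

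\emph{No duality gap (main obstacle).} Convexity of $\mathcal{C}$ is not automatic, because $\mu\mapsto e^{-\int\mu}$ is not affine. I would first try to reparametrize by relaxed (randomized) controls: mixing two admissible $\mu$'s via an independent $\mathcal{H}_0$-measurable coin preserves admissibility. Both $J$ and $g$ are linear under such mixtures, so the image set
\[
\{(J(\mu),g(\mu))\}
\]
is convex. A standard separating hyperplane argument (Luenberger's Lagrange duality theorem) then gives $\inf_{\boldsymbol{\lambda}}\Phi=\sup_{g=0}J$.

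If a randomization device is unavailable in $\mathcal{H}_0$, the alternative is a time-splitting mixture within each year, at the cost of an approximation argument. Finally, combining the Slater point with the minimizer $\boldsymbol{\lambda}^*$ yields $\bar v_*(0,A_0,r_0,Z_0)=\sup_{\mathcal{A}^\delta_{bounded}}J$, as claimed.
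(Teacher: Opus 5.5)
Your outline is correct and uses the same two ingredients as the paper: a verification theorem for the inner problem at fixed $\boldsymbol{\lambda}$, then Lagrangian duality. You close the duality step by a genuinely different argument.

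\textbf{How the paper argues.} It obtains strong duality through primal attainment. It takes the inner maximizer $\mu^*_x(\cdot;\boldsymbol{\lambda})$ and asserts, from convexity, that the dual function has a minimizer $\boldsymbol{\lambda}^*$. A first-order condition at $\boldsymbol{\lambda}^*$ then gives $\mathbb{E}^{\mathbb{P}}[e^{-\int_0^j\mu^*_x(s;\boldsymbol{\lambda}^*)ds}]={}_j\hat p_x$ for all $j$, so the feedback \eqref{eq:max_mu_n} at $\boldsymbol{\lambda}^*$ is itself feasible and optimal. The verification step is only cited. When this works it exhibits an optimal constrained control in feedback form. However, it tacitly needs the dual function to be differentiable at $\boldsymbol{\lambda}^*$, with gradient equal to the constraint residual of that particular maximizer. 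This is delicate for bang-bang feedbacks with ties.

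\textbf{What your route buys.} You use coercivity from a Slater point to get existence of $\boldsymbol{\lambda}^*$, and a separating hyperplane on a convex image set to rule out a duality gap. This proves exactly the stated value identity without needing any specific maximizer to be feasible. The price is that you do not exhibit an optimal constrained control. You also make explicit the It\^o/localization argument, the jump terms at integer times, and the admissibility of the bang-bang feedback. You rightly note that the state dynamics do not depend on $\mu$.

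\textbf{Points to tighten.}

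First, you need neither an independent coin in $\mathcal{H}_0$ (which the paper's filtration does not provide) nor the vague time-splitting fallback. Set $S_t:=e^{-\int_0^t\mu_x(u)du}$, so that $\mu_x(s)S_s=-\dot S_s$. Then both the objective and the constraint map are affine in $S$. If $S^1,S^2$ come from controls in $\tilde{\mathcal{A}}^\delta_{bounded}$, then $S=\alpha S^1+(1-\alpha)S^2$ has intensity $-\dot S/S=w\mu^1_x+(1-w)\mu^2_x$ with $w=\alpha S^1/S\in[0,1]$ progressive. That intensity again takes values in $A^\delta(t)$ and reproduces $S$. Hence $\{(J(\mu),g(\mu))\}$ is convex for any filtration, with no randomization. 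This also supplies the convexity that the paper's first-order-condition argument implicitly relies on.

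Second, your Slater point needs $0<\hat p_j<1$ for every $j$. Otherwise $\mu_{x,a}$ vanishes a.e.\ on some year, or is infinite, and $g_j$ cannot be moved in both directions.

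Third, passing to the limit in the localization requires uniform integrability of $D_{\tau_k}\bar v(\tau_k,A_{\tau_k},r_{\tau_k},\tilde Z_{\tau_k};\boldsymbol{\lambda})$. That needs a growth bound on $\bar v$ itself, say polynomial in $a$, $r$ and $1/z$, not only on the payoffs. This hypothesis is missing from the statement as well and should be added.
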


\begin{Proposition}\label{prop: best_case_price}
    For any $\bm{\lambda}\in \mathbb{R}^n$, let assume that, there exists $\bunderline{v}(t,a,r,z; \boldsymbol{\lambda})\in C^{1,2,2,2}$ solution to the following nonlinear PDE, for $t\in ({j-1},j),$ $j=1,...,n$,
    \begin{equation}\label{eq:HJB_inf_bis_n}
    \begin{aligned}
    &\mathcal{L}_{t,a,r,z}\bunderline{v}(t,a,r,z; \boldsymbol{\lambda}) + \widetilde{H}(a,G_t^I)+ {\bunderline{\mu}}_x^*(t,a,r,z; \boldsymbol{\lambda})\left(\widehat{H}(a,G_t^D) -\bunderline{v}(t,a,r,z; \boldsymbol{\lambda}) \right)=0,
    \end{aligned}
    \end{equation}
    with 
    \begin{align}
    &{\bunderline{\mu}}_x^*(t,a,r,z; \boldsymbol{\lambda})=\begin{cases}
    (\mu_{x,a}(t)-\delta)_+, & \text{if } \widehat{H}(a,G_t^D) > \bunderline{v}(t,a,r,z;\boldsymbol{\lambda}) ,\\
    (\mu_{x,a}(t)+\delta),  & \text{if } \widehat{H}(a,G_t^D) \leq \bunderline{v}(t,a,r,z; \boldsymbol{\lambda}),
    \end{cases}\label{eq:max_mu_n_best_case}
    \end{align}
    that satisfies the following boundary conditions
    \begin{equation}
        \begin{aligned}
        &\bunderline{v}(n,a,r,z; \boldsymbol{\lambda})=H(a,G_{t_n}^A)+\frac{\lambda_n}{z}, \\
        &\bunderline{v}(j^-,a,r,z; \boldsymbol{\lambda})=\bunderline{v}(j^+,a,r,z; \boldsymbol{\lambda})+\frac{\lambda_j}{z},\quad j=1,...,n-1. 
        \end{aligned} 
    \end{equation}
    Then, for $\boldsymbol{\lambda}\in \mathbb{R}^n$, $\bunderline{v}(0,A_0,r_0,Z_0; \boldsymbol{\lambda})$ solves the initial inner problem such that
   {\small \begin{align*}
        &\bunderline{v}(0,A_0,r_0,Z_0;\boldsymbol{\lambda}) \\
        &= \hspace{-0.7mm} \inf_{\mu_x \in \tilde{\mathcal{A}}^\delta_{bounded}} \hspace{-1.3mm}\mathbb{E}^{\mathbb{Q}} \Bigg[  e^{-\int_0^T (r_u+\mu_x(u))du}H(A_T,G_T^A)
    +\hspace{-0.5mm}\int_0^T \hspace{-1.1mm} e^{-\int_0^s (r_u+\mu_x(u)) du} \hspace{-0.25mm}\left(  \widetilde{H}(A_s,G_s^I) +\mu_x(s) \widehat{H}(A_s,G_s^D) \right) \hspace{-0.25mm}ds \Bigg]\\
        &\hspace{20mm}+\sum_{j=1}^n\lambda_j\E^{\mathbb{P}}\left(e^{-\int_0^{j}\mu_x(s)ds}\right). 
    \end{align*}}Moreover, the supremum of $\boldsymbol{\lambda}\mapsto\left(\bunderline{v}(0,A_0,r_0,Z_0; \boldsymbol{\lambda})-\sum_{j=1}^n\lambda_j \, {}_j\hat{p}_{x}\right)$ over $\boldsymbol{\lambda}\in \mathbb{R}^n$ is attained at some $\boldsymbol{\lambda}^*\in \mathbb{R}^n$; hence 
    \begin{equation}
        \bunderline{v}_*(0,A_0,r_0,Z_0):=\bunderline{v}(0,A_0,r_0,Z_0; \boldsymbol{\lambda}^*)-\sum_{j=1}^n\lambda^*_j \,{}_j\hat{p}_{x}=\max_{\boldsymbol{\lambda}>0} \left[\bunderline{v}(0,A_0,r_0,Z_0; \boldsymbol{\lambda})-\sum_{j=1}^n\lambda_j {}_j\hat{p}_{x}\right], 
    \end{equation}
    is well-defined and is the solution of the regularized best-case price problem 
    \begin{equation}
        \begin{aligned}
        \bunderline{v}_*(0,A_0,r_0,Z_0)= \inf_{\mu_x \in \mathcal{A}^\delta_{bounded}} \mathbb{E}^{\mathbb{Q}} \Bigg[ &\, e^{-\int_0^T (r_u+\mu_x(u))du}H(A_T,G_T^A)\\
        &+\int_0^T e^{-\int_0^s (r_u+\mu_x(u)) du}\left(  \widetilde{H}(A_s,G_s^I) +\mu_x(s) \widehat{H}(A_s,G_s^D) \right)ds \Bigg].
        \end{aligned}
    \end{equation}
\end{Proposition}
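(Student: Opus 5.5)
The plan mirrors the worst-case argument of Proposition \ref{prop: worst_case_price_regu}, with the roles of sup and inf (and the sign of the multipliers) exchanged. There are three steps: a verification theorem for the inner problem at fixed $\boldsymbol{\lambda}$, a duality step exchanging $\inf_\mu$ and $\sup_{\boldsymbol{\lambda}}$, and existence of a maximizing $\boldsymbol{\lambda}^*$.

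\textbf{Step 1: verification for the inner problem.} Fix $\boldsymbol{\lambda}\in\mathbb{R}^n$ and an arbitrary $\mu_x\in\tilde{\mathcal{A}}^\delta_{bounded}$. Set $D_s:=e^{-\int_0^s(r_u+\mu_x(u))du}$.
\begin{itemize}
\item Apply It\^o's formula to $D_s\,\bunderline{v}(s,A_s,r_s,\tilde Z_s;\boldsymbol{\lambda})$ on each interval $(j-1,j)$. This is allowed since $\bunderline{v}\in C^{1,2,2,2}$ there.
\item At the integer times, use the jump conditions $\bunderline{v}(j^-)=\bunderline{v}(j^+)+\lambda_j/z$. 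Summing over $j$ and using the terminal condition produces the penalty terms $+\sum_j \lambda_j D_j/\tilde Z_j$.
\item By \eqref{eq:max_mu_n_best_case}, $\bunderline{\mu}^*_x$ minimizes $\mu\mapsto\mu(\widehat H-\bunderline{v})$ over $A^\delta(t)$. The PDE \eqref{eq:HJB_inf_bis_n} therefore gives
\[
\mathcal{L}\bunderline{v}+\widetilde H+\mu_x(\widehat H-\bunderline{v})\ge 0 \quad\text{for every admissible } \mu_x,
\]
with equality when $\mu_x=\bunderline{\mu}^*_x$.
\item Take $\mathbb{Q}$-expectations. The stochastic integrals are true martingales after localization, using boundedness of $\mu_x$ and suitable growth conditions on $\bunderline{v}$ (the same technical condition as in the worst case). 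This yields $\bunderline{v}(0)\le$ objective for every $\mu_x$, with equality for the feedback control $\bunderline{\mu}^*_x$. The feedback SDE has a measurable bounded coefficient, so the control is admissible.
\item Rewrite the penalty term using $\mathbb{E}^{\mathbb{Q}}[D_j/\tilde Z_j]=\mathbb{E}^{\mathbb{P}}[e^{-\int_0^j\mu_x}]$. This follows from $d\mathbb{Q}/d\mathbb{P}|_{\mathcal H_j}=Z_j$ (Assumption \ref{Assumption: change of measure}) and $\tilde Z_j=e^{-\int_0^j r}Z_j$. This gives the stated representation of $\bunderline{v}(0;\boldsymbol{\lambda})$.
\end{itemize}

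\textbf{Step 2: duality.} Write the Lagrangian
\[
L(\mu,\boldsymbol{\lambda})=J(\mu)+\sum_j\lambda_j\bigl(\mathbb{E}^{\mathbb{P}}[e^{-\int_0^j\mu}]-{}_j\hat p_x\bigr),
\]
where $J(\mu)$ is the contract value. For $\mu\in\mathcal{A}^\delta_{bounded}$ the penalty vanishes, so weak duality gives
\[
\sup_{\boldsymbol{\lambda}}\inf_{\mu\in\tilde{\mathcal A}^\delta}L\le\inf_{\mu\in\mathcal{A}^\delta_{bounded}}J.
\]
For the reverse inequality I would lift to the convex set of joint laws (relaxed controls) of $(\text{state},\mu)$. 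On that set both $J$ and the constraint map are linear. The constraint image
\[
\mathcal{C}=\{(J(\mu),(\mathbb{E}^{\mathbb{P}}[e^{-\int_0^j\mu}])_j)\}
\]
is then convex, and the primal value is the inf of the first coordinate over the fibre above $({}_j\hat p_x)_j$.

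A separating-hyperplane argument gives strong duality provided $({}_j\hat p_x)_j$ lies in the relative interior of the constraint projection. This Slater-type condition holds because $\mu_{x,a}$ itself is admissible and realizes ${}_j\hat p_x$ exactly. Moreover, perturbing it by $\pm\delta'$, $\delta'<\delta$, on each year moves each coordinate independently in both directions, so ${}_j\hat p_x$ is an interior point. I expect this step to be the main obstacle: justifying convexity and compactness through relaxed controls, and making sure the relaxation does not change the value.

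\textbf{Step 3: existence of $\boldsymbol{\lambda}^*$.} The map $\boldsymbol{\lambda}\mapsto \bunderline{v}(0;\boldsymbol{\lambda})-\sum_j\lambda_j\,{}_j\hat p_x$ is concave, as an infimum of affine functions of $\boldsymbol{\lambda}$, and it is finite everywhere. By the interior-point property of Step 2 it is coercive, i.e. it tends to $-\infty$ as $|\boldsymbol{\lambda}|\to\infty$. Indeed, in any direction $\boldsymbol{\lambda}/|\boldsymbol{\lambda}|$ one can choose a bounded control making
\[
\sum_j\lambda_j\bigl(\mathbb{E}^{\mathbb{P}}[e^{-\int_0^j\mu}]-{}_j\hat p_x\bigr)\le -c|\boldsymbol{\lambda}|
\]
for some $c>0$, while the payoff part of the objective stays bounded. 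Upper semicontinuity plus coercivity then gives a maximizer $\boldsymbol{\lambda}^*$. Combined with Step 2, this yields the identification of $\bunderline{v}_*$ with the regularized best-case price.
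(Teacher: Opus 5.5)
Your proof reaches the statement through a different duality mechanism from the paper's, and it has one real gap: the passage from randomized to strict controls, which you flag yourself.

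\textbf{How the routes differ.} The paper proves this proposition by mirroring Proposition~\ref{prop: worst_case_price_regu}. It takes the inner optimizer $\mu^*(\cdot;\boldsymbol\lambda)$ and asserts, from convexity alone, that the dual function has an optimizer $\boldsymbol\lambda^*$. It then uses a first-order (envelope) condition at $\boldsymbol\lambda^*$ to conclude that $\mu^*(\cdot;\boldsymbol\lambda^*)$ satisfies $\E^{\p}[e^{-\int_0^j\mu}]={}_j\hat p_x$. Primal feasibility of the inner optimizer then closes the duality gap, and the HJB verification is simply cited. You instead argue in image space: convexity of the attainable set $\mathcal C$, a Slater point built from deterministic perturbations of $\mu_{x,a}$, a supporting hyperplane, and coercivity of the concave dual function.

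Your route avoids two things the paper's argument silently needs:
\begin{itemize}
\item differentiability of $\boldsymbol\lambda\mapsto\underline v(0;\boldsymbol\lambda)$, which fails as soon as the inner minimizer is non-unique (e.g.\ when the tie set $\{\widehat H=\underline v\}$ has positive $dt\otimes d\p$-measure);
\item the intermediate-value argument in the paper's footnote, which does not work for $n\ge 2$.
\end{itemize}
Your coercivity argument also genuinely proves existence of $\boldsymbol\lambda^*$; the subgradient of the perturbation function at the interior point gives it a second time. What the paper's route would add, when it applies, is that the bang-bang feedback $\underline\mu^*_x(\cdot;\boldsymbol\lambda^*)$ is itself an optimizer of the constrained problem.

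\textbf{The gap.} The statement concerns strict controls in $\mathcal A^\delta_{bounded}$. Convexity of $\mathcal C$ needs randomization, so Step~2 as written only identifies $\max_{\boldsymbol\lambda}[\underline v(0;\boldsymbol\lambda)-\sum_j\lambda_j\,{}_j\hat p_x]$ with the \emph{randomized} constrained infimum. The dual side is unaffected: your Hamiltonian inequality holds for every value in $A^\delta(t)$, so $\underline v(0;\boldsymbol\lambda)$ is also the randomized inner infimum. The primal side is not automatic. Approximating a randomized control by strict ones perturbs $\E^{\p}[e^{-\int_0^j\mu}]$, and with equality constraints exact feasibility must be restored.

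There are two ways to close this.
\begin{itemize}
\item[(a)] Assume $\mathcal H_0$ is rich enough (e.g.\ carries uniform variables independent of $\mathcal F_T$) that mixing two admissible controls with an independent coin is admissible. Since $d\q/d\p=Z_T$ is $\mathcal F_T$-measurable, the coin keeps its law and independence under $\q$. Then $J$ and the constraints are affine under mixing, and $\mathcal C$ is convex without any relaxation.
\item[(b)] Keep randomized controls and obtain a randomized optimizer $\nu^*$ by compactness. Here $\mu$ ranges in a weak$^*$-compact set, and $J$ and the constraints are continuous in $s\mapsto e^{-\int_0^s\mu}$ once the $\mu\widehat H$ term is integrated by parts. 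Then use complementary slackness. Your It\^o computation shows that the inner objective minus $\underline v(0;\boldsymbol\lambda)$ equals $\E^{\q}\bigl[\int_0^T D_s(\mu_s-\underline\mu^*_s)(\widehat H-\underline v)\,ds\bigr]$, with a nonnegative integrand. This vanishes at $(\nu^*,\boldsymbol\lambda^*)$, which forces $\mu=\underline\mu^*_x(\cdot;\boldsymbol\lambda^*)$ off the tie set. So if the tie set is null, the strict feedback control is feasible and optimal. This is exactly the implicit hypothesis behind the paper's first-order-condition step.
\end{itemize}

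\textbf{Minor points.} First, $\mu$ does not enter the dynamics of $(A,r,\tilde Z)$, so there is no feedback SDE to solve; admissibility of $\underline\mu^*_x(t,A_t,r_t,\tilde Z_t)$ is just measurability. Second, your Slater construction needs every $\hat p_k<1$; otherwise $\mu_{x,a}$ vanishes on that year and cannot be decreased.
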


\begin{proof}
    The proof follows by arguments analogous to the proof of Proposition \ref{prop: worst_case_price_regu}. 
\end{proof}

\begin{Proposition}\label{prop: convergence_delta}
    Let us assume that the assumptions of Proposition \ref{prop: worst_case_price_regu} and \ref{prop: best_case_price} are satisfied. For $\delta>0$, let $ \bar{v}^\delta_*(0,A_0,r_0,Z_0)$ and $\bunderline{v}^\delta_*(0,A_0,r_0,Z_0)$ denote the solution to the worst- and best-case price problems given by \eqref{eq:wors_case_price_pathwise_reg} and \eqref{eq:best_case_price_pathwise_reg}. Moreover, let $\tilde{\mathcal{A}}$ be the set of bounded controls belonging to $\mathcal{A}$. Then, we have that
        \begin{equation}
        \begin{aligned}
         &\sup_{\mu_x \in \tilde{\mathcal{A}}} \mathbb{E}^{\mathbb{Q}} \Bigg[ e^{-\int_0^T (r_u+\mu_x(u))du}H(A_T,G_T^A)+\hspace{-0.5mm}\int_0^T  \hspace{-1mm}e^{-\int_0^s (r_u+\mu_x(u)) du} \hspace{-0.4mm}\left(  \widetilde{H}(A_s,G_s^I) +\mu_x(s) \widehat{H}(A_s,G_s^D) \right) \hspace{-0.4mm}ds \Bigg]\\
         &\hspace{5mm}=\lim_{\delta \to \infty} \bar{v}^\delta_*(0,A_0,r_0,Z_0) ,
        \end{aligned}
    \end{equation}
    and
    \begin{equation}
        \begin{aligned}
         & \inf_{\mu_x \in \tilde{\mathcal{A}}} \mathbb{E}^{\mathbb{Q}} \Bigg[  e^{-\int_0^T (r_u+\mu_x(u))du}H(A_T,G_T^A)+\hspace{-0.5mm}\int_0^T \hspace{-1.2mm}e^{-\int_0^s (r_u+\mu_x(u)) du}\hspace{-0.4mm}\left(  \widetilde{H}(A_s,G_s^I) +\mu_x(s) \widehat{H}(A_s,G_s^D) \right) \hspace{-0.4mm}ds \Bigg]\\
         &\hspace{5mm}=\lim_{\delta \to \infty} \bunderline{v}^\delta_*(0,A_0,r_0,Z_0) .
        \end{aligned}
    \end{equation}
\end{Proposition}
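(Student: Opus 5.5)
We prove the statement for the worst-case price; the best-case price follows by the same argument with the inequalities reversed. Write $J(\mu_x)$ for the objective functional in \eqref{eq:wors_case_price_pathwise_reg}, and set $\bar{V}:=\sup_{\mu_x\in\tilde{\mathcal{A}}}J(\mu_x)$.

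\textbf{Step 1: monotonicity.} For $\delta\le\delta'$ we have $\mathcal{A}^\delta_{bounded}\subset\mathcal{A}^{\delta'}_{bounded}$. The interval $[(\mu_{x,a}(t)-\delta)_+,\mu_{x,a}(t)+\delta]$ grows with $\delta$, and both sets carry the same expectation constraints \eqref{eq: expectation_constraints}. Moreover every element of $\mathcal{A}^\delta_{bounded}$ is bounded, because $\mu_{x,a}$ is bounded on $[0,T]$. Hence $\mathcal{A}^\delta_{bounded}\subset\tilde{\mathcal{A}}$. By Proposition \ref{prop: worst_case_price_regu}, $\bar{v}^\delta_*(0,A_0,r_0,Z_0)=\sup_{\mathcal{A}^\delta_{bounded}}J$. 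This quantity is nondecreasing in $\delta$ and bounded above by $\bar{V}$, so the limit exists and
\[
\lim_{\delta\to\infty}\bar{v}^\delta_*(0,A_0,r_0,Z_0)\le\bar{V}.
\]
Finiteness is not an issue, since $J$ is bounded above by an integrable quantity for every admissible control.

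\textbf{Step 2: the reverse inequality by exhaustion.} Fix $\varepsilon>0$ and pick $\mu^\varepsilon\in\tilde{\mathcal{A}}$ with $J(\mu^\varepsilon)\ge\bar{V}-\varepsilon$. Since $\mu^\varepsilon$ is bounded, say $0\le\mu^\varepsilon\le K$ a.s., any $\delta\ge K$ gives
\[
(\mu_{x,a}(t)-\delta)_+=0\le\mu^\varepsilon(t)\le K\le\mu_{x,a}(t)+\delta .
\]
Because $\mu^\varepsilon$ already lies in $\mathcal{A}$, it satisfies the life-table constraints \eqref{eq: expectation_constraints}, so $\mu^\varepsilon\in\mathcal{A}^\delta_{bounded}$. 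It follows that $\bar{v}^\delta_*\ge J(\mu^\varepsilon)\ge\bar{V}-\varepsilon$ for all $\delta\ge K$. Letting $\delta\to\infty$ and then $\varepsilon\to0$ gives the equality. The infimum case is symmetric: there the sequence $\bunderline{v}^\delta_*$ is nonincreasing and bounded below by the infimum over $\tilde{\mathcal{A}}$.

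\textbf{Main obstacle.} The only delicate point is that this argument requires $\tilde{\mathcal{A}}$ to consist of uniformly bounded controls, meaning each control has a deterministic bound $K$. The statement's restriction to bounded controls is exactly what makes the argument work. If one wanted the unrestricted $\mathcal{A}$, truncating a control would destroy the expectation constraints. One would then need to re-adjust the control on each year $(j-1,j]$, for instance by adding a small deterministic drift so that each $\E^{\p}(e^{-\int_0^j\mu})$ is restored. That in turn calls for a continuity estimate on $J$ under truncation, via dominated convergence using integrability of $H,\widetilde{H},\widehat{H}$. We avoid this difficulty by working with $\tilde{\mathcal{A}}$ as stated. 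The identification of $\bar{v}^\delta_*$ with the constrained supremum is taken from Proposition \ref{prop: worst_case_price_regu}.
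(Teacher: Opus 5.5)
Your argument is essentially the paper's. The paper proves $\tilde{\mathcal{A}}=\bigcup_{\delta>0}\mathcal{A}^\delta_{bounded}$ by placing each bounded admissible control in $\mathcal{A}^\delta_{bounded}$ with $\delta=\operatorname{ess\,sup}_\omega\sup_{t\le T}|\mu_x(t)-\mu_{x,a}(t)|$, and you additionally spell out the monotonicity and $\varepsilon$-optimality step that turns this set identity into the limit of suprema, while rightly flagging the tacit need for a deterministic bound. One small fix: $\delta\ge K$ alone does not force $(\mu_{x,a}(t)-\delta)_+=0$ when $\sup_t\mu_{x,a}(t)>K$, so take $\delta\ge\max\{K,\sup_{t\le T}\mu_{x,a}(t)\}$ (or the paper's choice of $\delta$), which is available since you already noted that $\mu_{x,a}$ is bounded.
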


\section{Numerical experiments} \label{sec: numerics}
Since our method in Section \ref{Sec: bounds strict} and \ref{Sec: bounds_relaxed} does not rely on any specific mortality modeling assumptions, the resulting bounds provided below are the most general one can obtain using the
available information (provided that the mortality rates over each one-year period are consistent with the given table).  These bounds are therefore of high practical interest for insurers, as they  naturally provide a range within which the contract price should lie. Any price outside this range is either not consistent with the life tables or with the specified financial market. Finally, these results allow the insurer to directly quantify its worst-case loss or best-case gain in case of deviation of the observed mortality from the chosen fractional-age/mortality rate assumption. The results are based on the Belgian female life table.
\subsection{Strict bounds} \label{Sec: numerics_strict}
The parameters used in this section are
$A_0=100$, $r_g=0.03$, $r_0=0.01$, $\pi_S=0.6$, $\pi_P=0.2$, $\sigma_S=0.2$, $\sigma_r=0.02$, $\rho=0.3$, and $\kappa=0.1$.
The mean-reversion interest rate level $\theta(t)$ is obtained from \eqref{theta} by specifying the instantaneous forward rate as
\[
f(0,t)= b + c e^{-at} + d a t e^{-at}, \qquad a,b,c,d\in\mathbb{R},
\]
with yield-curve parameters $a=0.015$, $b=-0.0105$, $c=0.02$, and $d=0.75$.
We first depict in Figure \ref{fig:survival} the survival probabilities ${}_{t}p_{x}$ at age $x=40$ and $x=100$ for $t\in [0,5]$. Naturally, at lower age $x$, we see that the specific choice of fractional age assumption has almost no impact on the survival probabilities, while we observe higher discrepancies for $x=100$. By contrast, the GMIB- and GMDB-induced optimal survival curves ${}_t\bar{p}^*_{x}$  and ${}_t\bunderline{p}^*_{x}$, given by \eqref{sup_proba} and \eqref{inf_proba}, deviate significantly from  all the fractional age assumptions. 
\begin{figure}[h!]
  \centering
  \hspace{-4mm}\includegraphics[width=0.5\textwidth]{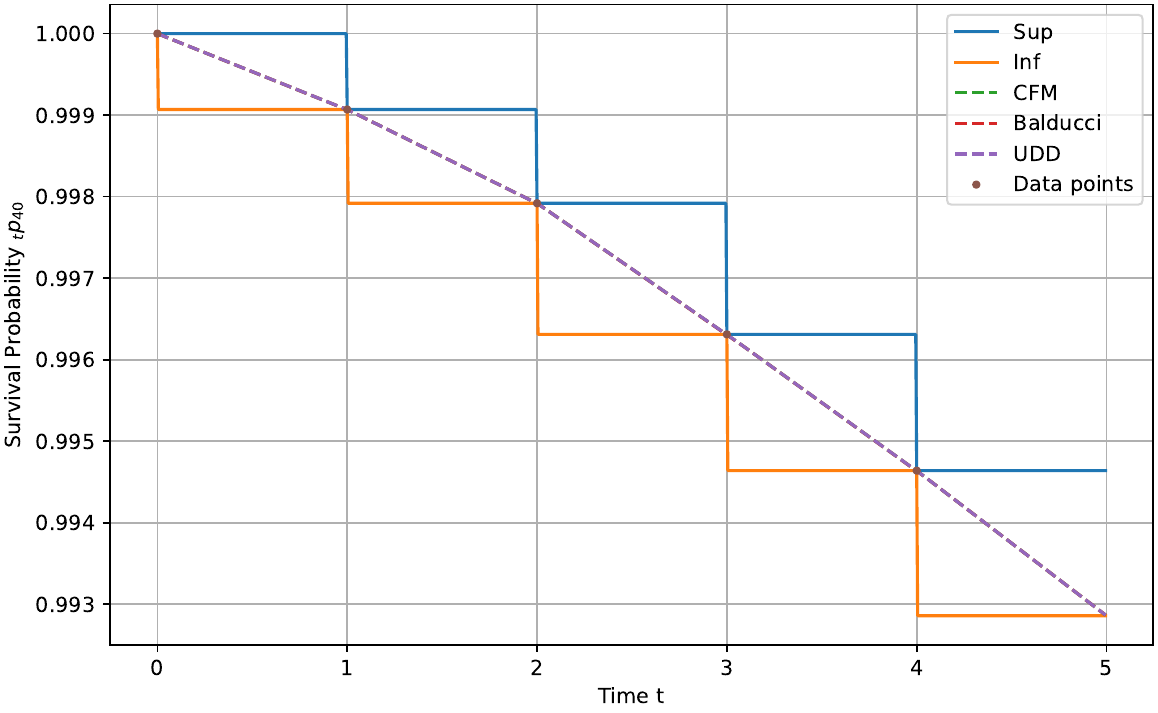} \hspace{-2.4mm}
  \hfill 
  \includegraphics[width=0.5\textwidth]{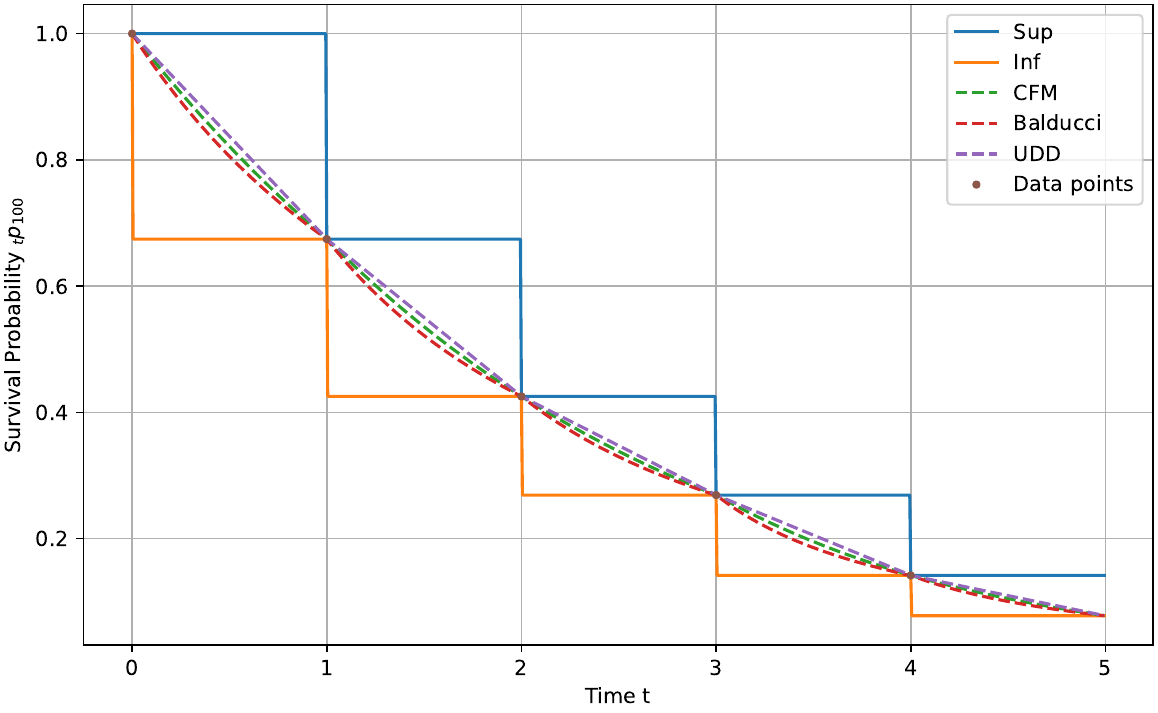} 
    \caption{Comparison of survival probabilities ${}_{t}p_{x}$ at age $x=40$ (left) and $x=100$ (right) over time $t \in [0,5]$, using the fractional age assumptions \eqref{UDD}-\eqref{Bald} and  optimal GMIB (and GMDB) ${}_t\bar{p}^*_{x}$ ('Sup'), ${}_t\bunderline{p}^*_{x}$ ('Inf') from Eqs.\ \eqref{sup_proba}--\eqref{inf_proba}.}
    \label{fig:survival}
\end{figure} \\
We then study the strict bounds of the GMIB contract in Figure \ref{fig:GMIB_strict}.
\begin{figure}[h!]
  \centering
  \hspace{-4mm}\includegraphics[width=0.49\textwidth]{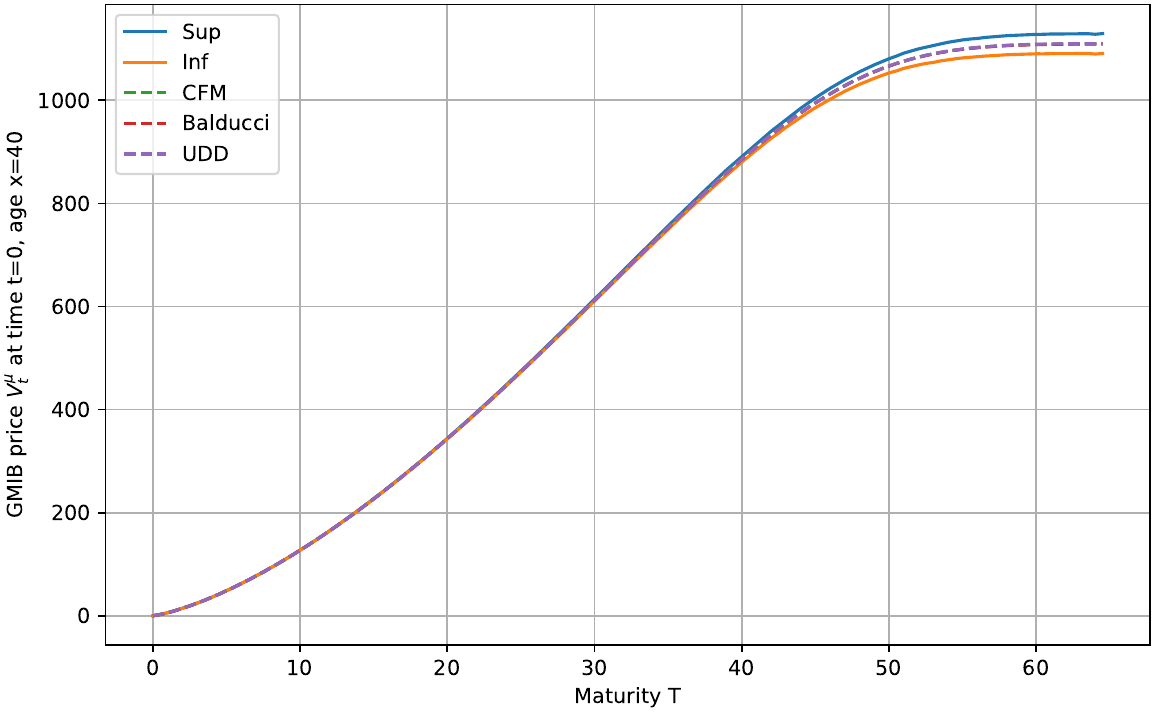} \hspace{-2mm}
  \hfill 
    \includegraphics[width=0.49\textwidth]{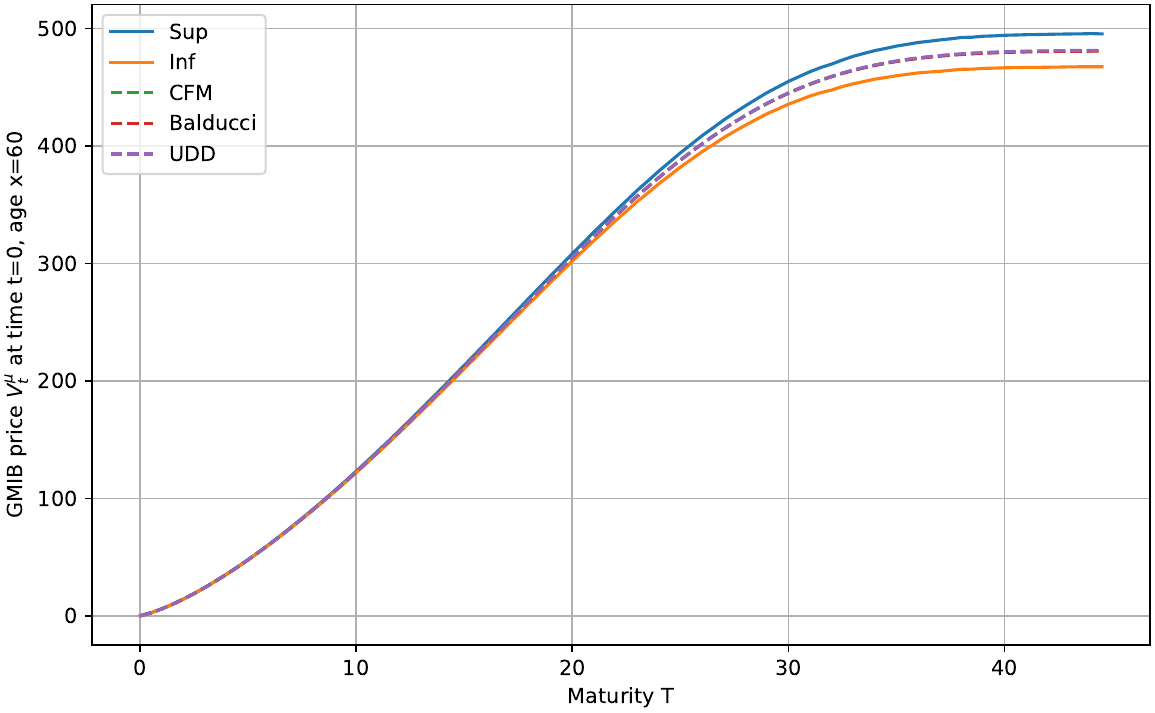}  \\ \includegraphics[width=0.49\textwidth]{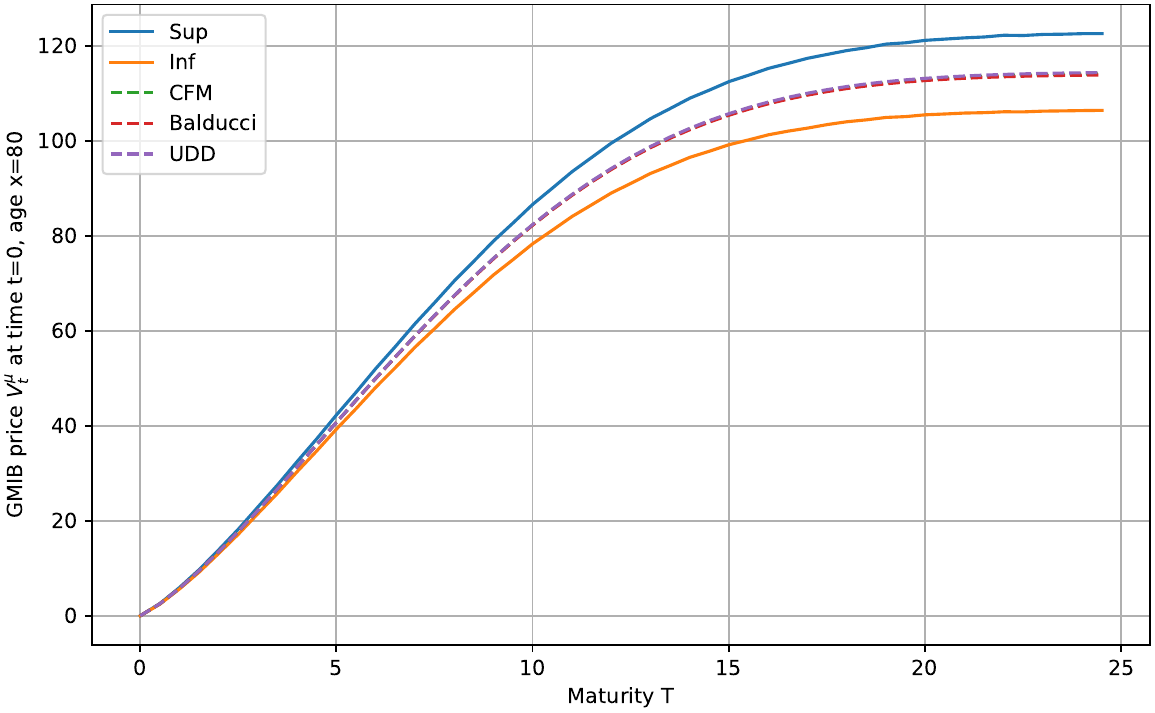}
 
  \caption{GMIB contract price $V^\mu_t$ at $t=0$ for different maturities $T$ with initial age $x=40$ (top left), $x=60$ (top right) and $x=80$ (bottom). Results are shown under the fractional-age assumptions \eqref{UDD}--\eqref{Bald}, together with the GMIB upper bound $\bar{V}_0$ ('Sup') and lower bound $\bunderline{V}_0$ ('Inf') from Eqs.\ \eqref{upper_GMIB}--\eqref{lower_GMIB}.}
  \label{fig:GMIB_strict}
\end{figure} 
 For younger policyholders (smaller initial age $x$), the upper-left plot shows that the insurer’s exposure to deviations of mortality within each one-year period from the chosen fractional-age assumption is moderate, at least for relatively short maturities $T$ (while remaining consistent with the tabulated survival probabilities $\hat{p}_j$'s). In contrast, for older policyholders (bottom plot) or for longer maturities, this mortality risk becomes much more significant, as deviations within each one-year period from the fractional-age assumption lead to a stronger impact on the GMIB value. We also observe that the choice among the three fractional-age assumptions \eqref{UDD}--\eqref{Bald} has virtually no impact on the contract price. All these observations are confirmed for the GMDB bounds in Figure \ref{fig:GMDB_strict}. Bounds for the GMAB contract are not depicted as they are trivial in the strict pathwise setting, see Proposition \ref{propgmab}.
\begin{figure}[h!]
  \centering \hspace{-4mm}\includegraphics[width=0.49\textwidth]{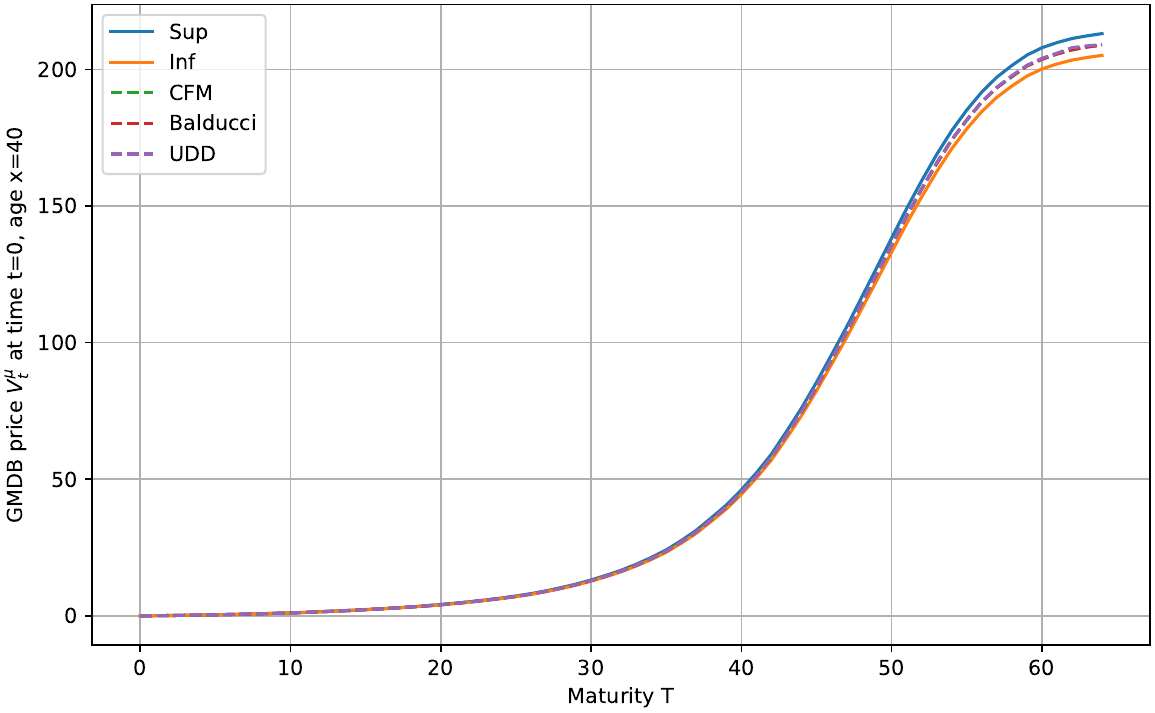} \hspace{-2mm}
  \hfill 
  \includegraphics[width=0.49\textwidth]{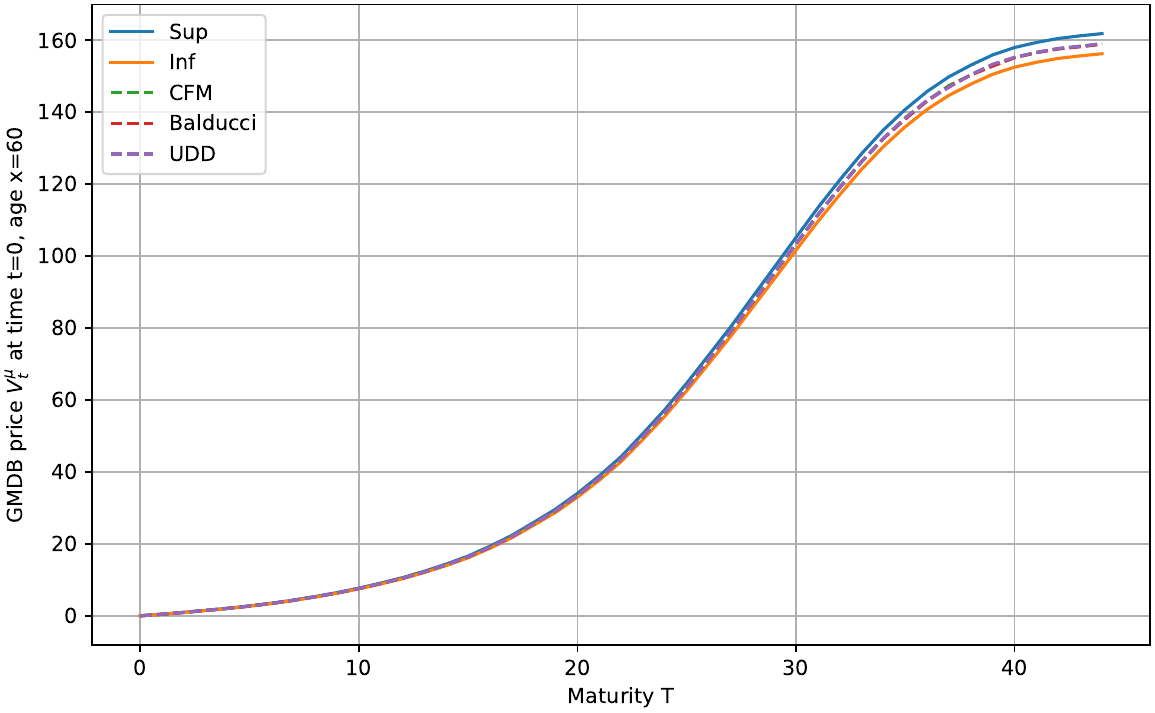}
  \\    \includegraphics[width=0.49\textwidth]{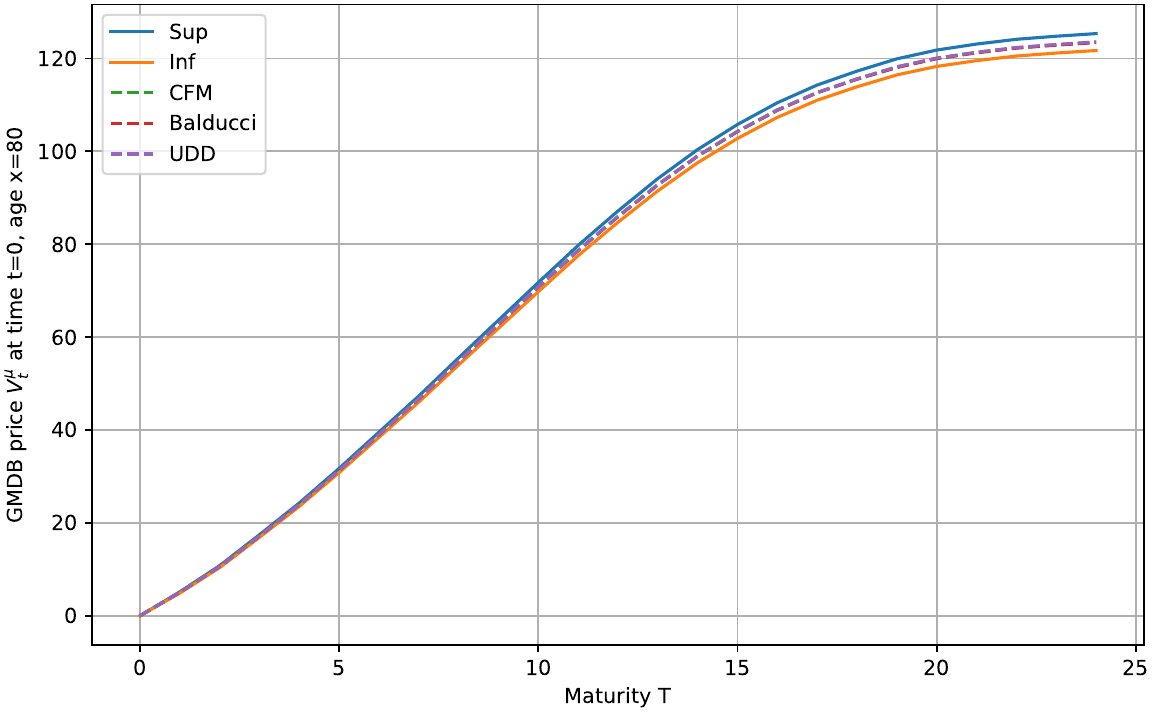}
  \caption{GMDB contract price $V^\mu_t$ at $t=0$ for different maturities $T$ with initial age $x=40$ (top left), $x=60$ (top right) and $x=80$ (bottom). Results are shown under the fractional-age assumptions \eqref{UDD}--\eqref{Bald}, together with the GMDB upper bound $\bar{V}_0$ ('Sup') and lower bound $\bunderline{V}_0$ ('Inf') from Eqs.\ \eqref{lbound2}--\eqref{ubound2}.}
  \label{fig:GMDB_strict}
\end{figure}

\subsection{Relaxed bounds}
For clarity, in this section we consider a deterministic interest rate $r = 0.01$ and set the market price of risk to zero ($\mu_S = r$), which reduces the Hamilton–Jacobi–Bellman (HJB) equations from four state variables to only two. While this simplification ignores stochastic interest rate effects, the qualitative behavior of bounds is expected to remain similar in the full four-state formulation. The other parameters used in the numerical examples are the same as in Section \ref{Sec: numerics_strict} above with $\delta = 1$. Moreover, the a priori time-dependent mortality rate $\mu_{x,a}$ is given by the Balducci approximation. We compare different variable annuities prices:
\begin{itemize}
    \item Prices obtained by using the Balducci approximation,
    \item Worst- and best-case prices by imposing expected constraints only at maturity $T$,
    \item Worst- and best-case prices by imposing expected constraints for each intermediate maturities $j\leq T$,
    \item Worst- and best-case prices by imposing almost surely constraints for each intermediate maturities $j\leq T$.
\end{itemize}

Figures~\ref{fig:GMAB_relaxed}, \ref{fig:GMIB_relaxed} and \ref{fig:GMDB_relaxed} display the upper and lower bounds (worst- and best-cases prices) for GMAB, GMIB, and GMDB for policyholder ages $x = 40, 60, 80$. Comparing expected bounds with almost-sure bounds, it is clear that the expected bounds are less restrictive, as they only enforce average behavior over time rather than across all possible mortality trajectories. Consequently, the expected bounds are always wider than the almost-sure bounds. Note that for GMAB, the almost-sure bounds coincide with the price under Balducci approximation, so they are not displayed; for GMIB and GMDB, the almost-sure bounds are tighter and lie closer to the baseline.\\

Moreover, we observe that adding additional constraints at intermediate times further narrows the bounds, as they restrict extreme mortality trajectories that would otherwise satisfy the final maturity condition but produce very high or very low liabilities along the path. The effect of these intermediate-time constraints is stronger for GMIB and GMDB than for GMAB because GMIB and GMDB payoffs depend on survival or death at intermediate times, whereas GMAB depends primarily on survival at maturity $T$. \\

We also observe asymmetry in the bounds, when comparing upper and lower bounds to the baseline. This asymmetry depends on the nature of the payoff. For survival-contingent products such as GMIB and GMAB, the lower bounds are typically far below the baseline, reflecting the strong impact of higher mortality on survival-dependent liabilities, while the upper bounds remain closer. For death-contingent products such as GMDB, the opposite pattern is observed: the upper bounds lie far above the baseline, as higher mortality increases the present value of death benefits, whereas the lower bounds remain near the baseline, since reductions in mortality have limited effect. Those asymmetrical effects are reduced for older ages, as survival probabilities decrease. 

\begin{figure}[h!]
  \centering
  \hspace{-4mm}\includegraphics[width=0.49\textwidth]{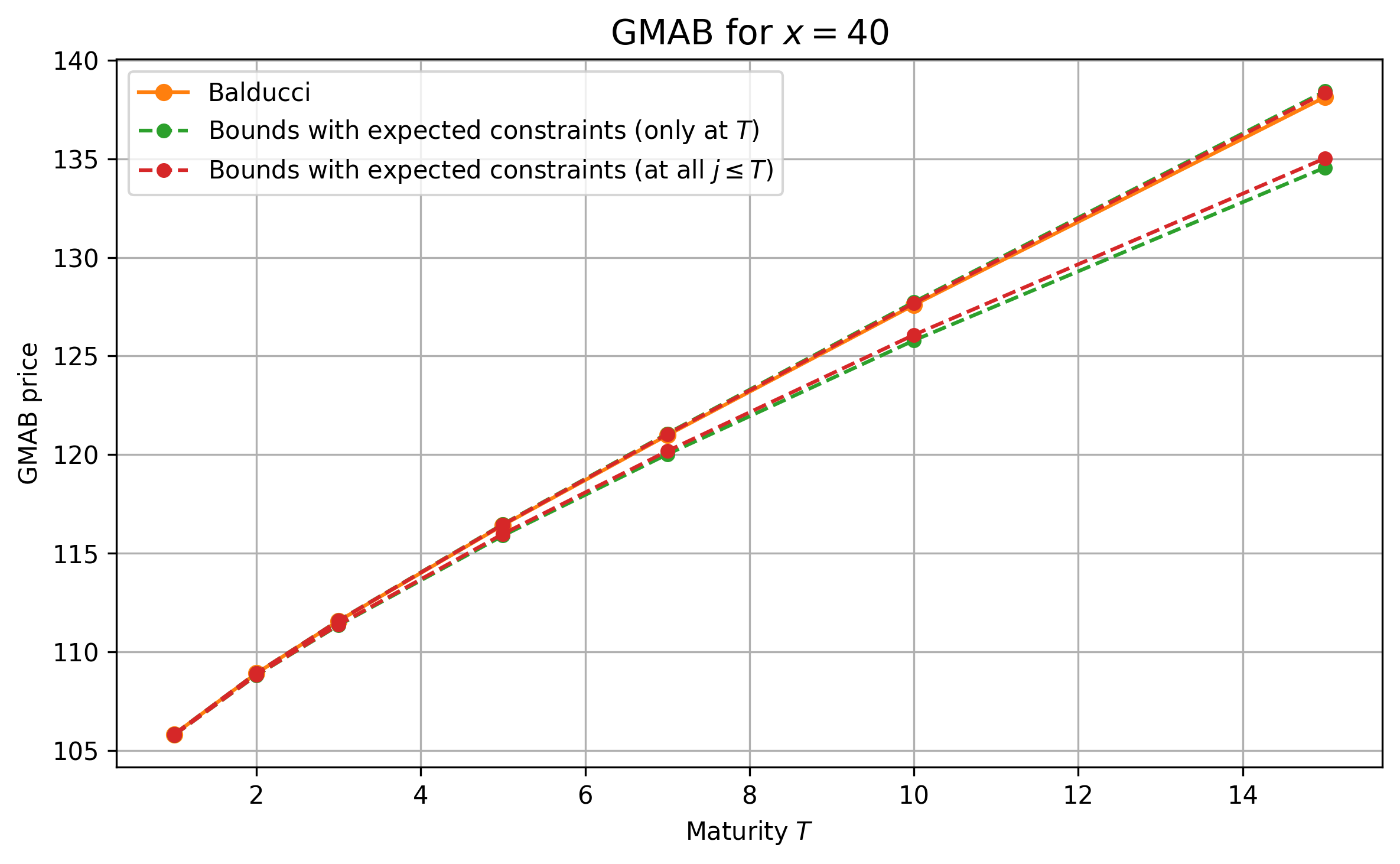} \hspace{-2mm}
  \hfill 
  \includegraphics[width=0.49\textwidth]{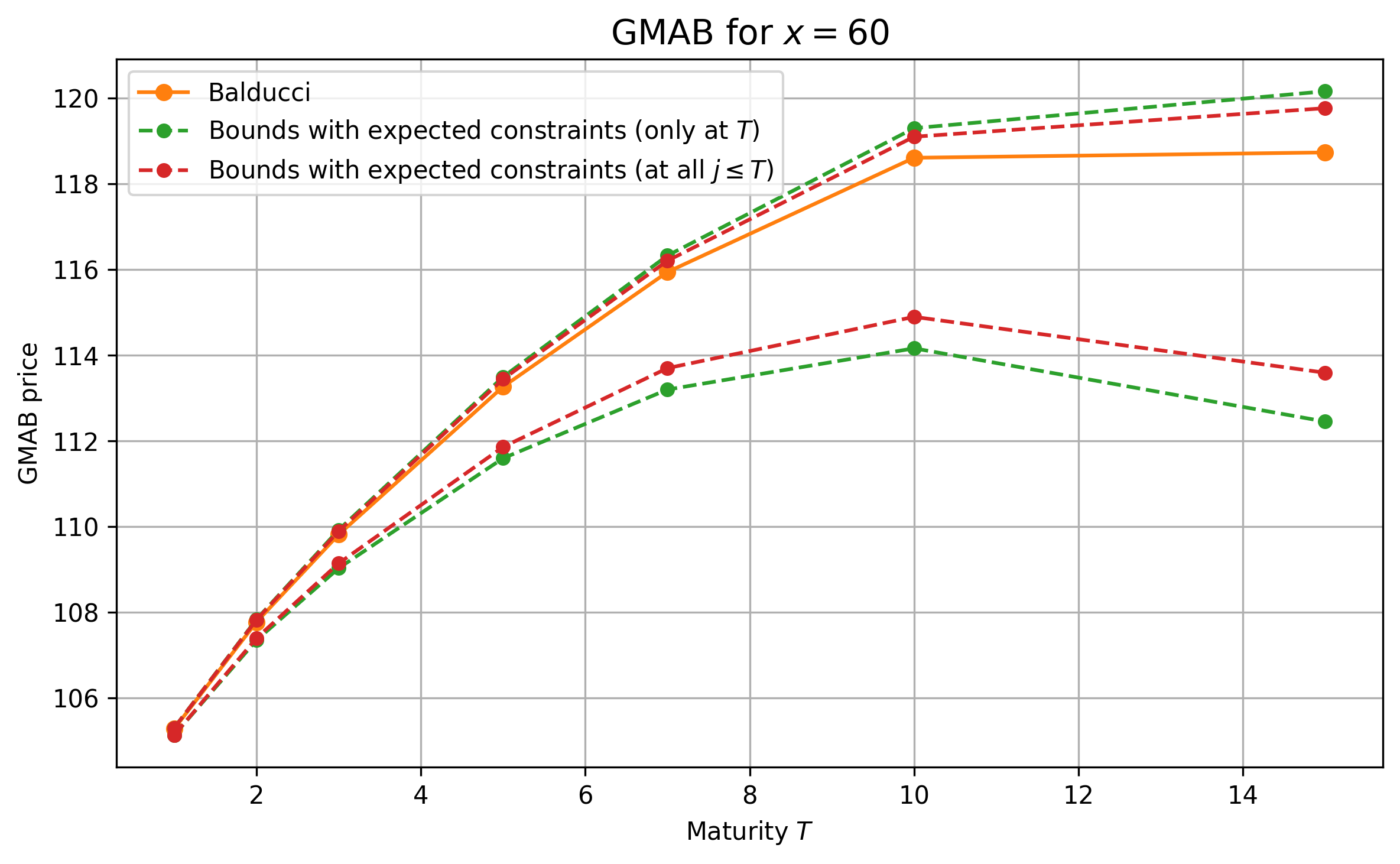}\\
  \includegraphics[width=0.49\textwidth]{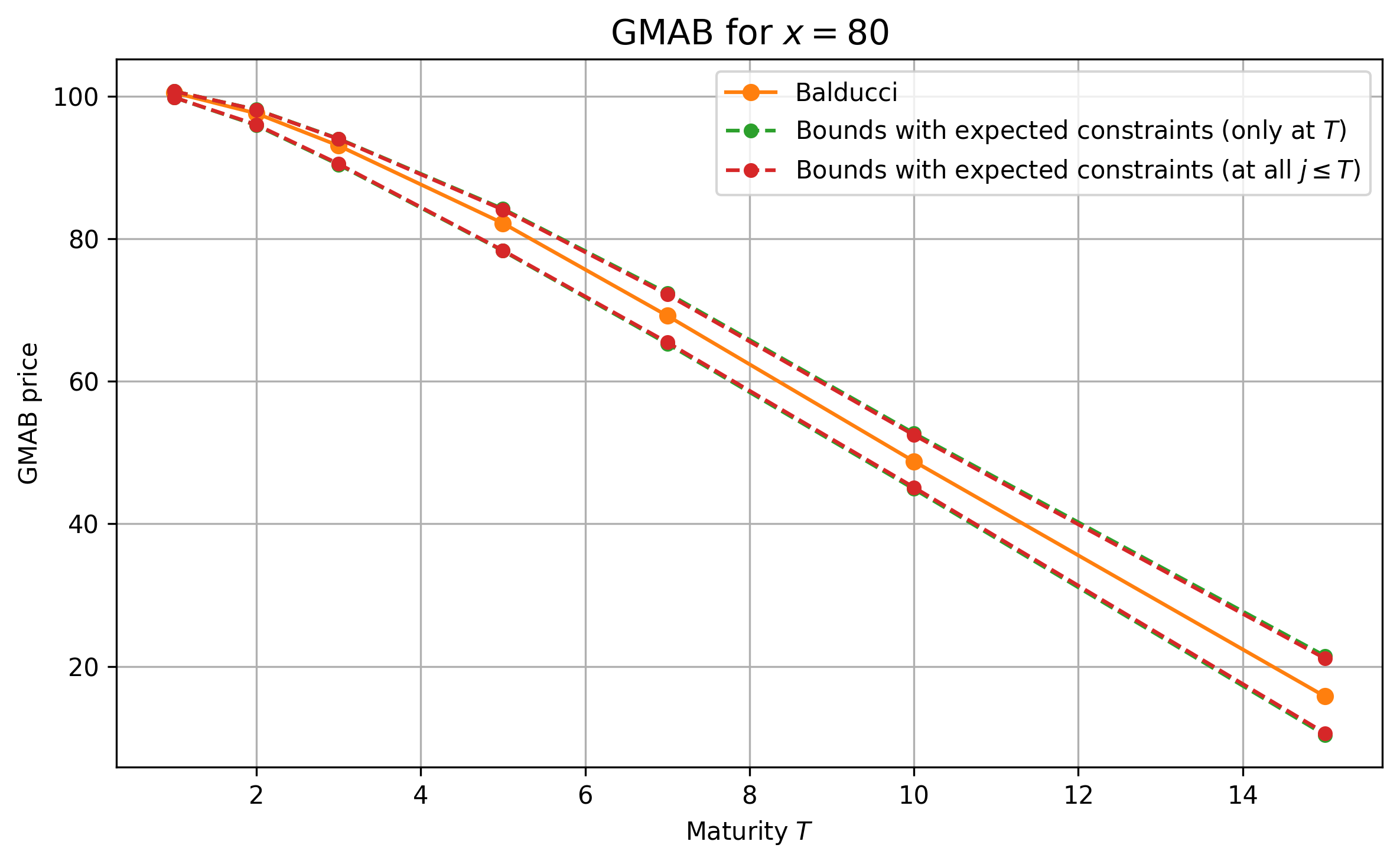}
  \caption{GMAB contract price for different maturities $T\in\{1,2,3,5,7,10,15\}$ with initial age $x=40$ (top left), $x=60$. (top right) and $x=80$ (bottom). Results are shown under the Balducci assumption \eqref{Bald}, together with the optimal relaxed bounds from Proposition \ref{prop: worst_case_price_regu} -- \ref{prop: best_case_price} }
  \label{fig:GMAB_relaxed}
\end{figure} 

\begin{figure}[h!]
  \centering
  \hspace{-4mm}\includegraphics[width=0.49\textwidth]{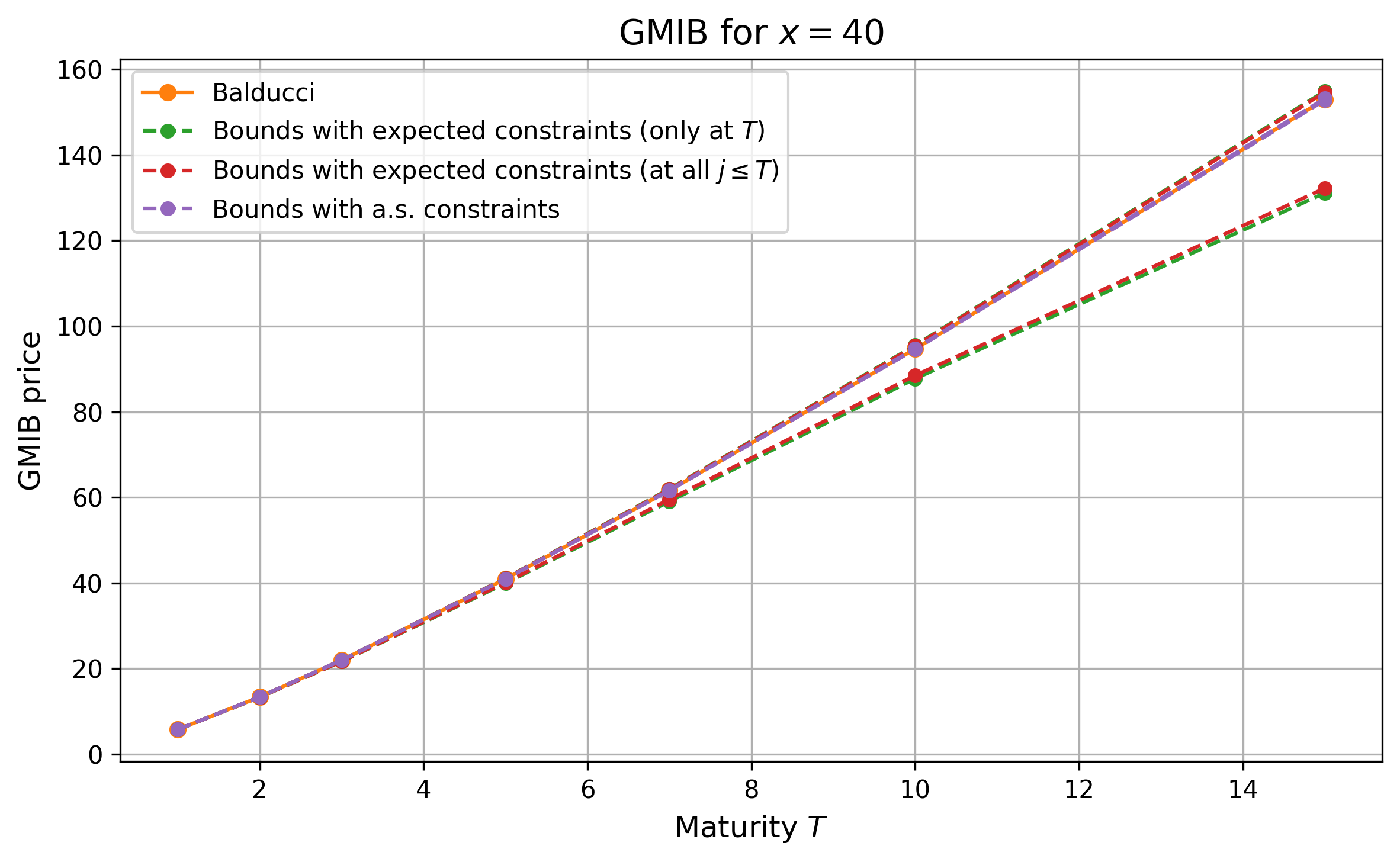} \hspace{-2mm}
  \hfill 
  \includegraphics[width=0.49\textwidth]{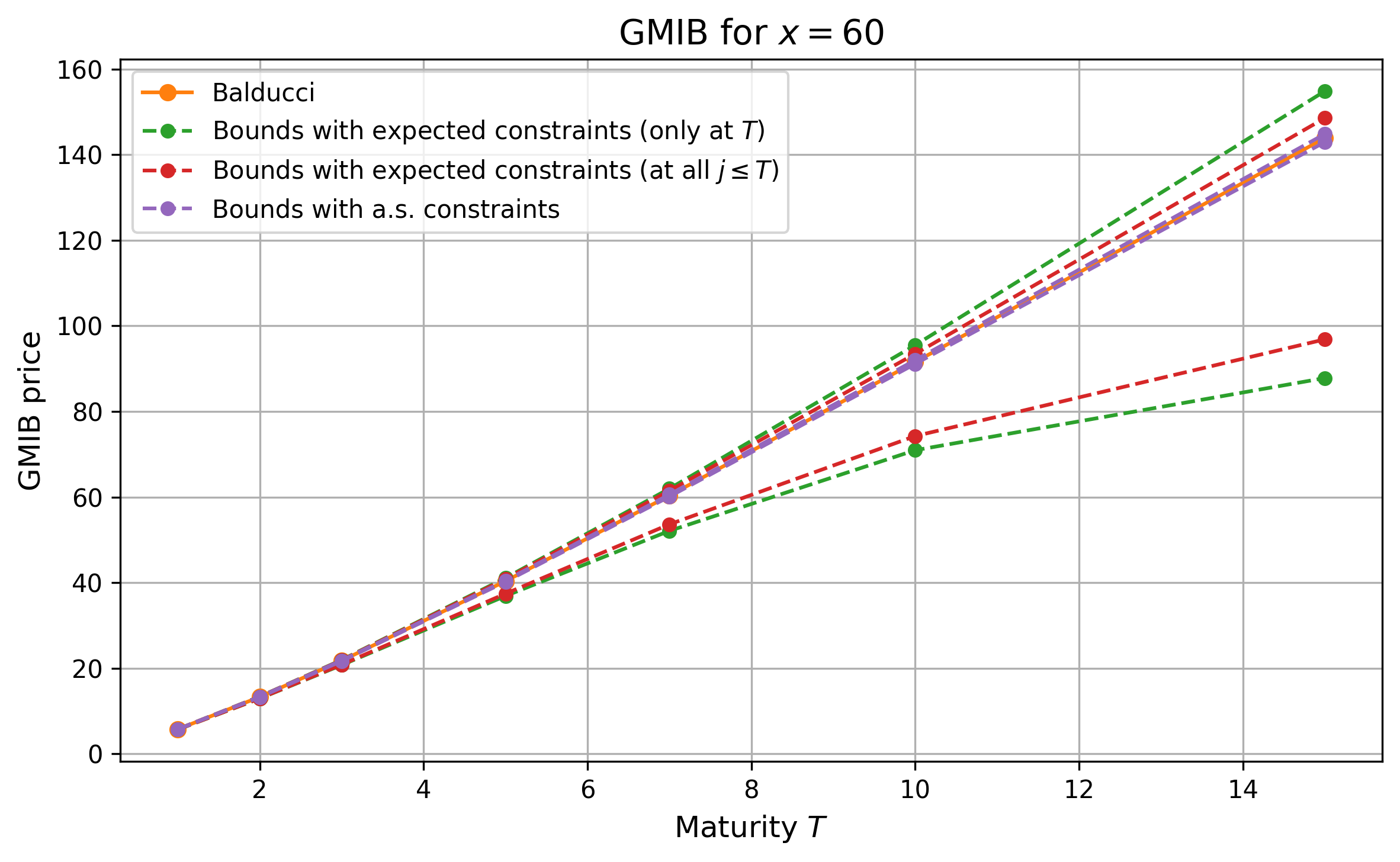}\\
  \includegraphics[width=0.49\textwidth]{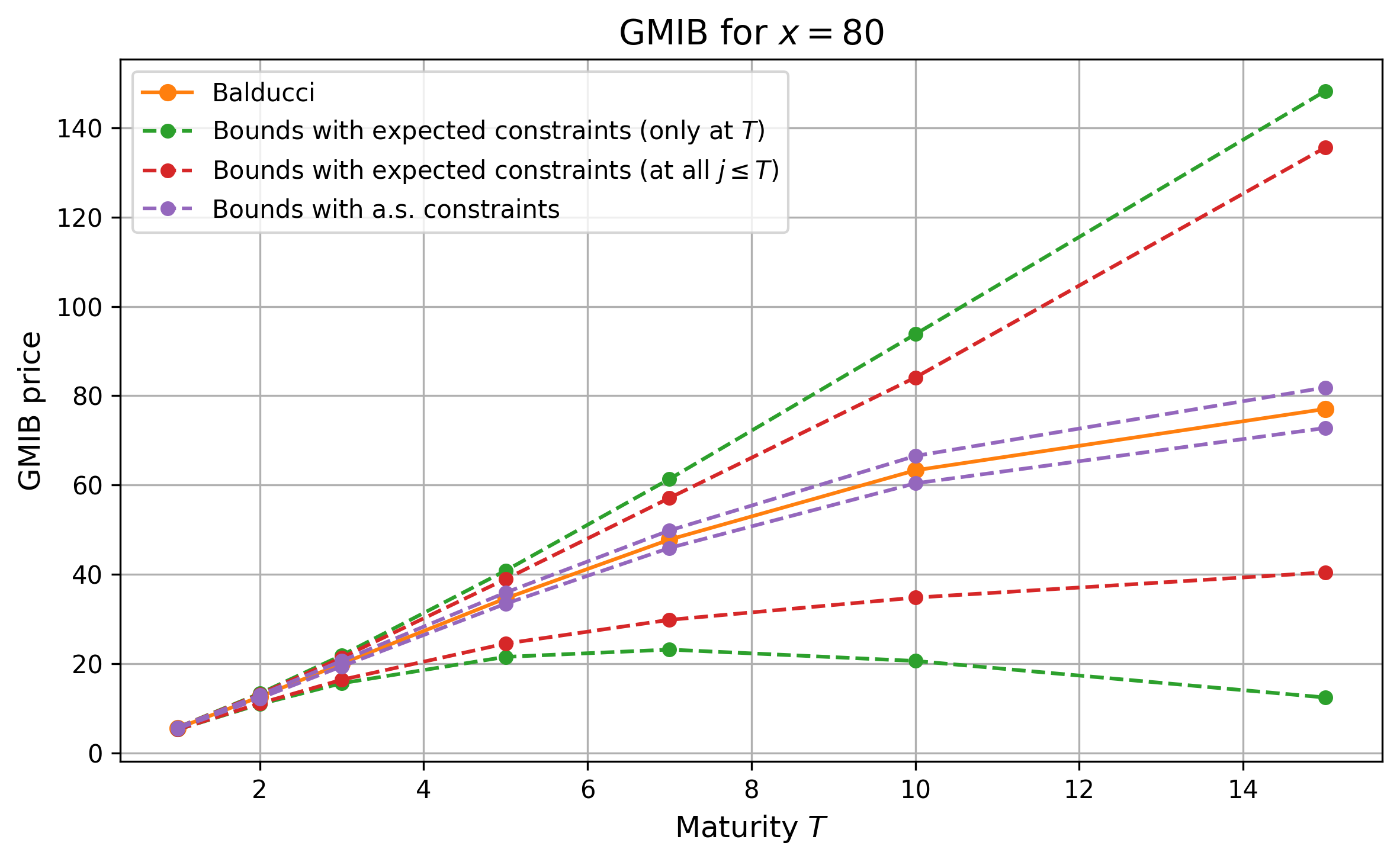}
  \caption{GMIB contract price for different maturities $T\in\{1,2,3,5,7,10,15\}$ with initial age $x=40$ (top left), $x=60$. (top right) and $x=80$ (bottom). Results are shown under the Balducci assumption \eqref{Bald}, together with the optimal relaxed bounds from Proposition \ref{prop: worst_case_price_regu}--\ref{prop: best_case_price} and a.s.\ bounds from Eqs.\ \eqref{upper_GMIB}--\eqref{lower_GMIB} }
  \label{fig:GMIB_relaxed}
\end{figure} 
\begin{figure}[h!]
  \centering
  \hspace{-4mm}\includegraphics[width=0.5\textwidth]{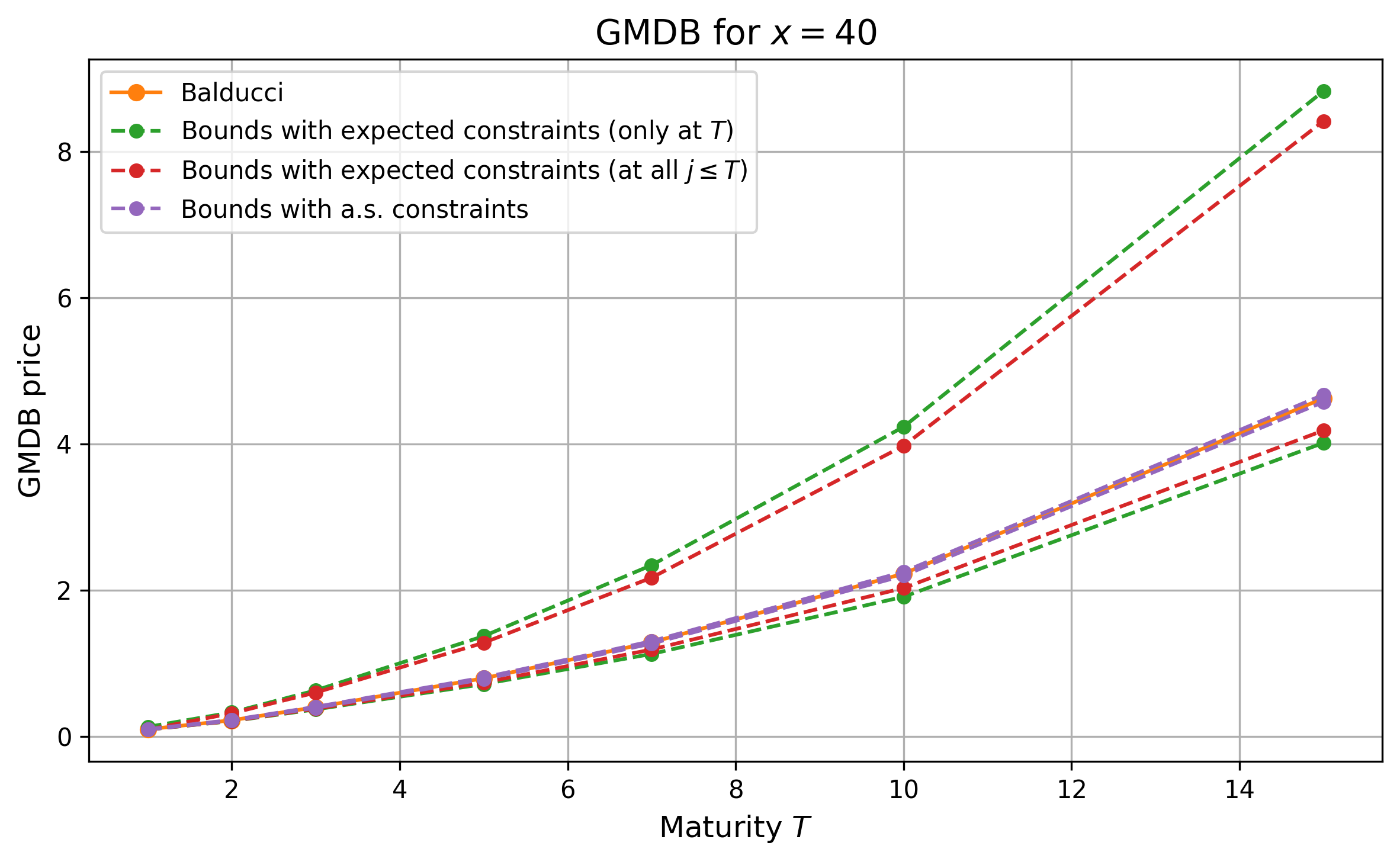} \hspace{-2mm}
  \hfill 
  \includegraphics[width=0.5\textwidth]{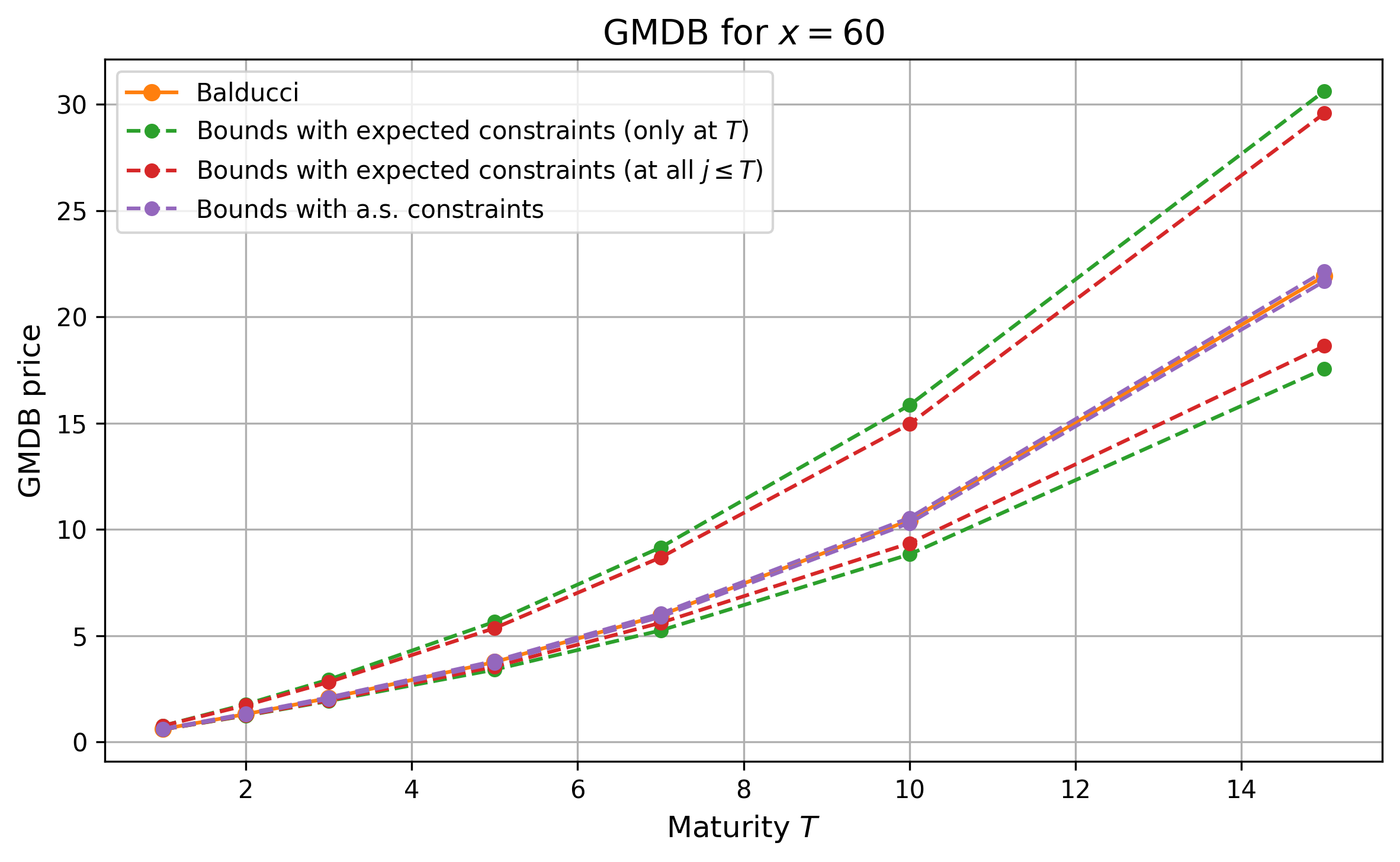}\\
  \includegraphics[width=0.5\textwidth]{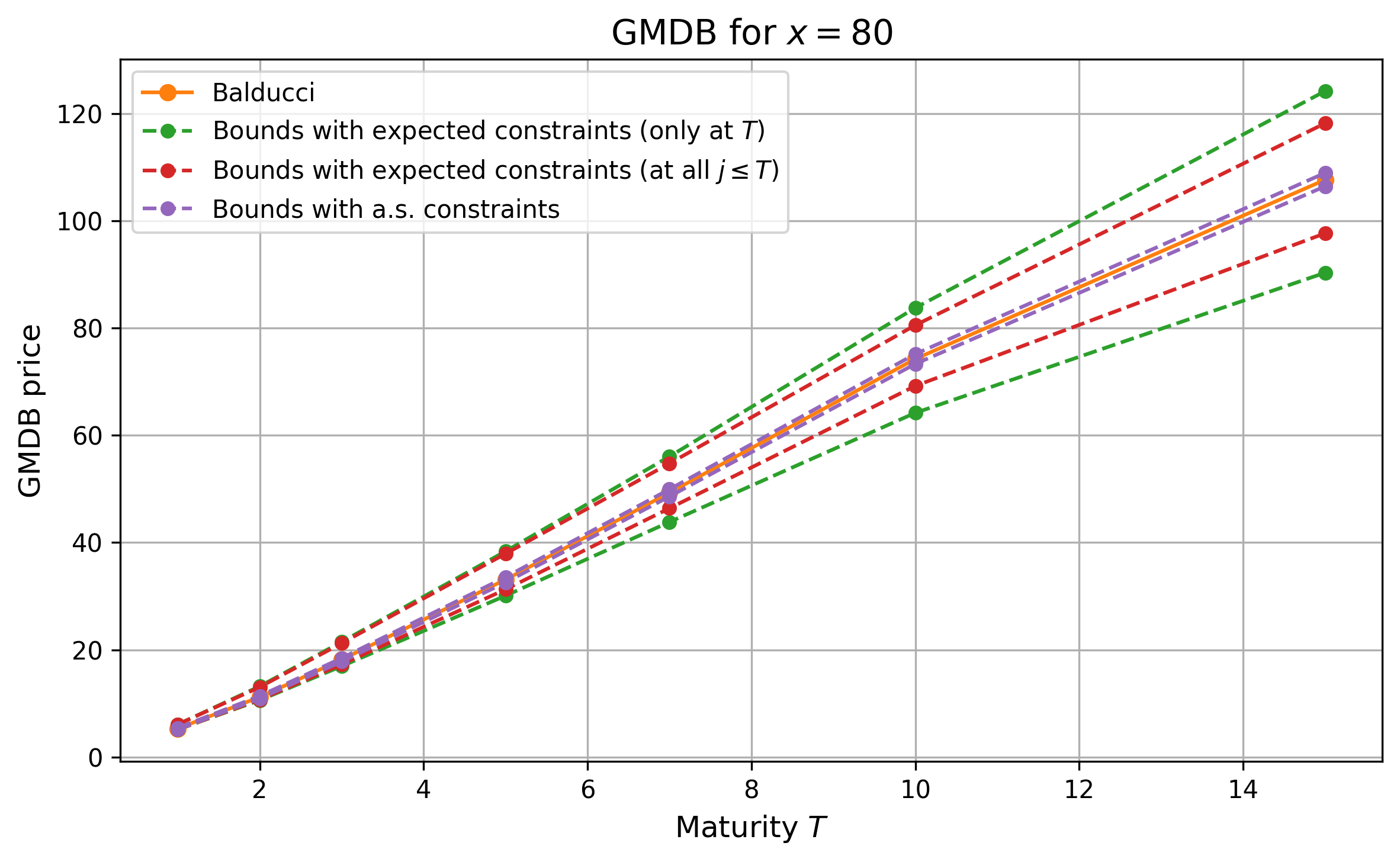}
  \caption{GMDB contract price for different maturities $T\in\{1,2,3,5,7,10,15\}$ with initial age $x=40$ (top left), $x=60$. (top right) and $x=80$ (bottom). Results are shown under the Balducci assumption \eqref{Bald}, together with the optimal relaxed bounds from Proposition \ref{prop: worst_case_price_regu}--\ref{prop: best_case_price} and a.s.\ bounds from Eqs.\ \eqref{lbound2}--\eqref{ubound2}.}
  \label{fig:GMDB_relaxed}
\end{figure}

We also check the convergence of the upper and lower bounds with expected constraints only at maturity, as $\delta\to \infty$. Figure \ref{fig:conv_delta_60_5} displays the upper and lower bond prices for $x=60$ and $T=5$, plotted as functions of $\delta>0$. As expected, we observe a relative fast convergence as $\delta$ increases, confirming the theoretical result of Proposition \ref{prop: convergence_delta}. \\

\begin{figure}[h!]
  \centering
  \hspace{-4mm}\includegraphics[width=0.5\textwidth]{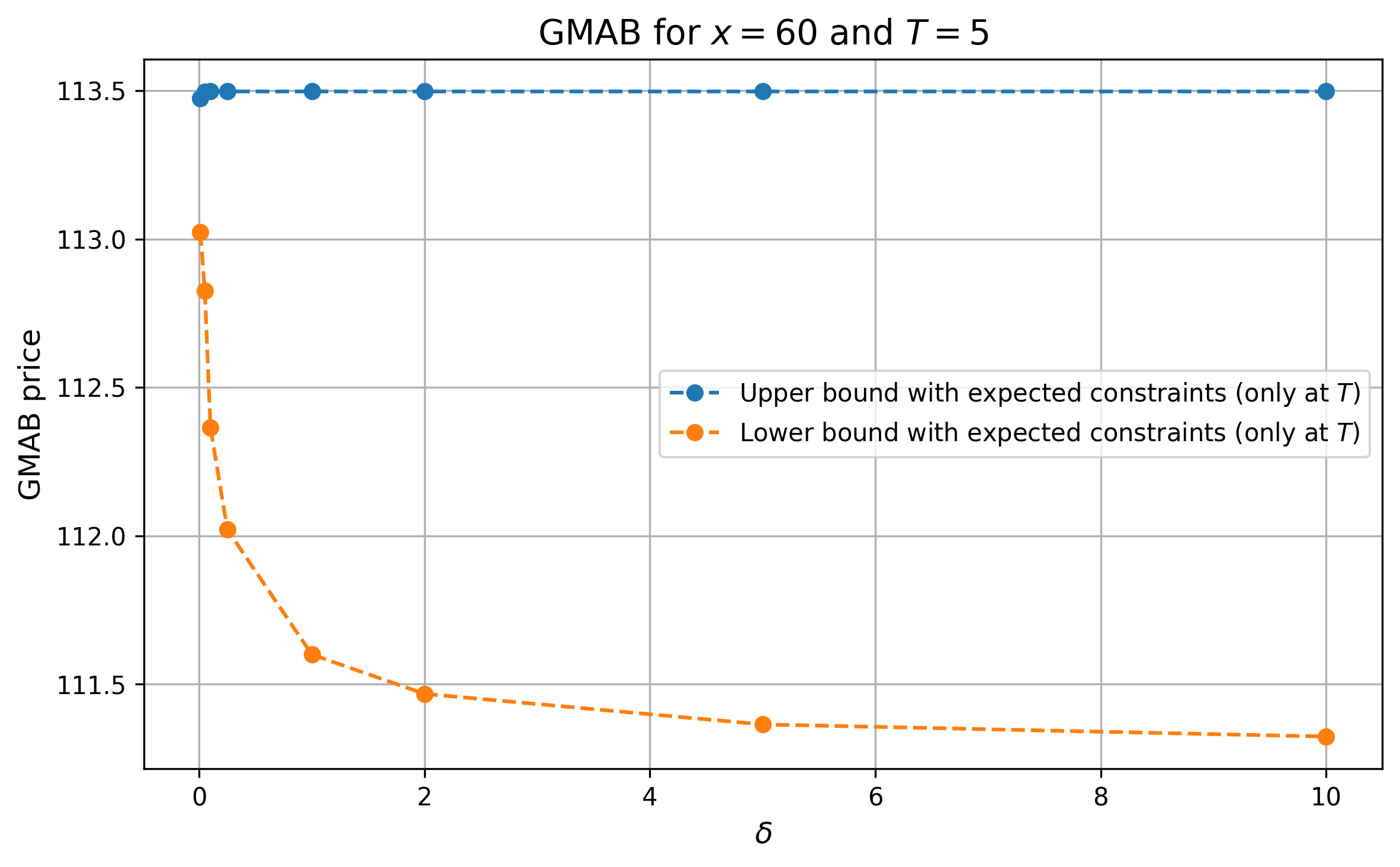} \hspace{-2mm}
  \hfill 
  \includegraphics[width=0.5\textwidth]{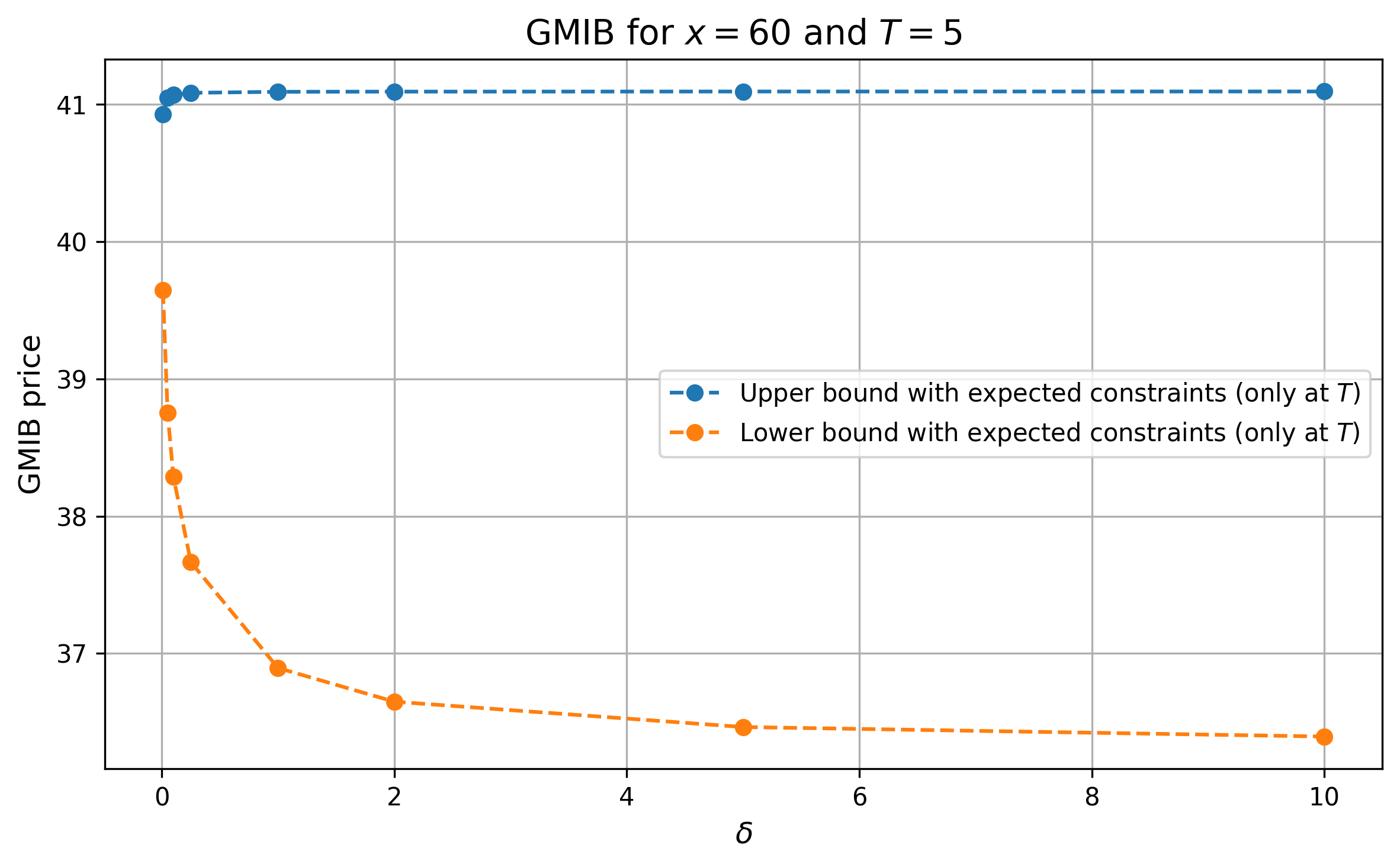}\\
  \includegraphics[width=0.5\textwidth]{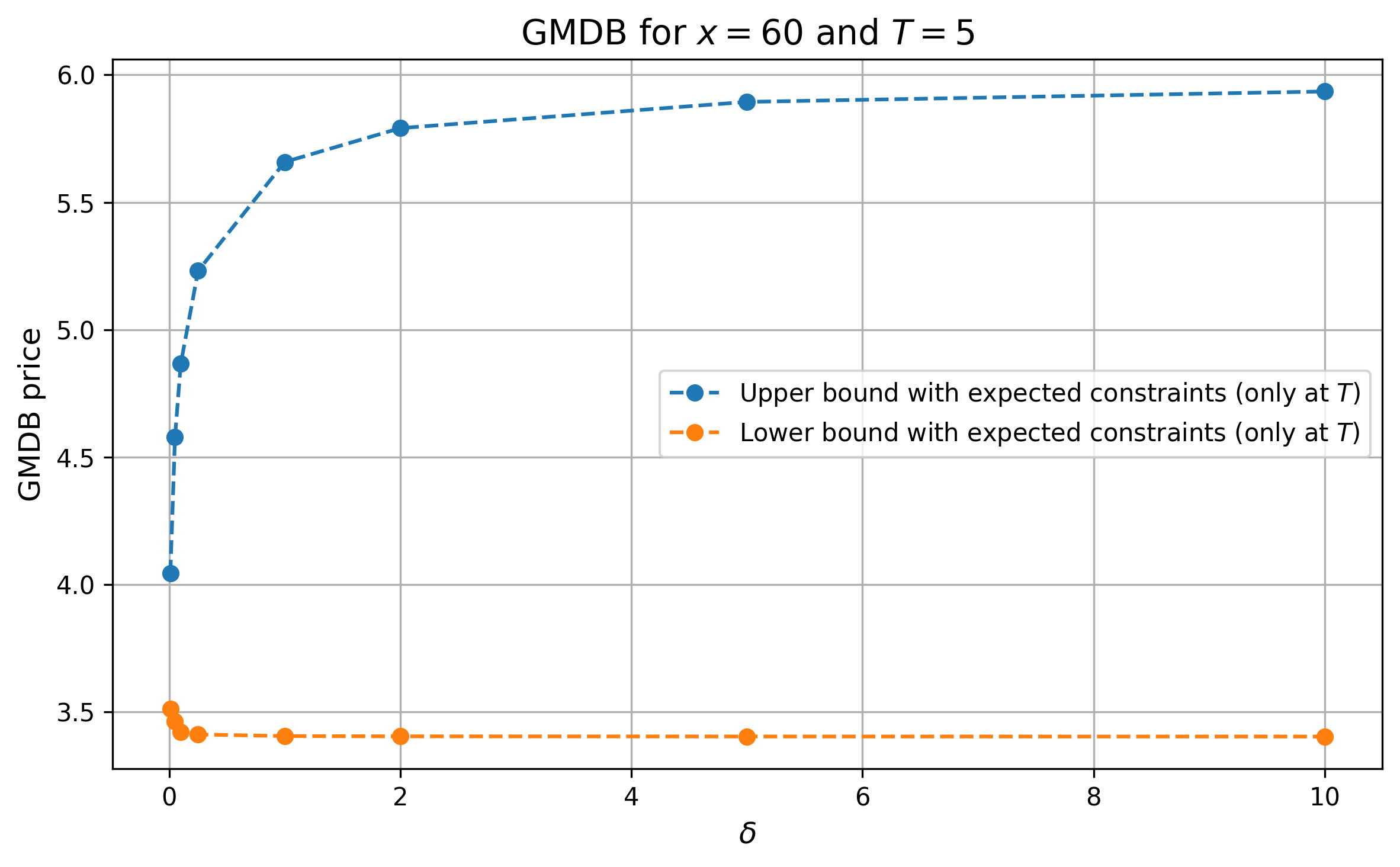}
  \caption{ Optimal relaxed bounds from Proposition \ref{prop: worst_case_price_regu}--\ref{prop: best_case_price} with expected constraint only at maturity $T$ as functions of $\delta>0$, for GMAB, GMIB and GMDB contract prices with initial age $x=60$ and maturity $T=5$. }
  \label{fig:conv_delta_60_5}
\end{figure}
We conclude this numerical section with some remarks on the computational aspects of the problem and the methodological choices adopted in its resolution. The original control problems are formulated as a four-dimensional HJB equations with multiple time constraints on $[0,T]$. This leads to a sequence of recursive HJB equations, which becomes computationally very demanding, especially for large maturities. In addition, the numerical resolution requires solving a high-dimensional static optimization problem with respect to $\boldsymbol{\lambda} \in \mathbb{R}^n$, which further increases the computational burden and slows convergence. To ensure tractable computation and reliable numerical results, we decide to reduce the state space to two variables. The full four-dimensional formulation, with stochastic interest rates, could in principle be addressed using modern deep-learning-based methods for high-dimensional control problems instead of the finite-difference scheme used above, see for example \cite{cheridito2025deep, dupret2024deep}, which are specifically designed to handle such large systems of recursive HJB equations. Let us also note that the previous section provides alternative almost-sure bounds obtained under stronger and less realistic assumptions. Although more restrictive, these bounds are fast to compute and can be derived directly. They thus serve as useful a priori estimates and sanity checks for the numerical results.

\section{Conclusion}
This paper addresses a structural source of model risk in life insurance within a joint actuarial-financial framework: life tables provide survival probabilities only at integer ages, leaving  the within-year timing of deaths unspecified, even though many modern life-contingent liabilities (variable annuities) depend on the full continuous-time lifetime distribution. Rather than postulating a fractional-age assumption (UDD/CFM/Balducci) or a parametric (deterministic or stochastic) mortality rate model, we develop a model-free mortality framework that characterizes the set of admissible mortality intensities compatible with the tabulated one-year survival probabilities and the given financial market, yielding sharp upper and lower bounds for hybrid actuarial-financial functionals over that set. We present two distinct yet complementary approaches. In the strict (pathwise) matching regime, admissible mortality trajectories must reproduce each observed one-year survival probability almost surely. Because strict matching is often too restrictive empirically and lead to deterministic optimal mortality rates, we then study a relaxed regime where life-table constraints are matched in expectation. To avoid the control problem to be ill-posed, we introduce a regularized admissible set around a baseline mortality intensity consistent with the table, and solve the resulting robust control problem via a dual formulation with Lagrange multipliers and dynamic programming. Numerical experiments confirm intuitive patterns: (i) expectation-based bounds are wider than almost-sure bounds; (ii) adding intermediate-time constraints tightens bounds substantially for GMIB/GMDB; and (iii) the bound asymmetry depends on whether the payoff is survival- or death-contingent, with the effect diminishing at very old ages as survival probabilities shrink. Overall, the proposed bounds provide a practical, model-free mortality risk-management layer on top of any chosen financial model.

\appendix
\section{Proofs}
\label{sec: appendix}
\renewcommand{\thesection}{A}
\subsection{Proof of Proposition \ref{propgmab}}
\begin{proof}
 Using \eqref{GMAB}, \eqref{const} and \eqref{termproba}, we have that for every $\mu \in \mathcal{W}$,
\begin{align*}
    V^\mu_t &= \mathbb{E}^{\mathbb{Q}}  \left[ \left(\hspace{-0.1mm}\prod_{j=0}^{\lfloor T \rfloor-\lceil t\rceil -1}\hspace{-1mm} \hat{p}_j \right) e^{-\int_t^T r_s ds}  \, H(A_T,G^A_T) \ \big| \  \mathcal{H}_t \right] 
    \\
    &= {}_{T-t}p_{x+t} \, \mathbb{E}^{\mathbb{Q}}  \bigg[ \,e^{-\int_t^T r_s ds}  \, H(A_T,G^A_T) \ \big| \  \mathcal{H}_t \, \bigg] \, .
\end{align*} 
Under Assumption \ref{Assumption: change of measure}, $\mathbb{E}^{\q}[X \, | \, \mathcal{H}_t ] = \mathbb{E}^{\mathbb{Q}}[X \, | \, \mathcal{F}_t]$ for any $\mathcal{F}_T$-measurable integrable $X$. Then, $$V^\mu_t = {}_{T-t}p_{x+t} \, C_t(T,G^A_T) \, .$$ As the survival probability ${}_{T-t}p_{x+t}$  is known and given by \eqref{termproba}, the objective functional $V^\mu_t$ does not depend on the underlying mortality control $\mu \in \mathcal{W}$. Therefore, any admissible $\mu$ is a maximizer and minimizer of $V^\mu_t$ and we find $\bar{V}_t = \bunderline{V}_t = V^\mu_t$.     
\end{proof} 
\subsection{Proof of Proposition \ref{propgmib}}
\begin{proof}
\smallskip
\noindent\textit{(i) Upper bound.}
Fix the interval $(\lceil t\rceil+j,\lceil t\rceil+j+1]$ for some $j = 0,\ldots, \lfloor T \rfloor - \lceil t \rceil -1$. The life-table constraint \eqref{const} forces the total survival factor over $(\lceil t\rceil+j,\lceil t\rceil+j+1]$ to be $\hat p_j$.
Since $\widetilde H\ge 0$, the integrand
\[
s\mapsto e^{-\int_t^s r_u\,du}\,\widetilde H(A_s,G_s^I)
\]
is nonnegative. Let $\mathcal{T}_{(a,b]}$ denote the set of $\mathbb{H}$-stopping times on $(a,b]$. Therefore,  for any $\tau_1,\tau_2\in\mathcal{T}_{(\lceil t\rceil +j , \lceil t\rceil + j+1]}$ with $\tau_1\le\tau_2$ a.s.,
\[
\int_a^{\tau_1} e^{-\int_t^s r_u\,du}\,\widetilde H(A_s,G_s^I)\,ds
\le
\int_a^{\tau_2} e^{-\int_t^s r_u\,du}\,\widetilde H(A_s,G_s^I)\,ds \, , 
\quad\text{a.s.}
\]
Taking conditional expectations given $\mathcal{H}_t$ leads to 
\begin{equation}\mathbb{E}^{\mathbb{Q}} \left[ \int_{\lceil t\rceil +j }^{\tau_1} e^{-\int_{t }^{s} r_u du } \, \widetilde{H}(A_s,G^I_s) \, ds \ \big| \ \mathcal{H}_t \right] \leq \mathbb{E}^{\mathbb{Q}} \left[ \int_{\lceil t\rceil +j }^{\tau_2} e^{-\int_{t }^{s} r_u du } \, \widetilde{H}(A_s,G^I_s) \, ds \ \big| \ \mathcal{H}_t \right] . \label{monotonicity}\end{equation} Hence the conditional expected accumulated discounted payoff \eqref{monotonicity} over $(\lceil t\rceil +j,\tau]$ is maximized by taking $\tau=\lceil t\rceil + j+1$.
Consequently, we have from \eqref{GMIB} that $V_t^\mu$ is maximized when all the mortality mass available on the interval $(\lceil t\rceil +j,\lceil t\rceil +j +1]$ is assigned at time $\lceil t\rceil +j+1$. This is achieved by setting $\bar\mu_{x,c}^*\equiv 0$ (since continuous process with values in a compact set) and choosing a single discrete intervention on each $(\lceil t\rceil +j,\lceil t\rceil +j +1]$ at $\bar\tau_j^*=\lceil t\rceil +j +1$. The life-table constraint \eqref{const} then directly leads to the optimal jump size $\bar\mu_{x,j}^*=1-\hat p_j$.
\\
We can hence rewrite the optimal survival probability  
$$\displaystyle {}_{s-t}\bar{p}^*_{x+t} =  \prod_{j  \, :  \,  \bar{\tau}^*_j \le s}\hat{p}_j =  \prod_{j   \, : \,  \lceil t \rceil + j  +1 \leq s }\hat{p}_j \, , $$ 
with {\small $ \prod_{\emptyset}:=1$}. As $\bar{\mu}^*$ (and hence ${}_{s-t}\bar p^{*}_{x+t}$) is  fixed and deterministic, we find for $\bar{V}_t$,  \vspace{-1mm}
\begin{align}
	\bar{V}_t
	&=  \mathbb{E}^{\mathbb{Q}} \left[ \int_t^T e^{-\int_t^s r_u du}  {}_{s-t}\bar{p}^*_{x+t} \, \widetilde{H}(A_s,G^I_s) \, ds \ \big| \ \mathcal{H}_t  \right] \nonumber
	\\
	&= \int_t^T {}_{s-t}\bar{p}^*_{x+t} \,  \mathbb{E}^{\mathbb{Q}} \left[ e^{-\int_t^s r_u du}   \, \widetilde{H}(A_s,G^I_s)  \ \big| \ \mathcal{H}_t  \right] ds  \nonumber
	\\
	&= \int_t^T {}_{s-t}\bar{p}^*_{x+t} \,  C_t(s,G^I_s) \, ds \, ,
\end{align}
where $C_t(s,G^I_s)$ is given by \eqref{call} using again Assumption \ref{Assumption: change of measure}.

\smallskip
\noindent\textit{(ii) Lower bound.}
To minimize $V_t^\mu$ under \eqref{const} on each $(\lceil t\rceil+j,\lceil t\rceil+j+1]$, the same monotonicity argument shows that the conditional expected accumulated discounted payoff over $(\lceil t\rceil+j,\tau]$ is minimized by taking $\tau$ as small as possible. Since $\mu_{x,c}$ is continuous and takes value in a compact set, the infimum corresponds to $\bunderline{\mu}^*_{x,c} \equiv 0$ and $\tau=\lceil t\rceil+j$. However, $\lceil t\rceil+j\notin(\lceil t\rceil+j,\lceil t\rceil+j+1]$, hence no minimizer exists in the admissible set $\mathcal{T}_{(\lceil t\rceil+j,\lceil t\rceil+j+1]}$.
Instead, for each $j = 0,\ldots, \lfloor T \rfloor - \lceil t \rceil -1$, let $\tau_j^{(n)}\in\mathcal{T}_{(\lceil t\rceil+j,\lceil t\rceil+j+1]}$ be a sequence such that $\tau_j^{(n)}\downarrow \lceil t\rceil+j$ a.s. Define admissible controls $\mu^{(n)}\in\mathcal{W}$ by setting $\mu_{x,c}^{(n)}\equiv 0$ and placing the discrete jump on $(\lceil t\rceil+j,\lceil t\rceil+j+1]$ at time $\tau_j^{(n)}$ with size $\mu^{(n)}_{x,j} = 1-\hat p_j$ for each $j$.
Then the corresponding survival probabilities ${}_{s-t}p^{\mu^{(n)}}_{x+t}$ decrease pointwise for $n\to \infty$ to the deterministic limit
\[
{}_{s-t}\underline p^{*}_{x+t}
= \prod_{j   : \,  \lceil t \rceil + j < s } \hspace{-2.5mm}(1-\mu_{x,j}) =\prod_{j:\,\lceil t\rceil+j < s}\hat p_j \, .
\]
Then, from equation \eqref{GMIB}, we write
$$ V^{\mu^{(n)}}_t = \mathbb{E}^{\mathbb{Q}} \left[ \int_t^T e^{-\int_t^s r_u \, du}  \hspace{-2mm}\prod_{j \, : \, t < \tau^{(n)}_j \leq s} \hspace{-2mm}(1-\mu_{x,j}) \, \widetilde{H}(A_s,G^I_s) \, ds  \ \big| \  \mathcal{H}_t  \right] . $$
By dominated convergence (using $\mu_{x,j}\in[0,1]$ and $\mathbb{E}^{\mathbb{Q}}[\int_0^T|\widetilde H(A_s,G_s^I)|ds]<\infty$), we have as $n\to \infty$,
\[
V_t^{\mu^{(n)}}\downarrow
\mathbb{E}^{\mathbb{Q}}\!\left[\int_t^T e^{-\int_t^s r_u\,du}\, {}_{s-t}\underline p^{*}_{x+t}\,
\widetilde H(A_s,G_s^I)\,ds \,\Big|\,\mathcal{H}_t\right]
=:\underline V_t.
\]
Finally, since ${}_{s-t}\underline p^{*}_{x+t}$ is deterministic, we obtain as in (i),
\[
\underline V_t=\int_t^T {}_{s-t}\underline p^{*}_{x+t}\, C_t(s,G_s^I)\,ds,
\]
which concludes the proof.

\end{proof}

\subsection{Proof of Proposition \ref{propgmdb}}
\label{Proof: GMDB}
\begin{proof}
\noindent\textit{(i) Upper bound.}
Fix the interval $(\lceil t\rceil+j,\lceil t\rceil+j+1]$ for some $j = 0,\ldots, \lfloor T \rfloor - \lceil t \rceil -1$. Let again $\mathcal{T}_{(a,b]}$ denote the set of $\mathbb{H}$-stopping times on $(a,b]$. Therefore,  for any $\tau \in\mathcal{T}_{(\lceil t\rceil +j , \lceil t\rceil + j+1]}$, we can rewrite the expected discounted payoff at time of death $\tau$ by
\begin{align}
	 &\mathbb{E}^{\mathbb{Q}} \left[ e^{-\int_{t}^{\tau} r_s ds } \, \widehat{H}(A_\tau,G^D_\tau)  \ \big| \ \mathcal{H}_t \right]  \label{deatg}
	\\
	&=   \mathbb{E}^{\mathbb{Q}} \left[ e^{-\int_{t}^{\tau} r_s ds } \left( G^D_\tau + (A_\tau - G^D_\tau)_+ \right)  \big| \ \mathcal{H}_t \right] \nonumber
	\\
	&=  \mathbb{E}^{\mathbb{Q}} \left[ e^{-\int_{t}^{\tau} r_s ds } \, G^D_\tau  \big| \ \mathcal{H}_t \right]  +   \mathbb{E}^{\mathbb{Q}} \left[ e^{-\int_{t}^{\tau} r_s ds } (A_\tau - G^D_\tau)_+ \big| \ \mathcal{H}_t \right] . \nonumber
\end{align}
Denoting $\widetilde{P}(t,s) := \mathbb{E}^{\mathbb{Q}} \left[ e^{-\int_{t}^{s} r_u du }  G^D_s  \big| \ \mathcal{H}_t \right] = P(t,s)G^D_s = A_0 P(t,s)e^{r_g s}$, we have under the assumption $r_g \geq f(t,s)$ for all $s\in[t,T]$, that $\widetilde{P}(t,s)$ is increasing in $s$. This is indeed direct from 
\begin{equation}
	\partial_s \widetilde{P}(t,s) = P(t,s) A_0 e^{r_g s} (-f(t,s) + r_g ) \geq 0 \, ,
\end{equation} since $\partial_s P(t,s) = -f(t,s) P(t,s)$. Moreover, under assumption \eqref{assup1}, it is shown in \cite{battauz2022american} that early exercise in an American call is never optimal and hence,
\begin{equation} \label{opt_stopp}
	\argmax_{\tau\in\mathcal{T}_{(\lceil t\rceil +j , \lceil t\rceil + j+1]}}   \mathbb{E}^{\mathbb{Q}} \left[ e^{-\int_{t}^{\tau} r_s ds } (A_\tau - G^D_\tau)_+ \big| \ \mathcal{H}_t \right] = \lceil t\rceil + j+1 \, .
\end{equation}
From the two statements above, we have shown that $s\mapsto C_t(s,G^D_s):=\mathbb{E}^{\mathbb{Q}}[e^{-\int_t^s r_u du}\,\widehat H(A_s,G^D_s)\mid\mathcal{H}_t]$ is nondecreasing on each year interval $(\lceil t\rceil+j,\lceil t\rceil+j+1]$.
Therefore, using the same monotonicity argument  \eqref{monotonicity} as in Proposition \ref{propgmib}, we conclude that $\bar{\tau}^*_j = \lceil t\rceil + j+1$ and $\bar{\mu}^*_{x,c}(t) \equiv 0$. The life-table constraint \eqref{const} gives us directly the associated intervention size $\bar{\mu}^*_{x,j} =1- \hat{p}_j$. Since $\bar{\mu}^*$ is deterministic and from Assumption \ref{Assumption: change of measure}, we obtain the following upper bound
\begin{align*}
	\bar{V}_t &=  \mathbb{E}^{\mathbb{Q}} \left[  \sum_{j=0}^{\lfloor T \rfloor - \lceil t \rceil -1} e^{-\int_t^{\tau^*_j} r_u \, du}  \left(\prod_{i < j} (1-\mu^*_{x,i}) \right) \mu^*_{x,j} \, \widehat{H}(A_{\tau^*_j}, G^D_{\tau^*_j}) \, \Big| \,  \mathcal{H}_t \right]
	\\
	&= \sum_{j=0}^{\lfloor T \rfloor - \lceil t \rceil -1} C_t(\lceil t\rceil + j+1, G^D_{\lceil t\rceil + j+1}) \,  \mu^*_{x,j}  \,  \prod_{i < j} \, (1-\mu^*_{x,i})  \, 
	\\
	&= \sum_{j=0}^{\lfloor T \rfloor - \lceil t \rceil -1} C_t(\lceil t\rceil + j+1_j, G^D_{\lceil t\rceil + j+1}) \,  (1-\hat{p}_j)  \,  \prod_{i < j} \, \hat{p}_i  \,  . \vspace{2mm}
\end{align*}
\textit{(ii) Lower bound.}  \, The lower bound $\bunderline{V}_t$ follows directly from Proposition \ref{propgmib} (ii) and from the fact that $s\mapsto C_t(s,G^D_s):=\mathbb{E}^{\mathbb{Q}}[e^{-\int_t^s r_u du}\,\widehat H(A_s,G^D_s)\mid\mathcal{H}_t]$ is nondecreasing on each year interval $(\lceil t\rceil+j,\lceil t\rceil+j+1]$ as shown above in \ref{Proof: GMDB} (i). 
\end{proof}

\subsection{Proof of Proposition \ref{prop: worst_case_price_regu}}\label{Proof: worst_case_price_regu}
\begin{proof}(Step 1). First, we prove the equivalence between the original problem and the dual problem. For $\mu\in \mathcal{A}_{bounded}^\delta$, we have that, for any $\boldsymbol{\lambda}\in \mathbb{R}^n$, 
    \begin{align*}
        &\mathbb{E}^{\mathbb{Q}} \Bigg[ \, e^{-\int_0^T (r_u+\mu_x(u))du}H(A_T,G_T^A)+\int_0^T e^{-\int_0^s (r_u+\mu_x(u)) du}\left(  \widetilde{H}(A_s,G_s^I) +\mu_x(s) \widehat{H}(A_s,G_s^D)     \right)ds \Bigg]\\
        =&\mathbb{E}^{\mathbb{Q}} \Bigg[ \, e^{-\int_0^T (r_u+\mu_x(u))du}H(A_T,G_T^A)+\int_0^T e^{-\int_0^s (r_u+\mu_x(u)) du}\left(  \widetilde{H}(A_s,G_s^I) +\mu_x(s) \widehat{H}(A_s,G_s^D)     \right)ds \Bigg] \\
        &-\sum_{j=1}^n\lambda_j \left[\E^{\mathbb{P}}\left(e^{-\int_0^{j}\mu_x(s)ds}\right)-{}_j\hat{p}_{x}\right]. 
    \end{align*}
    Hence, we deduce that 
   {\small \begin{align*}
        &\sup_{\mu_x \in \mathcal{A}^\delta_{bounded}} \mathbb{E}^{\mathbb{Q}} \Bigg[ \, e^{-\int_0^T (r_u+\mu_x(u))du}H(A_T,G_T^A) +\int_0^T e^{-\int_0^s (r_u+\mu_x(u)) du}\left(  \widetilde{H}(A_s,G_s^I) +\mu_x(s) \widehat{H}(A_s,G_s^D)     \right)ds \Bigg]\\
        \leq & \inf_{\boldsymbol{\lambda}\in \mathbb{R}^n} \hspace{-0.5mm}\sup_{\mu \in \tilde{\mathcal{A}}^\delta_{bounded}} \hspace{-1.8mm}\mathbb{E}^{\mathbb{Q}} \Bigg[  e^{-\int_0^T (r_u+\mu_x(u))du}H(A_T,G_T^A) +\hspace{-0.5mm}\int_0^T \hspace{-1.25mm}e^{-\int_0^s (r_u+\mu_x(u)) du}\hspace{-0.4mm}\left(  \widetilde{H}(A_s,G_s^I) +\mu_x(s) \widehat{H}(A_s,G_s^D) \right) \hspace{-0.4mm}ds \Bigg]\\
        &~~~~~~~~~~~~~~~~~~~~~~~~~~~-\sum_{j=1}^n\lambda_j \left[\E^{\mathbb{P}}\left(e^{-\int_0^{j}\mu_x(s)ds}\right)-{}_j\hat{p}_{x}\right]. 
    \end{align*}}For $\boldsymbol{\lambda}\in \mathbb{R}^n$, let us define the inner functional $V(\boldsymbol{\lambda})$ as 
  {\small  \begin{equation}
    \begin{aligned}   \hspace{-1mm}V(\boldsymbol{\lambda}) \hspace{-0.5mm}:= \hspace{-1.9mm}\sup_{\mu_x \in \tilde{\mathcal{A}}^\delta_{bounded}} \hspace{-2,2mm}&\mathbb{E}^{\mathbb{Q}} \Bigg[  e^{-\int_0^T (r_u+\mu_x(u))du}H(A_T,G_T^A) \hspace{-0.5mm}+\hspace{-0.5mm}\int_0^T \hspace{-1.5mm}e^{-\int_0^s (r_u+\mu_x(u)) du}\hspace{-0.5mm}\left(  \widetilde{H}(A_s,G_s^I) +\mu_x(s) \widehat{H}(A_s,G_s^D) \right)\hspace{-0.5mm}ds \Bigg] \\
        &-\sum_{j=1}^n\lambda_j\E^{\mathbb{P}}\left(e^{-\int_0^{j}\mu_x(s)ds}\right) \label{eq: inner_problem_proof}
        \end{aligned}
    \end{equation}}As the set of admissible controls $\tilde{\mathcal{A}}^\delta_{bounded}$ is convex and closed, we deduce that there exists an optimal control $\mu^*(.;\boldsymbol{\lambda})\in \tilde{\mathcal{A}}^\delta_{bounded}$ such that
    {\small \begin{align*}
        V(\boldsymbol{\lambda}):=&\mathbb{E}^{\mathbb{Q}} \Bigg[ \, e^{-\int_0^T (r_u+\mu_x^*(u;\boldsymbol{\lambda}))du}H(A_T,G_T^A) + \hspace{-0.5mm}\int_0^T \hspace{-1.5mm}e^{-\int_0^s (r_u+\mu_x^*(u;\boldsymbol{\lambda})) du}\hspace{-0.35mm}\left(  \widetilde{H}(A_s,G_s^I) +\mu_x^*(s;\boldsymbol{\lambda}) \widehat{H}(A_s,G_s^D) \right)\hspace{-0.25mm}ds \Bigg]\\
        &-\sum_{j=1}^n\lambda_j\E^{\mathbb{P}}\left(e^{-\int_0^{j}\mu_x^*(s;\boldsymbol{\lambda})ds}\right).
    \end{align*}}Moreover, the functional $\boldsymbol{\lambda} \mapsto\left(V(\boldsymbol{\lambda})+\sum_{j=1}^n \lambda_j {}_j\hat{p}_{x} \right)$ is convex, there exists $\boldsymbol{\lambda}^*\in \mathbb{R}^n$ such that
    \begin{align*}
        \inf_{\boldsymbol{\lambda}>0} V(\boldsymbol{\lambda})+\sum_{j=1}^n \lambda_j {}_j\hat{p}_{x}&=\min_{\boldsymbol{\lambda}>0} V(\boldsymbol{\lambda})+\sum_{j=1}^n \lambda_j {}_j\hat{p}_{x}= V(\boldsymbol{\lambda}^*)+\sum_{j=1}^n \lambda_j^*{}_j\hat{p}_{x}, 
    \end{align*}
    and using the first order condition (FOC), we also have that $\boldsymbol{\lambda}^*$ is such that\footnote{Since for all $\boldsymbol{\lambda}\in \mathbb{R}^n$, $\mu^*(\boldsymbol{\lambda})\in \tilde{\mathcal{A}}^\delta_{bounded}$, we know that, for all $t\leq T$, $(\mu_{x,a}(t)-\delta)_+\leq \mu_x^*(t;\boldsymbol{\lambda}) \leq \mu_{x,a}(t)+\delta$, then, for $j=1,...,n$, $\inf_{\boldsymbol{\lambda}\in \mathbb{R}^n}\E^\p\left(e^{-\int_0^{j} \mu_x^*(t;\boldsymbol{\lambda})dt}\right)\leq {}_j\hat{p}_{x} \leq \sup_{\boldsymbol{\lambda}\in \mathbb{R}^n}\E^\p\left(e^{-\int_0^{j} \mu_x^*(t;\boldsymbol{\lambda})dt}\right)$ and by continuity of the functional, we deduce the the existence of $\boldsymbol{\lambda}^*\in \mathbb{R}^n$ that satisfies the FOC.}
    \begin{equation}\label{eq: FOC_proof}
        \E^{\mathbb{P}}\left(e^{-\int_0^{j}\mu_x^*(s;\boldsymbol{\lambda}^*)ds}\right)={}_j\hat{p}_{x},\quad j=1,...,n. 
    \end{equation}
    Since $\mu_x^*(.;\boldsymbol{\lambda}^*)$ satisfies \eqref{eq: FOC_proof}, we have that $\mu_x^*(.;\boldsymbol{\lambda}^*)\in \mathcal{A}_{bounded}^\delta$ and thus the solution of the original problem is such that
   {\small \begin{align*}
        &\hspace{-2mm}\sup_{\mu_x \in \mathcal{A}^\delta_{bounded}} \hspace{-1.5mm}\mathbb{E}^{\mathbb{Q}} \Bigg[ \, e^{-\int_0^T (r_u+\mu_x(u)du}H(A_T,G_T^A) \hspace{-0.25mm}+\hspace{-0.5mm}\int_0^T \hspace{-1.5mm}e^{-\int_0^s (r_u+\mu_x(u)) du}\hspace{-0.35mm}\left(  \widetilde{H}(A_s,G_s^I) +\mu_x(s) \widehat{H}(A_s,G_s^D)     \right)\hspace{-0.35mm}ds \Bigg]\\
        &\hspace{1.5mm}= \min_{\boldsymbol{\lambda}>0} V(\boldsymbol{\lambda})+\sum_{j=1}^n \lambda_j {}_j\hat{p}_{x}\\
        &\hspace{1.5mm}=\mathbb{E}^{\mathbb{Q}} \Bigg[  e^{-\int_0^T (r_u+\mu_x^*(u;\boldsymbol{\lambda}^*))du}H(A_T,G_T^A) \hspace{-0.35mm}+\hspace{-0.5mm}\int_0^T \hspace{-1.5mm}e^{-\int_0^s (r_u+\mu_x^*(u;\boldsymbol{\lambda}^*)) du}\hspace{-0.35mm}\left(  \widetilde{H}(A_s,G_s^I) +\mu_x^*(s;\boldsymbol{\lambda}^*) \widehat{H}(A_s,G_s^D) \right)\hspace{-0.35mm}ds \Bigg]
    \end{align*}}(Step 2). To conclude to proof, it remains to prove that, for any $\boldsymbol{\lambda}\in \mathbb{R}^n$, $\bar{v}(0,A_0,r_0,Z_0;\boldsymbol{\lambda})$ is equal to $V(\boldsymbol{\lambda})$ defined by \eqref{eq: inner_problem_proof}. Since we assume that there exists $\bar{v}(t,a,r,z; \boldsymbol{\lambda})\in C^{1,2,2,2}$ solution to \eqref{eq:HJB_sup_bis_n} that satisfies boundary conditions given by \eqref{eq:boundaries_HJB}, then using classical results from dynamic programming (see for instance \cite{touzi2012optimal}) or by adapting the arguments used in the proof of \cite[Theorem 1]{li2011uncertain} in the context of an uncertainty mortality control problem, it follows that $\bar{v}(t,a,r,z; \boldsymbol{\lambda})$ solving the HJB equation solves the stochastic control problem \eqref{eq:sup_stoch_problem_path_regu_n_constraint} such that
    \begin{align*}
        V(\boldsymbol{\lambda})=&\bar{v}(0,A_0,r_0,Z_0;\boldsymbol{\lambda}),
    \end{align*}
    and the proof is complete. 
\end{proof}

\subsection{Proof of Proposition \ref{prop: convergence_delta}}\label{proof: conv_delta}
\begin{proof}To prove the statement, we must show that 
    \begin{equation*}
        \tilde{\mathcal{A}}=\bigcup_{\delta>0} \mathcal{A}^\delta_{bounded}. 
    \end{equation*}
    Obviously, we have that 
    \begin{equation*}
        \bigcup_{\delta>0} \mathcal{A}^\delta_{bounded}\subseteq \tilde{\mathcal{A}}. 
    \end{equation*}
    Let us prove that
    \begin{equation*}
        \tilde{\mathcal{A}}\subseteq \bigcup_{\delta>0} \mathcal{A}^\delta_{bounded}. 
    \end{equation*}
    Formally, we have to prove that for all $\mu_x \in \tilde{\mathcal{A}}$, there exists $\delta>0$ such that $\mu_x \in \mathcal{A}^\delta_{bounded}$. For a given $\mu_x \in \tilde{\mathcal{A}}$, define
\[
Y(\omega) := \sup_{t \in [0,T]} \big|\mu_x(t;\omega) - \mu_{x,a}(t)\big|.
\]
Since we assume that $\mu_x\in \tilde{\mathcal{A}}$, then 
\begin{equation*}
    Y(\omega)<+\infty. 
\end{equation*}
Consequently, for $\mu_x \in \tilde{\mathcal{A}}$, we can choose
\[
\delta := \operatorname{ess\,sup}_\omega Y(\omega) < \infty,
\]
which implies $\mu_x\in \mathcal{A}^\delta_{\mathrm{bounded}}$. Since $\mu_x$ was arbitrary, we conclude that
\[
\bigcup_{\delta>0} \mathcal{A}^\delta_{\mathrm{bounded}} = \tilde{\mathcal{A}}.
\]

\end{proof}

{\small \textbf{Disclosure statement} \
The authors declare that they have no conflict of interest.
\\
\textbf{Funding details} \ This work was supported by the Swiss National Science Foundation (SNSF) under Grant no. 10003723. Edouard Motte is a FRIA grantee of the Fonds de la Recherche Scientifique - FNRS, Belgium.}
{\footnotesize
		\bibliographystyle{apa}  
		\bibliography{bibbounds}   
		
	}

{\small
\textbf{Postal address} \ \textit{Dr.\ Jean-Loup Dupret}: 
ETH Zurich, Department of Mathematics, RiskLab,
HG F 42.1, Rämistrasse 101, 8092 Zurich, Switzerland.
}

\end{document}